\documentclass[lettersize,journal]{IEEEtran}
\IEEEoverridecommandlockouts
\usepackage{cite}
\usepackage{amsmath,amssymb,amsfonts,amsthm}
\usepackage{algorithmic}
\usepackage{graphicx}
\usepackage{textcomp}
\usepackage{algorithm}
\usepackage{multirow}
\usepackage{caption}
\usepackage{xcolor,soul}
\usepackage{bm}
\usepackage{dsfont}
\usepackage{siunitx}
\bibliographystyle{IEEEtran}
\usepackage{enumitem}
\newtheorem{remark}{Remark}
\newtheorem{definition}{Definition}
\def\BibTeX{{\rm B\kern-.05em{\sc i\kern-.025em b}\kern-.08em
    T\kern-.1667em\lower.7ex\hbox{E}\kern-.125emX}}
\newtheorem{theorem}{Theorem}

\newtheorem{assumption}{Assumption}
\newtheorem{lemma}{Lemma}
\newtheorem{proposition}{Proposition}

\usepackage[hidelinks,colorlinks,linkcolor=blue,filecolor=blue,citecolor=blue,urlcolor=blue]{hyperref}

\begin{document}

\title{Computation and Critical Transitions  of Rate-Distortion-Perception Functions  With Wasserstein Barycenter}

\author{Chunhui~Chen,
        Xueyan~Niu,
        Wenhao~Ye,
        Hao~Wu,
        and~Bo~Bai,
\thanks{Chunhui Chen, Wenhao Ye, and Hao Wu are with the Department of Mathematical Sciences, Tsinghua University, Beijing, China (e-mail: \{cch21, yewh20\}@mails.tsinghua.edu.cn; hwu@tsinghua.edu.cn).}
\thanks{Xueyan Niu and Bo Bai are with Theory Lab, 2012 Labs, Huawei Technologies Co., Ltd. (e-mail: \{niuxueyan3, baibo8\}@huawei.com).}
\thanks{\textit{Corresponding author: Xueyan Niu.}}
\thanks{This paper was presented in part at the 2023 IEEE International Symposium on Information Theory.}
}

\maketitle

\begin{abstract}
The information rate-distortion-perception (RDP) function characterizes the three-way trade-off between description rate, average distortion, and perceptual quality measured by discrepancy between probability distributions and has been applied to emerging areas in communications empowered by generative modeling. 
We study several variants of the RDP functions through the lens of optimal transport to characterize their critical transitions.
By transforming the information RDP function into a Wasserstein Barycenter problem, we identify the critical transitions when one of the constraints becomes inactive.
Further, the non-strictly convexity brought by the perceptual constraint can be regularized by an entropy regularization term.
We prove that the entropy regularized model converges to the original problem and propose an alternating iteration method based on the Sinkhorn algorithm to numerically solve the regularized optimization problem. 
In many practical scenarios, the computation of the Distortion-Rate-Perception (DRP) function offers a solution to minimize distortion and perceptual discrepancy under rate constraints. However, the interchange of the rate objective and the distortion constraint significantly amplifies the complexity. The proposed method effectively addresses this complexity, providing an efficient solution for DRP functions. Using our numerical method, we propose a reverse data hiding scheme that imperceptibly embeds a secret message into an image, ensuring perceptual fidelity and achieving a significant improvement in the perceptual quality of the stego image compared to traditional methods under the same embedding rate. Our theoretical results and numerical method lay an attractive foundation for steganographic communications with perceptual quality constraints.

\end{abstract}

\begin{IEEEkeywords}
Rate-distortion-perception trade-off, optimal transport, Sinkhorn algorithm, steganographic communication, reverse data hiding.
\end{IEEEkeywords}

\IEEEpeerreviewmaketitle

\section{Introduction}

Lossy compression plays a vital role in the communication and storage of multimedia data. 
As the cornerstone of lossy compression,
the classical Rate-Distortion (RD) theory \cite{shannon1959coding} studies the trade-off between the bit rate used for representing data and the distortion caused by compression \cite{DBLP:books/wi/01/CT2001}. 
The reconstruction quality is traditionally measured by a per-letter distortion metric, such as the mean-squared error (MSE). However, it has been shown that in practice, minimizing the distortion does not result in perceptually satisfying output for human subjects \cite{DBLP:journals/corr/SanturkarBS17}. Since high perceptual quality may come at the expense of distortion \cite{DBLP:conf/iccv/AgustssonTMTG19,DBLP:conf/cvpr/BlauM18,freirich2023the}, researchers are motivated to extend the RD theory by bringing perception into account \cite{blau2019rethinking}, \cite{zhang2021universal}.
Blau and Michaeli proposed the information Rate-Distortion-Perception (RDP) functions in \cite{blau2019rethinking}, which translate the established machine learning practice to an information-theoretic problem.

The introduction of the perceptual quality constraint brings new challenges to the classical rate-distortion problem. Although several coding theorems under a variety of perceptual measures and different system settings \cite{matsumoto2018introducing, theis2021coding, wagner2022rate, chen2022rate, niu2023conditional, hamdi2023rate, salehkalaibar2024rate} have been derived,
analytically characterizing the interaction between the distortion and the perceptual constraints and closed form expressions to the RDP problem are often intractable. As witnessed in both theoretical studies \cite{zhang2021universal,wagner2022rate,niu2023conditional,hamdi2024the,salehkalaibar2024rate} and empirical results \cite{blau2019rethinking,zhang2021universal}, the rate regions for the RDP functions are piecewise smooth, which signifies critical transitions across domains where the constraints degenerate.  It is therefore compelling to characterize these behaviors to affirm the observed transitions and inform future perception-oriented compression and communications.
   Traditionally, the Blahut–Arimoto (BA) algorithm \cite{DBLP:journals/tit/Arimoto72,DBLP:journals/tit/Blahut72} has been successful in the computation of capacities and RD functions. Recently, a generalized BA type algorithm \cite{10206646} has been proposed to approximately solve the RDP function with a perception constraint that belongs to the family of $f$-divergences. However, to the best of our knowledge, we have not seen any generalization of the BA algorithm to compute RDP functions with another crucial family of perception constraints -- the Wasserstein metric. 
The first difficulty may be that RDP functions have more independent variables than RD functions, while in each alternating iteration step in the BA algorithm, all variables, except the updating one, need to be fixed. 
Secondly, Wasserstein metric does not possess an explicit derivative in terms of corresponding variables. Therefore, it cannot be directly expressed as an iterative expression like those in the family of $f$-divergences. Several works focus on computing RDP functions using data-driven methods, such as generative adversarial networks, which minimize a weighted combination of the distortion and the perception \cite{blau2019rethinking,zhang2021universal}.
However, these learning-based methods are often computationally demanding and lack interpretability.

We approach the RDP problem from the perspective of Optimal Transport (OT). We introduce the Wasserstein Barycenter model for Rate-Distortion-Perception functions, termed WBM-RDP, along with corresponding algorithms. Utilizing this new optimization framework, we uncover a close connection between the RDP problem and the Wasserstein Barycenter problem. Referring to the approach in a recent paper \cite{wu2022communication} of RD, we reformulate RDP functions in an OT form with an additional constraint on perceptual fidelity.
However, compared to RD functions, the introduction of an additional perceptual constraint destroys the origin simplex structure in the RDP functions. To handle this issue, we prove that the additional constraint can be equivalently converted to a set of linear constraints by introducing an auxiliary variable. 
Consequently, we observe that the new model appears to be in the form of the celebrated Wasserstein Barycenter problem \cite{pass2015multi,DBLP:conf/icml/CuturiD14,DBLP:journals/siamma/AguehC11}, as it can be viewed as a minimizer over two couplings (i.e., the transition mappings and the newly introduced variable) between Barycenter (i.e., the reconstruction distribution) and the source distribution. Then, the objective of the optimization is to compute a weighted distance according to the Wasserstein metric. Specifically, we determine the boundaries at which these perceptual constraints become inactive, causing the RDP function to degenerate into the RD function. This allows us to characterize the interplay between various distortion and perception constraints that are inherent to the information-theoretic RDP functions (see Fig.~\ref{curve}).
In addition to the rate function, the equivalent OT formulation also enables us to study the corresponding distortion and perception functions in relation to the rate, akin to the classic distortion-rate function.

With the reformulated Wasserstein Barycenter problem, we design an algorithm to tackle the WBM-RDP. 
One challenge is that the WBM-RDP is not strictly convex. We construct an entropy-regularized formulation of the WBM-RDP and show that the new form admits a unique optimal solution, and that it converges to the origin WBM-RDP.
After obtaining the Lagrangian of the entropy regularized WBM-RDP, we observe that the degrees of freedom therein can be divided into three groups to be optimized alternatively with closed-form solutions. 
As such, we propose an improved Alternative Sinkhorn (AS) algorithm, which effectively combines the advantages of AS algorithm for RD functions \cite{wu2022communication} and the entropy regularized algorithm for Wasserstein Barycenter model \cite{DBLP:conf/icml/CuturiD14}. Numerical experiments demonstrate that the proposed algorithm reaches high precision and efficiency under various circumstances. In many scenarios, it is often the case that the rate is limited, and the goal is to minimize both distortion and perceptual distance under this rate constraint \cite{zhang2021universal}. To address this, computing the Distortion-Rate-Perception (DRP) function becomes a more direct approach to fulfilling this requirement. However, the inclusion of the perceptual term poses challenges for traditional BA type algorithms, similar to those encountered in solving RDP problems. Our proposed method offers a unified framework that enables us to develop an algorithm for DRP functions that is nearly identical to the one used for RDP functions, requiring only minor adjustments.

We apply our numerical methods to an image steganography problem. The growing popularity of generative modeling techniques and digital arts with Non-Fungible Token (NFT), has led to a significant surge in the demand for steganographic communication \cite{volkhonskiy2020steganographic}. This trend is driven by the need to ensure that machine-generated content is distinguishable from other data as well as employing digital watermarking techniques for copyright protection and to safeguard data from counterfeiting. As a prevalent steganography technique, Reverse Data Hiding (RDH) involves embedding secret information into a cover image in a manner that remains imperceptible to the human eye, while enabling lossless recovery of the hidden message and the cover image. With the integration of perceptual measure constraints into the rate-distortion function for RDH problem, our improved AS algorithm facilitates the development of a novel RDH scheme that guarantees perceptual fidelity in addition to the conventional distortion measures. Results show a clear improvement in the perceptual quality, e.g. PSNR and Learned Perceptual Image Patch Similarity (LPIPS) \cite{8578166}, of the stego image over the traditional method under the same embedding rate, which provides assurance of the significance of incorporating perceptual measures in future steganography techniques. Our theoretical findings and numerical techniques offer a compelling basis for the development of steganographic communications that prioritize perceptual quality constraints.

The rest of this paper is organized as follows. Section II introduces the preliminary of RDP problem. In Section III, we establish the RDP function and corresponding WBM-RDP. Section IV investigates the interplay between distortion and perception measures of the RDP function. Section V presents the entropy regularized WBM-RDP and the improved AS algorithm. In Section VI, we provide numerical results which show the advantages of the proposed algorithm. In Section VII, we extend our method to a practical application on steganography. Finally, we conclude the paper in Section VIII.

\section{Preliminary}
\subsection{Notations}
We use capital letters such as $X$ to denote random variables and lower-case letters such as $\bm x$ to denote corresponding instances. The letter $\bm p$ is reserved for distributions. 
For finite alphabet $\mathcal{X},$ let $(\mathcal{X}, \mathcal{F}, \mathcal{P})$ be the probability space and $\mathcal{P}(\mathcal{X})$ denote the probability simplex.
Consider a stationary and memoryless source $\{X_i\}_{i=1}^\infty$ drawn according to distribution $p_{X}(x)\in \mathcal{P}(\mathcal{X})$  and its corresponding reconstruction $\{\hat{X}_i\}_{i=1}^\infty$ that follows $p_{\hat{X}}(\hat{x})\in \mathcal{P}(\hat{\mathcal{X}})$.  Since the alphabets  $\mathcal{X}$ and $\hat{\mathcal{X}}$ are finite, we can denote 
\[p_i := p_X(x_i),\quad r_j := p_{\hat{X}}(\hat{x}_j).\]
The $n$-sequence $(X_1,X_2,\ldots,X_n)$ is denoted by $X^n,$ and $(X_t,X_{t+1},\ldots,X_n)$ is denoted by $X_{t}^n.$ So $p_{X^n}=\prod_{i=1}^n p_{X}$, and here we also assume the reconstruction $\hat{X}$ is also independently identically distributed $p_{\hat{X}^n}=\prod_{i=1}^n p_{\hat{X}}.$  We denote the set of non-negative real number by $\mathbb{R}^{+}$ and the set of positive real numbers by $\mathbb{R}^{++}.$ For simplicity, $\infty$ will hereafter refer to $+\infty$. $\mathds{E}[X]$ denotes the expected value of $X.$
The entropy of a discrete random variable $X$ is defined as $H(\bm X)=-\sum_i p_i\log p_i.$ 
Let $\mathrm{KL}(\cdot, \cdot)$ denote the Kullback–Leibler divergence, so 
$
\mathrm{KL}(p_X, p_{\hat{X}}) = \sum_{i} p_i\log\frac{p_i}{r_i}.
$
The mutual information between two discrete random variables $X$ and $Y$ with joint density $p_{X,Y}$ is 
$I(X,Y)=\mathrm{KL}(p_{X,Y} \Vert p_X\cdot p_Y)=\sum_{\mathcal{X}\times \mathcal{Y}} p_{X,Y}(x,y)\log\frac{p_{X,Y}(x,y)}{p_X(x) p_Y(y)}.$

\subsection{Measures of Distortion and Perceptual Qualities}
\subsubsection{Distortion Measures}
Consider a (measurable) distortion function $\Delta:\mathcal{X}\times \hat{\mathcal{X}}\mapsto [0, D_{\mathrm{max}}].$ We assume that the distortion is bounded by some $D_{\mathrm{max}}>0$ and the following holds.
\begin{assumption}[non-negativity for distortion measures \cite{blau2019rethinking}]\label{assumption-distortion}
    For arbitrary $x\in \mathcal{X}$ and $y\in \hat{\mathcal{X}},$ the single-letter distortion $\Delta(\cdot, \cdot):\mathcal{X}\times \hat{\mathcal{X}}\mapsto [0, D_{\mathrm{max}}]$ satisfies
    \begin{equation*}
    \Delta(x, y) \geq 0 \ \text{ and }\ \Delta(x, y) = 0  \quad\text{ if and only if }\quad x = y.
    \end{equation*}
\end{assumption}
In applications that involve image and video, some commonly used distortion measures include the peak signal-to-noise ratio (PSNR) and the structural similarity index measure (SSIM). In the rate-distortion theory, the single-letter characterization of the relationship between bit rate and distortion is given by the following optimization problem
\begin{align}\label{eq8_0_0}
    R(D)= \min _{p_{\hat{X} \mid X}:\ \mathds{E}[\Delta(X, \hat{X})] \leq D} \quad&I(X, \hat{X}). 
\end{align}

\subsubsection{Perception Measures}
The trade-off between distortion and perceptual quality has been observed empirically in tasks such as super-resolution in computer vision \cite{blau2018perception}, where the evaluation employs both the reconstruction accuracy and the no-reference image quality measures that align better with human-opinion. 
The measurements of perceptual quality often derives from the divergence between the statistics of images, as is suggested in \cite{blau2018perception}, the perceptual measures are defined by comparing the source and reconstructed distributions. Let $p_X$ and $p_{\hat{X}}$ be the input and output distributions defined on the $\sigma$-algebra $(\mathcal{X},\mathcal{F}),$ the perceptual fidelity is measured by a divergence $d(\cdot, \cdot):\mathcal{P}(\mathcal{X})\times \mathcal{P}(\hat{\mathcal{X}})\mapsto [0, P_{\mathrm{max}}],$ such that the following assumptions hold.
\begin{assumption}[non-negativity for perception measures \cite{blau2019rethinking}]\label{assumption-perception}
    For arbitrary distributions $\bm p$ and $\bm r$ defined on the $\sigma$-algebra $(\mathcal{X},\mathcal{F}),$ the perceptual measure satisfies
    \begin{equation*}
    d(\bm{p}, \bm{r}) \geq 0 \ \text{ and }\  d(\bm{p}, \bm{r}) = 0  \ \text{ if and only if }\  \bm{p} \overset{d}{=} \bm{r}.
    \end{equation*}
\end{assumption}
\begin{assumption}[convexity in the second
argument of perception measures\cite{blau2019rethinking}]\label{assumption-convex}
    The perceptual measure is convex in its second argument, i.e., given $\bm{p},\bm{r}_1,\bm{r}_2$ on $(\mathcal{X},\mathcal{F})$ and $\lambda \in [0,1]$, we have
    \begin{equation*}
    d(\bm{p}, \lambda\bm{r}_1+(1-\lambda)\bm{r}_2) \leq \lambda d(\bm{p}, \bm{r}_1) +(1-\lambda)d(\bm{p}, \bm{r}_2) .
    \end{equation*}
\end{assumption}

For now, we focus on the commonly used perceptual measure -- the Wasserstein metric:
\begin{equation} \label{metric_0}
\mathcal{W} (\bm{p},\bm{r}) = \inf _{\bm{\Pi} \in \mathcal{T}\left(\bm{p},\bm{r}\right)} \mathbb{E}_{\bm{\Pi}}[C(X, \hat{X})],
\end{equation} 
where the cost function $C: \mathcal{X} \times \hat{\mathcal{X}} \rightarrow \mathbb{R}_{+}$ is a non-negative function such that $C(x,y)=0\text{ if and only if } x = y$, and the set $\mathcal{T}\left(\bm{p},\bm{r}\right)$ represents the collection of joint probability measures $\bm{\Pi}$ whose marginal measures are $\bm{p}$ and $\bm{r}$. The discrete form can be written as 
\begin{equation} \label{metric}
\begin{aligned}
\mathcal{W} (\bm{p},\bm{r}) =  \min_{\bm{\Pi}}\quad&\sum_{i=1}^M \sum_{j=1}^N \Pi_{i j} c_{i j}  \\
      \text { s.t. }\quad& \sum_{i=1}^M \Pi_{i j}= r_j ,\  \sum_{j=1}^N \Pi_{i j}= p_i,\  \forall i, j,
\end{aligned}
\end{equation}%
 where $(c_{ij})$ denotes the cost matrix between $x_i$ and $\hat{x}_j$. In the case of the Wasserstein-2 metric, $c_{ij} = \|x_i-\hat{x}_j\Vert_2^2$. Other perceptual constraints belong to the class of $f$-divergences \cite{10206646}, including the total variation (TV) distance $\delta(\bm{p}, \bm{r}) = \frac{1}{2}\|\bm{p}-\bm{r} \Vert_1$ and the Kulback-Leibler (KL) divergence $\text{KL}(\bm{p}\| \bm{r}) = \sum_{i=1}^M p_i \left[\log p_{i}-\log r_i\right]$. Generalizations to the TV and KL metrics are rather straightforward and will be discussed in Section V.

The perceptual constraint is intimately related to the theory of coordination \cite{cuff2010coordination}, where two notions of coordination are distinguished, the empirical coordination and the strong coordination. In \cite{chen2022rate} and \cite{niu2023conditional}, these notions are incorporated into the RDP trade-off
, considering two types of corresponding perceptual constraints. The empirical perception compares the empirical distributions of the input and output, while the strong perceptual constraint compares the joint distributions over the block of symbols. In Section~\ref{sec:interplay}, we will consider the information RDP functions which characterize the rate region for these perceptual constraints. 

\subsection{Related Work}

The operational RDP trade-off was first formalized in the work of Blau and Michaeli \cite{blau2019rethinking}. Their approach translated established practices in machine learning into a RDP function, and they showed that the perception-distortion trade-off holds true for any distortion measure. Building on this, Theis and Wagner \cite{theis2021coding} provided a coding theorem for stochastic variable-length codes, addressing both one-shot and asymptotic scenarios. Their analysis considered the encoder and decoder having access to infinite common randomness. Chen et al. \cite{chen2022rate} extended the analysis to the asymptotic regime, providing coding theorems that elucidate the operational meaning of RDP under conditions where the encoder and decoder share common randomness, lack it, or have private randomness. Wagner \cite{wagner2022rate} further advanced the study of RDP trade-offs by formulating a coding theorem that addressed both perfect and near-perfect realism, with "perfect" indicating that the reconstruction distribution matches the source distribution, in the presence of finite common randomness between the encoder and decoder.

The difficulty in deriving analytical expressions of the RDP functions has stimulated the study of computational methods -- closed form solutions to the RDP problem are often intractable, except for some special cases such as the Gaussian source with squared error distortion and Wasserstein-2 metric perception \cite{zhang2021universal}. Recently, a generalized BA type of algorithm \cite{10206646} has been proposed to approximately solve the RDP function with a perception constraint that belongs to the family of $f$-divergences. Several works focus on computing RDP functions using data-driven methods, such as generative adversarial networks, which minimize a weighted combination of the distortion and the perception \cite{blau2019rethinking,zhang2021universal}.

\subsection{Optimal Transport and Wasserstein Barycenter}

The Wasserstein metric as defined in \eqref{metric_0} compares two distributions by determining the optimal transportation plan that minimizes the overall cost associated with moving resources from one distribution to another. With a rich history, it finds applications across various fields, including machine learning \cite{DBLP:journals/siamma/AguehC11,arjovsky2017wasserstein}, incompressible fluid dynamics \cite{baradat2020small}, density function theory \cite{buttazzo2012optimal}, 
and tomographic reconstruction \cite{abraham2017tomographic}. 
A plethora of computational techniques has come to the forefront, encompassing a spectrum of methodologies, including linear programming (LP) \cite{pele2009fast}, combinatorial approaches \cite{santambrogio2015optimal}, proximal splitting methods \cite{combettes2011proximal}, and solutions for Monge-Amphere equations \cite{benamou2014numerical,froese2011convergent}. Moreover, the landscape of optimal transport for high-dimensional distributions has experienced an influx of diverse approximation strategies in recent years. In this landscape, the Sinkhorn algorithm \cite{nutz2022entropic} has garnered significant attention. This algorithm, employed in solving the approximate entropy-regularized problem, plays a pivotal role in computing the Wasserstein metric. 

The Wasserstein Barycenter is the Fréchet mean in the Wasserstein space of probability distributions, serving as an average distribution that minimizes the sum of Wasserstein distances to the given input distributions \cite{DBLP:journals/siamma/AguehC11}. Computing the Wasserstein Barycenter involves solving an optimization problem that minimizes a linear combination of Wasserstein distances, which can be formulated as an LP problem \cite{peyre2019computational}. However, directly applying generic LP solvers becomes prohibitive as the problem size increases. One can resort to first-order methods such as subgradient descent on the dual problem \cite{carlier2015numerical}. Alternatively, one can introduce entropic regularization to smooth the objective and enable gradient-based optimization. As noted by \cite{benamou2015iterative}, the approximative Barycenter problem can be reformulated as an iterative projection that resembles a weighted KL projection. By extending the Sinkhorn method to this reformulated problem, the optimal couplings can be efficiently computed in scaling form \cite{DBLP:conf/icml/CuturiD14}.

Recently, there is a surge in applications of OT methods in  information theory and communication. For example, \cite{bai2023information} introduces and studies an information constrained variation of OT problem which provides an application to  network information theory; \cite{8053918} provides an effective OT-based scheme of flight-time constrained unmanned aerial vehicles as flying base stations that provide wireless service to ground users; Ye et al. \cite{ye2022optimal} explores the intricate relationship between the computation of LM rate and the OT problem. Leveraging this insight, we extend the Sinkhorn algorithm, enhancing computational efficiency significantly when compared to conventional methods. This concept is further expanded to compute the RD function \cite{wu2022communication} and address the information bottleneck problem \cite{chen2023information}, both benefiting from Sinkhorn's computational efficiency. These studies underscore remarkable parallels between information-theoretic problems and OT. For example, the source and target densities in OT are akin to the ones in communication. Additionally, relative entropy, a key concept in information theory, bears similarities to the entropy regularization in OT. These analogies empower us to extend the Sinkhorn method effectively, achieving enhanced computational efficiency in diverse scenarios.

\section{RDP Functions and WBM-RDP}

The information RDP function is defined in \cite{blau2019rethinking} as below.
\begin{definition}[information RDP function \cite{blau2019rethinking}]\label{definition1}
Let $D \in \mathbb{R}^+$ be a given distortion level, $P \in \mathbb{R}^+$ be a given perceptual fidelity level, and $p_X \in \mathcal{P}(\mathcal{X})$ be the source distribution. The information RDP function is defined as follows:
\begin{subequations}\label{eq0}
\begin{align}
R(D, P)= \min _{p_{\hat{X} \mid X}} \quad& I(X, \hat{X}) \label{eq0_a}   \\
 \text { \emph{s.t.}}\quad &\mathds{E}[\Delta(X, \hat{X})] \leq D,\label{eq0_b}   \\  
 &  d\left(p_X, p_{\hat{X}}\right) \leq P, \label{eq0_c}
\end{align}
\end{subequations}%
where the minimization is taken over all conditional distributions. 
\end{definition}

Note that the information RDP function \eqref{eq0} degenerates to RD functions when the constraint \eqref{eq0_c} is removed.
We denote $d_{ij} = \Delta(x_i,\hat{x}_j)$
and $w_{ij}=W(\hat{x}_j \mid x_i)$ for all $1\le i\le M, 1\le j\le N$. Here $W: \mathcal{X} \rightarrow \hat{\mathcal{X}}$ is the transition mapping.
Thus the discrete form of problem (\ref{eq0}) can be written as
\begin{subequations} \label{eq0_0}
\begin{align}
    \min _{\bm{w},\bm{r}} \quad  &\sum_{i=1}^M \sum_{j=1}^N\left(w_{i j} p_i\right)\left[\log w_{i j}-\log r_j\right]   \label{eq0_0_a}\\
    \text { s.t. }\quad  &\sum_{j=1}^N w_{i j}=1,\   \sum_{i=1}^M w_{i j} p_i=r_j,  \  \forall i, j,  \label{eq0_0_b}\\
    &\sum_{i=1}^M \sum_{j=1}^N w_{i j} p_i d_{i j} \leq D,\   \sum_{j=1}^N r_j=1 ,  \label{eq0_0_c}\\
    &d(\bm{p},\bm{r}) \leq P. \label{eq0_0_d}
\end{align}
\end{subequations}

We notice that when the constraint \eqref{eq0_0_d} is in the Wasserstein metric, and under the Kantorovich form we can transform it into a linear optimization problem in a high-dimensional space. Therefore, we start by converting the perception constraint \eqref{eq0_0_d} into a linear constraint in a higher dimensional space. Consequently, all the constraints, \eqref{eq0_0_b}-\eqref{eq0_0_d}, also preserve the simplex structure in the higher-dimensional space. This enables us to derive several results on the interplay between various distortion and perception constraints. The novel formulation also facilitates the design of an efficient algorithm for solving the optimization problem. Next, we formulate an OT problem that is equivalent to the original RDP problem.

\begin{theorem}
\label{thm-0}
The optimal solution to the RDP function, as defined by equation \eqref{eq0_0}, coincides with the optimal solution of the subsequent optimization problem when the distance metric $d(\cdot,\cdot)$ in equation \eqref{eq0_0_d} is specified as the Wasserstein metric. Furthermore, the pair $(\bm{w},\bm{r})$, which represents the optimal solution to the optimization problem delineated in equation \eqref{eq0_1}, is identical to the optimal solution of equation \eqref{eq0_0}.
 \begin{subequations} \label{eq0_1}
\begin{align}
    \min _{\bm{w},\bm{r},\bm{\Pi}} \quad  &\sum_{i=1}^M \sum_{j=1}^N\left(w_{i j} p_i\right)\left[\log w_{i j}-\log r_j\right]  \label{eq0_1_a}\\
    \emph{ \text { s.t. } }\quad &\sum_{j=1}^N w_{i j}=1,\   \sum_{i=1}^M w_{i j} p_i=r_j,  \label{eq0_1_b}\\
    &\sum_{i=1}^M \Pi_{i j}= r_j ,\  \sum_{j=1}^N \Pi_{i j}= p_i, \  \forall i, j, \label{eq0_1_c}\\
    &\sum_{i=1}^M \sum_{j=1}^N w_{i j} p_i d_{i j} \leq D,\   \sum_{j=1}^N r_j=1 , \label{eq0_1_d}\\
    &\sum_{i=1}^M \sum_{j=1}^N \Pi_{i j} c_{i j} \leq P.\label{eq0_1_e}
\end{align}
\end{subequations}
\end{theorem}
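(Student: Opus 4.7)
The plan is to exploit the fact that the Wasserstein distance in \eqref{metric} is itself defined as the value of an inner linear program over couplings $\bm{\Pi}\in\mathcal{T}(\bm{p},\bm{r})$, while the objective \eqref{eq0_1_a} does not depend on $\bm{\Pi}$ at all. Thus $\bm{\Pi}$ enters \eqref{eq0_1} purely as a feasibility certificate for the perception constraint, and the equivalence should follow by an elementary two-way projection of feasible sets.

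First I would set up the notation and observe that the objectives \eqref{eq0_0_a} and \eqref{eq0_1_a} are identical functions of $(\bm{w},\bm{r})$, so it suffices to prove that the $(\bm{w},\bm{r})$-projections of the two feasible sets coincide. The constraints \eqref{eq0_1_b} and \eqref{eq0_1_d} are already identical to \eqref{eq0_0_b} and \eqref{eq0_0_c}, so the only thing to verify is that
\begin{equation*}
\mathcal{W}(\bm{p},\bm{r})\le P\;\Longleftrightarrow\;\exists\,\bm{\Pi}\ge 0\ \text{with}\ \textstyle\sum_i\Pi_{ij}=r_j,\ \sum_j\Pi_{ij}=p_i,\ \sum_{i,j}\Pi_{ij}c_{ij}\le P.
\end{equation*}

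For the forward direction, take any $(\bm{w},\bm{r})$ feasible for \eqref{eq0_0} with the Wasserstein constraint. Because $\mathcal{X},\hat{\mathcal{X}}$ are finite and the transportation polytope $\mathcal{T}(\bm{p},\bm{r})$ is compact and nonempty (the product coupling $\Pi_{ij}=p_ir_j$ lies in it), the LP in \eqref{metric} attains its minimum at some $\bm{\Pi}^\star\in\mathcal{T}(\bm{p},\bm{r})$ with $\sum_{i,j}\Pi^\star_{ij}c_{ij}=\mathcal{W}(\bm{p},\bm{r})\le P$. Then $(\bm{w},\bm{r},\bm{\Pi}^\star)$ satisfies every constraint of \eqref{eq0_1}. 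For the reverse direction, any $(\bm{w},\bm{r},\bm{\Pi})$ feasible for \eqref{eq0_1} automatically has $\bm{\Pi}\in\mathcal{T}(\bm{p},\bm{r})$ by \eqref{eq0_1_c}, so $\mathcal{W}(\bm{p},\bm{r})\le\sum_{i,j}\Pi_{ij}c_{ij}\le P$ by \eqref{eq0_1_e}, making $(\bm{w},\bm{r})$ feasible for \eqref{eq0_0}. Since the objectives agree and the projected feasible sets coincide, the two problems share the same optimal value and the same optimal pair $(\bm{w},\bm{r})$.

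There is no real technical obstacle here; the only subtlety to address cleanly is the attainment of the infimum in \eqref{metric_0}, which in the finite-alphabet setting reduces to the observation that a linear function on a compact polytope achieves its minimum. I would mention this explicitly so the reader sees why it is legitimate to upgrade "there is a coupling of cost $\le P$" to the Wasserstein inequality and back. The argument does not use convexity of the objective, so it applies at every feasible point, giving both equality of optimal values and identification of optimizers.
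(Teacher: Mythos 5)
Your proposal is correct and rests on the same two observations as the paper's own proof in Appendix~\ref{app:proof-thm0}: any feasible $\bm{\Pi}$ in \eqref{eq0_1} upper-bounds $\mathcal{W}(\bm{p},\bm{r})$ by $P$, and conversely an optimal (attained) coupling certifies feasibility in \eqref{eq0_1} for any $(\bm{w},\bm{r})$ feasible in \eqref{eq0_0}. The only difference is presentational: you phrase it as equality of the $(\bm{w},\bm{r})$-projections of the two feasible sets, while the paper argues by contradiction from the optimal solutions; your explicit remark on attainment of the finite-alphabet LP minimum is a welcome touch of rigor but does not change the substance.
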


We give the proofs in Appendix~\ref{app:proof-thm0}.

Theorem \ref{thm-0} establishes an equivalence between the RDP problem \eqref{eq0_0} and the optimization \eqref{eq0_1}. Also, we observe that the model \eqref{eq0_1} has the Wasserstein Barycenter structure. The optimization variable $r_j$ can be regarded as the Barycenter, and $\text{diag}(\bm{p})\cdot\bm{w}$ and $\bm{\Pi}$ are two couplings between each input $\bm{p}$ and Barycenter $\bm{r}$.

\section{Interplay Between Distortion and Perception}\label{sec:interplay}

With the introduction of the new constraint on the divergence between distributions, a natural question to ask is when this constraint becomes inactive, i.e., the RDP problem can be simplified to the RD problem. We prove the existence of such a critical transition phenomenon by applying the Wasserstein Barycenter framework, deriving several insights into the interplay between the distortion and perception constraints. Furthermore, we show that the corresponding critical transition curves can be effectively computed using our method.

\subsection{Variants of the RDP Functions}
In the classical rate-distortion problem, the RD function exhibits convexity, enabling the formulation of the distortion-rate (DR) function as the inverse of the RD function. In the context of the three-way trade-off involving rate, distortion, and perception, we symmetrically formulate two bivariate functions, namely the perception function with respect to the rate and distortion and the distortion function with respect to the rate and perception, in addition to the rate function.
\begin{definition}
Given rate $R\in \mathbb{R}^+$ and distortion $D\in \mathbb{R}^+$, the information perception-rate-distortion (PRD) function is defined as
\begin{equation}\label{eq8_2}
        \begin{aligned}
        P(R,D) = \min _{p_{\hat{X} \mid X}} \quad&d\left(p_X, p_{\hat{X}}\right)    \\
        \text { \emph{s.t.} }   \quad&  I(X, \hat{X}) \leq R,  \\
        &\mathds{E}[\Delta(X, \hat{X})] \leq D , 
        \end{aligned}
\end{equation}
where the minimization is taken over all conditional distributions.
\end{definition}

\begin{definition}
Given rate $R\in \mathbb{R}^+$ and perceptual fidelity $P\in \mathbb{R}^+$, the information distortion-rate-perception (DRP) function is defined as
\begin{equation}\label{eq8_3}
        \begin{aligned}
        D(R,P) = \min _{p_{\hat{X} \mid X}} \quad& \mathds{E}[\Delta(X, \hat{X})]   \\
        \text { \emph{s.t.} }   \quad&  I(X, \hat{X}) \leq R,  \\
        &d\left(p_X, p_{\hat{X}}\right) \leq P , 
        \end{aligned}
\end{equation}
where the minimization is taken over all conditional distributions.
\end{definition}
We introduce the following shorthand notation for the RDP functions. For $ \alpha \in \mathbb{R}^+ \cup \{\infty\},$ we use $R_\alpha(D)$ to denote the univariate function by restricting the perpetual quality to be less than $\alpha,$ i.e., 
\[
R_\alpha(D):= R(D, P)\vert_{P=\alpha}.
\]
When $\alpha=\infty,$ the perceptual constraint is relaxed, so $R_\infty(D)$ denotes the information rate-distortion function \eqref{eq8_0_0}.
The information distortion-rate function is denoted similarly, where we let
\[
D_\alpha(R):= D(R, P)\vert_{P=\alpha}
\]
for the DRP function $D(R,P)$, and 

\begin{align*}
    D_{\infty}(R)= \min _{p_{\hat{X} \mid X}:\ I(X, \hat{X}) \leq R} \quad&\mathds{E}[\Delta(X, \hat{X})]. 
\end{align*}
As the perceptual constraint is written in the subscript, we reserve the superscript for the distortion constraint, i.e., 
\[
R^\beta(P):= R(D, P)\vert_{D=\beta}.
\]
So
the information rate perception function $R^{\infty}(P)$ and information perception rate function $P^{\infty}(R) $ for a source $p(x)$  can be written as

    \begin{align*}
    R^{\infty}(P)= \min _{p_{\hat{X} \mid X}:\ d\left(p_X, p_{\hat{X}}\right) \leq P}\quad& I(X, \hat{X}), \quad \text{and} 
    \end{align*}

    \begin{align*}
    P^{\infty}(R) = \min _{p_{\hat{X} \mid X}:\ I(X, \hat{X}) \leq R} \quad&d\left(p_X, p_{\hat{X}}\right),  
    \end{align*}
respectively. Note that the rate-perception trade-off is closely related to the problem of communicating probability distributions, where a single-letter characterization of the rate-perception function is given in \cite{kramer2007communicating}. Recently, the information constrained OT problem \cite{bai2023information} also draws attention with applications to the capacity of the relay channel.

\subsection{Degeneration of RDP Functions to RD Functions}

Under Assumption \ref{assumption-distortion}, \ref{assumption-perception} and \ref{assumption-convex}, the following proposition for $R_{\infty}(D)$, $D_{\infty}(R)$ and $R(D,P)$ is given in \cite{blau2019rethinking}.
\begin{proposition}[\cite{blau2019rethinking}]\label{proposition1}
 The following conclusions hold true when Assumption \ref{assumption-distortion}, \ref{assumption-perception} and \ref{assumption-convex} are satisfied:
 \begin{enumerate}
    \item $R_{\infty}(D)$ and $D_{\infty}(R)$ are both convex and non-increasing with respect to corresponding independent variables.
    \item $R(D,P)$ is convex and non-increasing with respect to $D$ and $P$.
\end{enumerate}
\end{proposition}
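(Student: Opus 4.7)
The plan is to verify Proposition~\ref{proposition1} by standard convex-analytic arguments that exploit three ingredients: (i) enlarging a constraint can only enlarge the feasible set, (ii) the mutual information $I(X,\hat X)$ is convex in the conditional $p_{\hat X\mid X}$ when $p_X$ is held fixed, and (iii) the distortion expectation is linear in $p_{\hat X\mid X}$ while the perceptual divergence is convex in its second argument by Assumption~\ref{assumption-convex}. Since $R_\infty(D)$ and $D_\infty(R)$ are the special cases of $R(D,P)$ and $D(R,P)$ obtained by removing the perceptual constraint, I would actually prove item~(2) in full generality, and then derive item~(1) as the specialization $P=\infty$ (and symmetrically for $D_\infty$).

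First I would handle monotonicity. For $D_1\le D_2$ and $P_1\le P_2$, every feasible conditional for $R(D_1,P_1)$ is feasible for $R(D_2,P_2)$, so the minimum in the larger set is no larger. The same feasibility inclusion yields that $D_\infty(R)$ and $R_\infty(D)$ are non-increasing in $R$ and $D$, respectively, and that $R(D,P)$ is non-increasing in each of $D$ and $P$. This step is essentially bookkeeping and should take only a couple of lines.

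The convexity step is the core of the argument. Fix $(D_1,P_1),(D_2,P_2)$ and let $p^{(1)}_{\hat X\mid X}$ and $p^{(2)}_{\hat X\mid X}$ attain the minima $R(D_1,P_1)$ and $R(D_2,P_2)$ (existence follows from compactness on the finite simplex, which I would note in passing). For $\lambda\in[0,1]$, form the convex combination $p^\lambda_{\hat X\mid X}=\lambda p^{(1)}_{\hat X\mid X}+(1-\lambda)p^{(2)}_{\hat X\mid X}$. The induced marginal on $\hat X$ is $\lambda p^{(1)}_{\hat X}+(1-\lambda)p^{(2)}_{\hat X}$, so by Assumption~\ref{assumption-convex} the perceptual divergence satisfies $d(p_X,p^\lambda_{\hat X})\le \lambda P_1+(1-\lambda)P_2$. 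The expected distortion is affine in the conditional and hence bounded by $\lambda D_1+(1-\lambda)D_2$. Finally, convexity of $I(X,\hat X)$ in $p_{\hat X\mid X}$ (a standard consequence of the log-sum inequality) gives
\begin{equation*}
I\bigl(X,\hat X\bigr)_{p^\lambda}\le \lambda R(D_1,P_1)+(1-\lambda)R(D_2,P_2).
\end{equation*}
Since $p^\lambda_{\hat X\mid X}$ is feasible for the problem at $(\lambda D_1+(1-\lambda)D_2,\lambda P_1+(1-\lambda)P_2)$, the desired joint convexity
\begin{equation*}
R\bigl(\lambda D_1+(1-\lambda)D_2,\lambda P_1+(1-\lambda)P_2\bigr)\le \lambda R(D_1,P_1)+(1-\lambda)R(D_2,P_2)
\end{equation*}
follows. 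For $D_\infty(R)$, an entirely parallel averaging argument works: the distortion is affine and the rate constraint is convex in $p_{\hat X\mid X}$, so the mixture is feasible at the averaged rate and delivers the averaged distortion.

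The only step that is not purely mechanical is the transfer of convexity through the marginalization map $p_{\hat X\mid X}\mapsto p_{\hat X}$, which could be a concern because the perceptual divergence $d(p_X,p_{\hat X})$ is not \emph{a priori} convex in the conditional. I expect this to be the main, though mild, obstacle, and it is resolved by observing that the marginalization map is linear; hence convexity in the second argument of $d$ (Assumption~\ref{assumption-convex}) is exactly what is needed to push the bound through to the conditional level. Everything else amounts to invoking standard properties of mutual information and the definition of the problem.
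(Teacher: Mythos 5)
Your proof is correct: the monotonicity follows from nested feasible sets, and the convexity argument (mixing optimal conditionals, using linearity of the distortion and of marginalization, convexity of $d$ in its second argument per Assumption~\ref{assumption-convex}, and convexity of $I(X,\hat X)$ in $p_{\hat X\mid X}$) is exactly the standard route. The paper itself does not prove Proposition~\ref{proposition1} but cites \cite{blau2019rethinking}, and your argument is essentially the same as the one given there, so no further comparison is needed.
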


We precede with the following definition of minimality, which assists our study of the interplay between the constraints and the geometric properties when one of the constraints becomes inactive.
\begin{definition}
A function $f^* \in \{f: \mathcal{D} \rightarrow\mathbb{R}\}$ is said to be \textbf{minimal} in the given function set $\mathcal{F}$ on domain $\mathcal{D} $ if for any function $f \in \mathcal{F}$ we have
\begin{equation*}
    f^*(x) \leq f(x),\quad \forall x \in \mathcal{D}.
\end{equation*}
\end{definition}
Before we delve into our main theorems, we show some basic properties of these variants of the RDP functions in Appendix~\ref{l1}. These lemmas prepare us for the following theorems, which establish the critical transitions on the distortion-perception level.

\begin{theorem}\label{thm-3}
Consider a source distribution $p_X$ defined on the $\sigma$-algebra $(\mathcal{X}, \mathcal{F})$ and the information RDP functions.
The following statements hold true under Assumption \ref{assumption-distortion}, \ref{assumption-perception} and \ref{assumption-convex}: 
\begin{itemize}[noitemsep,topsep=-\parskip]
    \item [1)]  There exists a continuous minimal  function $f: \mathbb{R}^+ \rightarrow \mathbb{R}^+$ satisfying
    \begin{subequations}
      \begin{align}
        \lim_{P \to f(D)}  D(R,P) =& D_{\infty}(R),\quad \forall R \in \mathbb{R}^+,\label{eq10_0_1_a}\\
        \lim_{P \to f(D)}  R(D,P) =& R_{\infty}(D),\quad \forall D \in \mathbb{R}^+.\label{eq10_0_1_b}
    \end{align}  
    \end{subequations}
    
    Furthermore, when $P \geq f(D)$, the perceptual constraints for the distortion function and the rate function become inactive, i.e., 
    \begin{subequations}
        \begin{align}
        D(R,P) =& D_{\infty}(R),\quad \forall R \in \mathbb{R}^+,\\
        R(D,P) =& R_{\infty}(D),\quad \forall D \in \mathbb{R}^+.
    \end{align} 
    \end{subequations}   
    \item [2)]
    When $D \geq D(0,0)$, we have
    \begin{align}
        P(R,D) = P^{\infty}(R)  = 0.
    \end{align}
    \item [3)]
    There exists a convex continuous minimal function $h: \mathbb{R}^+ \rightarrow \mathbb{R}^+$ satisfying
    \begin{align}\label{eq10_0_2}
        \lim_{D \to h(P)}  R(D,P) = R^{\infty}(P),\quad \forall P \in \mathbb{R}^+.
    \end{align}
    Furthermore, when $D \geq h(P)$, we have
    \begin{align}
        R(D,P) = R^{\infty}(P),\quad \forall P \in \mathbb{R}^+, D\geq h(P).
    \end{align}
\end{itemize}
\end{theorem}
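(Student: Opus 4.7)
The plan is to characterize the threshold functions $f$ and $h$ explicitly via the optimal-solution sets of the unconstrained problems, then verify the claimed properties using compactness and convex-analytic arguments.

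For Part 1), I would define $f(D) := \min\{d(p_X, p_{\hat X}) : p_{\hat X|X} \text{ is optimal for } R_\infty(D)\}$. The set of minimizers of $R_\infty(D)$ is a non-empty compact convex subset of the probability simplex (by continuity of $I(X,\hat X)$ and compactness of the feasible set), and $d$ is continuous in its second argument (for the Wasserstein metric, by Kantorovich duality), so the minimum is attained. To show $R(D,P) = R_\infty(D)$ for $P \geq f(D)$, take a minimizer of $R_\infty(D)$ achieving perception $f(D) \leq P$; it is feasible for $R(D,P)$ and yields rate $R_\infty(D)$, while $R(D,P) \geq R_\infty(D)$ holds trivially since $R(D,P)$ has one more constraint. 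Minimality follows by contradiction: if some $g(D) < f(D)$ were also a threshold, every feasible point of $R(D,g(D))$ would miss the RD optimum by definition of $f(D)$, so $R(D,g(D)) > R_\infty(D)$. The statement $D(R,P) = D_\infty(R)$ follows in parallel via the lemmas in Appendix~\ref{l1} that tie the distortion and rate perspectives together.

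Part 2) is direct: $D(0,0)$ is the minimum distortion under the joint constraint $I(X,\hat X)=0$ and $d(p_X,p_{\hat X})=0$, which forces $\hat X$ to be independent of $X$ with $\hat X \sim p_X$, giving $D(0,0) = \sum_{i,j} p_i p_j \Delta(x_i,\hat x_j)$. For $D \geq D(0,0)$ this independent coupling is feasible for the problem defining $P(R,D)$ (it attains rate $0 \leq R$ and distortion $\leq D$) and achieves perception $0$; by Assumption~\ref{assumption-perception} this is the minimum, so $P(R,D) = 0$, and the same construction gives $P^\infty(R) = 0$. For Part 3), I would define $h(P) := \min\{\mathbb{E}[\Delta(X,\hat X)] : p_{\hat X|X} \text{ is optimal for } R^\infty(P)\}$ by the analogous compactness argument, and establish inactivity of the distortion constraint for $D \geq h(P)$ in parallel to Part 1. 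Convexity of $h$ would come from mixing: given optimizers $p^1, p^2$ for $R^\infty(P_1), R^\infty(P_2)$ attaining $h(P_1), h(P_2)$, the convex combination $\lambda p^1 + (1-\lambda)p^2$ has mutual information $\leq \lambda R^\infty(P_1) + (1-\lambda) R^\infty(P_2)$ by convexity of $I$ in the conditional, perception $\leq \lambda P_1 + (1-\lambda) P_2$ by Assumption~\ref{assumption-convex}, and expected distortion exactly $\lambda h(P_1) + (1-\lambda) h(P_2)$ by linearity, so this mixture witnesses $h(\lambda P_1 + (1-\lambda) P_2) \leq \lambda h(P_1) + (1-\lambda) h(P_2)$.

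The main obstacle will be establishing continuity of $f$ and $h$ and verifying the one-sided limits \eqref{eq10_0_1_a}, \eqref{eq10_0_1_b}, and \eqref{eq10_0_2}. Continuity requires upper and lower hemicontinuity of the optimal-solution correspondences $D \mapsto \arg\min R_\infty(D)$ and $P \mapsto \arg\min R^\infty(P)$; I would obtain this from Berge's maximum theorem after checking that the feasible-set correspondence is continuous and the objective is jointly continuous. The limits as $P \uparrow f(D)$ and $D \uparrow h(P)$ are the delicate part, because just below the threshold the unconstrained optimum is infeasible. Here I would invoke a stability argument for parametric convex programs: take a sequence of optimizers at $P = f(D) - \varepsilon_n$, extract a convergent subsequence by compactness, and show the limit is optimal for $R_\infty(D)$, so that the rates converge to $R_\infty(D)$.
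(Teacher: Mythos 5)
Your proposal is correct in substance and, despite the different packaging, lands on essentially the same construction as the paper: your $f(D)$, the minimum perception over the minimizer set of $R_\infty(D)$, coincides with the paper's choice $f(D)=P(R_\infty(D),D)$ (any coupling feasible for $P(R_\infty(D),D)$ automatically attains $I=R_\infty(D)$), and your $h(P)$ coincides with the paper's reduced problem because $R^\infty(P)\equiv 0$ (Lemma~\ref{lemma2}), so the optimizers of $R^\infty(P)$ are exactly the product couplings $w_{ij}=r_j$ with $d(\bm{p},\bm{r})\le P$ — a simplification you use implicitly but never state; making it explicit is what lets the paper write $h$ as the optimization \eqref{eq10_0} in $\bm{r}$ alone, from which convexity and minimality are immediate, whereas your mixing argument (convexity of $I$ in the conditional plus Assumption~\ref{assumption-convex} and linearity of the distortion) is a correct, slightly more roundabout substitute. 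The feasibility-plus-monotonicity argument for inactivity, the contradiction-via-optimality argument for minimality, and the identification of $D(0,0)$ with the product-coupling distortion in part 2 all match the paper's proof. Two places where your sketch is thinner than what the paper actually does: (i) the step ``$D(R,P)=D_\infty(R)$ follows in parallel'' is where the paper does real work — it invokes the bijectivity of the rate–distortion correspondence for $D\le D_\infty(0)$ to transfer the equality between the two formulations, and a separate monotonicity argument for $D>D_\infty(0)$, so that a single function $f$ serves both claims; and (ii) for continuity of $f,h$ and the one-sided limits \eqref{eq10_0_1_a}, \eqref{eq10_0_1_b}, \eqref{eq10_0_2} you plan Berge's maximum theorem plus a subsequential stability argument, while the paper obtains these more cheaply from convexity of the value functions (Proposition~\ref{proposition1}, Lemma~\ref{lemma4}) together with Assumptions~\ref{assumption-distortion} and \ref{assumption-perception} for the boundary cases $D=0$ and $P=0$; your heavier machinery would also work in the finite-alphabet setting, but it is left as a plan rather than carried out.
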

We give the proofs in Appendix~\ref{app:proof-thm3}.

\begin{remark}
\label{remark1}
Based on Theorem \ref{thm-3}, we have the inequality $D(R,P) \leq D(0,P) = h(P).$ Consequently, the equality $R(D,P) = R^{\infty}(P)$ is valid exclusively when $D = h(P)$. The condition $D = h(P)$delineates the upper boundary for the distortion-perception trade-off curve (see Fig.~\ref{curve}). 
This curve originates at the point $(D,P) = (h(0),0)$ where
\begin{equation*}
h(0) = \sum_{i=1}^M \sum_{j=1}^N p_i p_j d_{i j},
\end{equation*}
and for $P \geq P(0,D_{\infty}(0))$, we find that $h(P) = D_{\infty}(0).$ This is because $ D_{\infty}(0) \leq D(0,P)$ and the optimal solution $\bm{r}$ for $D_{\infty}(0)$ remains a feasible solution for $D(0,P)$ provided that $P \geq P(0,D_{\infty}(0))$.
\end{remark}
\begin{remark}
\label{remark2}
It is not possible to define a function $g(\cdot)$ such that for $D \geq g(P),$ the relationship $P(R,D) = P^{\infty}(R)$ holds. However, there is a specific region within the distortion-perception trade-off where $P(R,D) = P^{\infty}(R)  = 0$. Identifying the critical value of $D$ corresponds to determining $D(R,0)$ across various values of $R$. This can be written as:
        \begin{subequations} 
        \begin{align*}
    \min _{\bm{w}} \quad  &\sum_{i=1}^M \sum_{j=1}^N w_{i j} p_i d_{i j}  \\
    \text { \emph{s.t.} } \quad &\sum_{j=1}^N w_{i j}=1,\   \sum_{i=1}^M w_{i j} p_i=p_j,  \  \forall i, j,  \\
    &\sum_{i=1}^M \sum_{j=1}^N\left(w_{i j} p_i\right)\left[\log w_{i j}-\log p_j\right]  \leq R. 
        \end{align*}
        \end{subequations}
When $R = 0$, we have $w_{ij}=p_j$ and $D = \sum_{i=1}^M \sum_{j=1}^N p_i p_j d_{i j} = h(0)$. When $R \geq -\sum_{i=1}^M p_i \log(p_i)$, we can always set $w_{ij} = \mathbf{1}_{i = j}$ and $D = 0$. Given the continuity and non-increasing nature of $D(R,0)$, it follows that $D$ is bounded within the interval $D \in [0, h(0)]$. Consequently, the region of interest is characterized by $P = 0$ and $D \in [0, h(0)]$.
\end{remark}
The function $P = f(D)$ represents the transition curve that delineates the boundary between the RDP function and the RD function. This curve initiates at the origin $(0,0)$ and terminates at the point $(D_{\infty}(0),P(0,D_{\infty}(0)))$. 
\begin{remark}
The endpoint of the transition curve, $(D_{\infty}(0),P(0,D_{\infty}(0))),$ is also the intersection of $P = f(D)$ and $D = h(P)$, i.e., $R(D,P) = R_P(D) = R^{\infty}(P) = 0$ at this point.
\end{remark}
\begin{figure}[t]
\centering
\includegraphics[width=\linewidth]{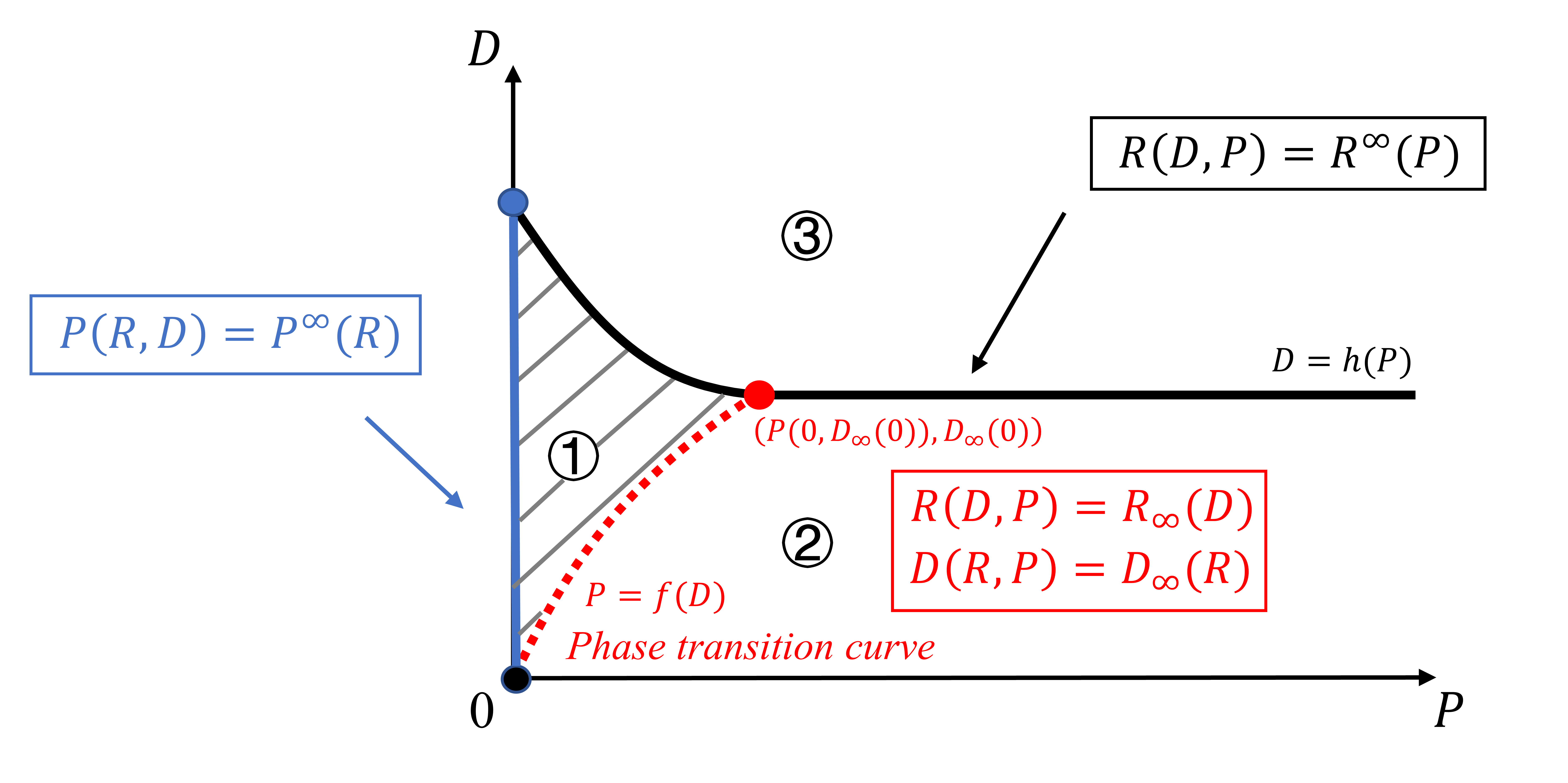}
\caption{The schematic diagram of distortion-perception cross-sections and the corresponding transition curves where one of the constraints is removed.}  \label{curve}
\end{figure}

In Figure \ref{curve}, we present a schematic diagram illustrating the distortion-perception cross-sections. The blue line delineates the region where $P(R,D) = P^{\infty}(R),$ as noted in Remark \ref{remark2}. According to Remark \ref{remark1}, the region marked \normalsize{\textcircled{\scriptsize{3}}}\normalsize\enspace in the figure is achievable for any rate $R\geq 0$, and the black curve represents the boundary defined by $D = h(P)$. In region marked \normalsize{\textcircled{\scriptsize{2}}}\normalsize\enspace , the perceptual constraint is inactive, reducing the RDP problem to an RD problem. Conversely, in the region marked\normalsize{\textcircled{\scriptsize{1}}}\normalsize\enspace both perceptual and distortion constraints influence the rate. The boundary between  \normalsize{\textcircled{\scriptsize{1}}}\normalsize\enspace and \normalsize{\textcircled{\scriptsize{2}}}\normalsize\enspace is the transition curve $P = f(D)$. When the Wasserstein metric is selected as the perceptual measure, the following specific results can be derived regarding the boundary.
\begin{proposition}
\label{thm-4}
If the cost matrix $(c_{ij})$ associated with the perception measure is identical to the distortion matrix $(d_{ij})$, then the endpoint of the transition curve satisfies the conditions $D_{\infty}(0) = P(0,D_{\infty}(0))$ and $ D \geq f(D)$ for $D \in [0,D_{\infty}(0)]$.
\end{proposition}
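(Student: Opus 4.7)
The linchpin of the argument is that when the perception cost matrix coincides with the distortion matrix, the independence (product) coupling $\Pi_{ij}=p_ir_j$ is always feasible in the Kantorovich definition of $\mathcal{W}$, so for every pair of marginals one has
\[
\mathcal{W}(\bm{p},\bm{r})\;\leq\;\sum_{i,j}p_ir_jc_{ij}\;=\;\sum_{i,j}p_ir_jd_{ij}.
\]
This single inequality drives both claims of the proposition.

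For $D\geq f(D)$ on $[0,D_{\infty}(0)]$, the plan is to show $R(D,D)=R_{\infty}(D)$ and then invoke the minimality characterization of $f$ from Theorem~\ref{thm-3}. I would fix $D$ and let $p^{*}_{\hat{X}\mid X}$ attain $R_{\infty}(D)$ with induced output marginal $\bm{r}^{*}$. The joint $\Pi^{*}_{ij}=p_i\,p^{*}(j\mid i)$ is a valid coupling of $\bm{p}$ and $\bm{r}^{*}$, so
\[
\mathcal{W}(\bm{p},\bm{r}^{*})\;\leq\;\sum_{i,j}\Pi^{*}_{ij}d_{ij}\;=\;\mathds{E}[\Delta(X,\hat{X})]\;\leq\;D.
\]
Hence $p^{*}_{\hat{X}\mid X}$ remains feasible once the perceptual constraint $\mathcal{W}(\bm{p},\bm{r})\leq D$ is imposed, giving $R(D,D)\leq R_{\infty}(D)$. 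The reverse inequality is automatic (adding a constraint cannot reduce the optimum), so $R(D,D)=R_{\infty}(D)$, and the minimality of $f$ then yields $f(D)\leq D$.

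For the endpoint identity $D_{\infty}(0)=P(0,D_{\infty}(0))$, I would prove the two inequalities separately. The upper bound: pick any $\bm{r}^{*}$ attaining $D_{\infty}(0)$; at $R=0$ the constraint $\hat{X}\perp X$ forces $\mathds{E}[\Delta]=\sum p_ir^{*}_jd_{ij}=D_{\infty}(0)$, so $\bm{r}^{*}$ is feasible for $P(0,D_{\infty}(0))$, and the product-coupling bound above gives $P(0,D_{\infty}(0))\leq\mathcal{W}(\bm{p},\bm{r}^{*})\leq D_{\infty}(0)$. For the matching lower bound, I would exploit that any feasible $\bm{r}$ must saturate the distortion constraint, forcing $\mathrm{supp}(\bm{r})\subseteq J^{*}:=\{j:\sum_ip_id_{ij}=D_{\infty}(0)\}$. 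For each atom $j^{*}\in J^{*}$ the only transport plan from $\bm{p}$ to $\delta_{j^{*}}$ assigns mass $p_i$ to the pair $(i,j^{*})$, which yields $\mathcal{W}(\bm{p},\delta_{j^{*}})=\sum_ip_id_{ij^{*}}=D_{\infty}(0)$; a Kantorovich-duality argument with potentials $\psi_{j}=0$ on $J^{*}$ (and $\phi_i$ chosen to saturate on the same set) should lift this to general $\bm{r}$ supported in $J^{*}$.

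The main obstacle is the $\geq$ direction of the endpoint identity. The product coupling furnishes only an upper bound on $\mathcal{W}$, and an ``off-diagonal'' coupling can in principle transport mass more cheaply than the independence one; the delicate step is choosing dual potentials adapted to $c=d$ that simultaneously match the $D_{\infty}(0)$ bound on every feasible $\bm{r}$. By contrast, the second assertion is an almost immediate consequence of the product-coupling inequality combined with the minimality of $f$.
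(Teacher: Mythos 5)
Your argument for the second claim is sound and is essentially the paper's own: the joint distribution $\Pi^*_{ij}=p_i\,p^*(j\mid i)$ induced by an RD-optimal conditional is itself a feasible coupling of $\bm{p}$ and $\bm{r}^*$, so with $c=d$ the perception is bounded by the distortion, giving $P(R_\infty(D),D)\leq D$, i.e.\ $f(D)\leq D$ (the paper phrases this as ``the matrix $w_{ij}p_i$ satisfies the coupling marginals, hence $P\leq\sum_{ij}w_{ij}p_ic_{ij}\leq D$''). The same product/RD-coupling bound also gives your upper bound $P(0,D_\infty(0))\leq D_\infty(0)$ correctly.

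The genuine gap is exactly where you flag it: the direction $P(0,D_\infty(0))\geq D_\infty(0)$. Your plan is to show that every feasible $\bm{r}$ is supported on $J^*=\{j:\sum_i p_i d_{ij}=D_\infty(0)\}$ and then to lift the single-atom computation $\mathcal{W}(\bm{p},\delta_{j^*})=D_\infty(0)$ to all such $\bm{r}$ via Kantorovich duality. No choice of potentials can accomplish this, because the inequality you are trying to prove is false for general $\bm{r}$ supported on $J^*$ when $J^*$ is not a singleton: take $\mathcal{X}=\hat{\mathcal{X}}=\{1,2\}$, $\bm{p}=(1/2,1/2)$ and $c=d=$ Hamming; then $J^*=\{1,2\}$, $D_\infty(0)=1/2$, and $\bm{r}=\bm{p}$ is feasible at rate $0$ with distortion exactly $1/2$, yet $\mathcal{W}(\bm{p},\bm{r})=0$. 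The statement only holds under the tacit assumption (made implicitly by the paper as well) that the minimizer $j^*=\arg\min_j\sum_i p_i d_{ij}$ is unique. Under that assumption the lower bound needs no duality at all: the rate-zero distortion constraint forces $\bm{r}=\delta_{j^*}$, and the \emph{only} coupling of $\bm{p}$ with a point mass is $\Pi_{ij}=p_i\mathbf{1}_{j=j^*}$, so $P(0,D_\infty(0))=\sum_i p_i c_{ij^*}=D_\infty(0)$ directly — this is how the paper closes the argument. So to repair your proof, either add the uniqueness hypothesis and replace the duality step by the trivial uniqueness-of-coupling-to-a-Dirac observation, or accept that in degenerate cases the endpoint identity itself fails.
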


We give the proofs in Appendix~\ref{app:proof-thm4}. The condition in Proposition~\ref{thm-4} is satisfied by numerous standard distortion and perceptual metrics. For instance, this condition is satisfied when the MSE is used as the distortion measure and the Wasserstein-2 metric is employed as the perceptual measure. Similarly, the condition holds when the Hamming distance serves as the distortion measure and the TV distance is used as the perceptual measure. In both scenarios, the cost matrix $(c_{ij})$ and the distortion matrix $(d_{ij})$ coincide, leading to the satisfaction of the equation $D_{\infty}(0) = P(0,D_{\infty}(0))$ for the respective curves.

\subsection{Extension to the $n$-Letter Case}
We explore how the principles and techniques discussed earlier can be extended to handle operational RDP functions involving sequences of $n$ letters. This extension provides valuable insights for future research within the finite blocklength regime.
In the RDP framework, two scenarios — the one-shot and the asymptotic—have been extensively studied by Theis and Wagner \cite{theis2021coding} and Chen et al. \cite{chen2022rate}, leveraging the strong functional representation lemma \cite{li2018strong} and distributed channel synthesis \cite{DBLP:journals/tit/SaldiLY15a}. It is assumed that various degrees of common randomness between the encoder and decoder, as well as private randomness \cite{hamdi2024the}, are available.

It has been shown that the information RDP function characterizes the rate regions when the perceptual constraints are enforced on the empirical distributions of the output \cite{theis2021coding,chen2022rate,niu2023conditional}. In the preceding subsection, we observe the critical transition that information RDP functions degenerate into RD functions in the $n$-letter case. Additionally, the associated boundary exhibits certain properties. In the following, we also consider perceptual constraints that compare the $n$-product i.i.d. distributions $\prod_{i=1}^n p_X$ and $\prod_{i=1}^n p_{\hat{X}},$ which resembles the strong perceptual constraints. First, it is imperative to make appropriate adjustments to the relevant definitions inspired by the one-shot \cite{liu2024one} result, $k$-shot network code \cite{DBLP:journals/jsait/ZhangGYNB24}, and finite blocklength lossy compression \cite{DBLP:journals/tit/KostinaV12}. However, rather than presenting coding theorems within the finite blocklength regime, our objective is to explore the mathematical consequences of the following operational definition.

\begin{definition}\label{definition7}
Given a distortion $D\in \mathbb{R}^+$ and a perceptual fidelity $P\in \mathbb{R}+$, the $n$-letter information RDP function is 
\begin{equation}\label{eq20}
\begin{aligned}
R^{[n]}(D, P)= \min _{p_{\hat{X} \mid X}} \quad& I(X, \hat{X})    \\
 \text { \emph{s.t.} }\quad &\mathds{E}[\Delta(X, \hat{X})] \leq D,  \\  
 &  d\left(p_{X^n}, p_{\hat{X}^n}\right) \leq P, 
\end{aligned}
\end{equation}%
where $p_{X^n}$ and $p_{\hat{X}^n}$ denote the product measures of the source $p_{X}$ and reconstruction distributions $p_{\hat{X}}$ respectively, i.e., 
\begin{align*}
    p_{X^n}(x_1,\cdots,x_n)&:=\prod_{i=1}^{n}p_X(x_i), \\
    p_{\hat{X}^n}(x_1,\cdots,x_n)&:= \prod_{i=1}^{n}p_{\hat{X}}(x_i).
\end{align*}

\end{definition}
The definitions for $D^{[n]}(R, P), P^{[n]}(R, D), R^{\infty^{[n]}}(P)$ and $P^{\infty^{[n]}}(R)$ follows by simply replacing $d\left(p_X,p_{\hat{X}}\right)$ with $d\left(p_{X^n}, p_{\hat{X}^n}\right)$ respectively. 

In \cite{chen2022rate}, the authors discuss the situation where the perceptual constraint is tensorizable. In this case, the divergence in the perceptual constraint can be decomposed into the sum of the divergences between the marginals, which greatly simplifies the analysis.
\begin{theorem}\label{thm-6}
For perception measure using the Wasserstein distance, the TV distance and the KL divergence, we have
\begin{align*}
        R^{[n]}(D,P) = R(D,\frac{P}{n}).
\end{align*}
\end{theorem}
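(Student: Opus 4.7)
The plan is to reduce the $n$-letter problem \eqref{eq20} to its single-letter counterpart by leveraging the tensorization of each perception measure. The central observation is that both $R^{[n]}(D,P)$ and $R(D,P/n)$ minimize the identical single-letter objective $I(X,\hat{X})$ over the same class of single-letter conditionals $p_{\hat{X}\mid X}$, subject to the same single-letter distortion constraint $\mathds{E}[\Delta(X,\hat{X})]\le D$. Only the perception constraints differ, so the theorem reduces to showing the equivalence $d(p_{X^n},p_{\hat{X}^n})\le P \Longleftrightarrow d(p_X,p_{\hat{X}})\le P/n$ for each of the three divergences.

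For the KL divergence, I would invoke the additivity of relative entropy on product measures,
\begin{equation*}
\mathrm{KL}(p_{X^n}\,\|\,p_{\hat{X}^n}) \;=\; \sum_{i=1}^n \mathrm{KL}(p_X\,\|\,p_{\hat{X}}) \;=\; n\,\mathrm{KL}(p_X\,\|\,p_{\hat{X}}),
\end{equation*}
which makes the equivalence immediate. For the Wasserstein metric with additive $n$-letter cost $C(x^n,\hat{x}^n)=\sum_{i=1}^n c(x_i,\hat{x}_i)$, I would establish the identity $\mathcal{W}(p_{X^n},p_{\hat{X}^n})=n\,\mathcal{W}(p_X,p_{\hat{X}})$ by a two-sided argument: the upper bound by tensorizing an optimal single-letter coupling $\Pi^{\ast}$ into $(\Pi^{\ast})^{\otimes n}$, and the lower bound by marginalizing any $n$-letter coupling onto each coordinate pair $(X_i,\hat{X}_i)$, recognizing each projection as a feasible single-letter coupling, and summing the per-coordinate transport costs. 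Once the tensorization identities are in hand, the perception constraint rescales cleanly by a factor of $n$ and the two feasible sets, and hence the two optima, coincide.

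The main obstacle is the TV case, because only subadditivity $\delta(p_{X^n},p_{\hat{X}^n})\le n\,\delta(p_X,p_{\hat{X}})$ follows generically from the coupling definition, immediately yielding the easy direction $R^{[n]}(D,P)\le R(D,P/n)$. The reverse inequality is delicate because $\delta(p^{\otimes n},q^{\otimes n})$ can concentrate toward $1$ even for small $\delta(p_X,p_{\hat{X}})$, so a naive projection argument fails. To close the gap, I would exploit the i.i.d.\ product structure of $p_{X^n}$ and $p_{\hat{X}^n}$ induced by a single $p_{\hat{X}\mid X}$, combined with the convexity assumption (Assumption \ref{assumption-convex}) applied coordinate-wise to decompose the $n$-letter TV into per-letter contributions, or alternatively argue that the minimizer of $R^{[n]}(D,P)$ must lie in the sub-region where the subadditive bound is tight. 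This bookkeeping is the most technical portion of the argument, and I expect the principal effort of the proof to be concentrated here.
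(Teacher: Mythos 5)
Your reduction of the theorem to the equivalence of the perception constraints is exactly the paper's strategy, and for the KL divergence and the Wasserstein metric your argument matches (indeed elaborates on) the paper's proof, which simply invokes the tensorization identities $\mathcal{W}(p_{X^n},p_{\hat{X}^n})=n\,\mathcal{W}(p_X,p_{\hat{X}})$ and $\mathrm{KL}(p_{X^n}\|p_{\hat{X}^n})=n\,\mathrm{KL}(p_X\|p_{\hat{X}})$ by citing Proposition 1 of \cite{salehkalaibar2024rate} and the tensorization property in \cite{Joyce2011}; your two-sided coupling argument (tensorized coupling for the upper bound, coordinate-wise marginalization for the lower bound) is a correct self-contained proof of the Wasserstein identity the paper cites.

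The genuine gap is the TV case, and you leave it open yourself. The paper treats TV on the same footing as the other two divergences, asserting the exact scaling $\delta(p_{X^n},p_{\hat{X}^n})=n\,\delta(p_X,p_{\hat{X}})$ from the same cited proposition, so its proof involves no additional work there. Your proposal instead stops at subadditivity and offers two unexecuted sketches: applying Assumption \ref{assumption-convex} ``coordinate-wise'' does not decompose the $n$-letter TV into a sum of single-letter TV terms (convexity produces inequalities pointing the wrong way for the lower bound you need), and the claim that the minimizer of $R^{[n]}(D,P)$ must lie in a region where the subadditive bound is tight is asserted without justification. As written, the direction $R^{[n]}(D,P)\ge R(D,P/n)$ for the TV perception measure is unproven in your attempt. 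Your caution is not unreasonable --- since $\delta\le 1$, exact additivity over products cannot hold without restriction, and the paper leans entirely on the cited proposition at this point --- but to match the paper's statement you need either that citation or a proof of the scaling identity in the regime considered, not the patches you sketch.
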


\begin{proof}
From the Proposition 1 for the given divergences mentioned in \cite{salehkalaibar2024rate}
and the tensorization property of KL divergence \cite{Joyce2011} we can obtain 
\begin{align*}
    \mathcal{W}\left(p_{X^n}, p_{\hat{X}^n} \right) &= n\mathcal{W}\left(p_{X}, p_{\hat{X}} \right), \\
    \delta\left(p_{X^n}, p_{\hat{X}^n} \right) &= n\delta\left(p_{X}, p_{\hat{X}} \right),\\
    \text{KL}\left(p_{X^n}\| p_{\hat{X}^n} \right) &= n\text{KL}\left(p_{X}\| p_{\hat{X}} \right),
\end{align*}
thus $R^{[n]}(D,P)$ in \eqref{eq20} can be written as:
\begin{equation}
\begin{aligned}
R^{[n]}(D, P)= \min _{p_{\hat{X} \mid X}} \quad& I(X, \hat{X})    \\
 \text { s.t. }\quad &\mathds{E}[\Delta(X, \hat{X})] \leq D,   \\  
 &  d\left(p_{X}, p_{\hat{X}}\right) \leq \frac{P}{n}, 
\end{aligned}
\end{equation}
which is also the expression for $R(D, \frac{P}{n})$.
\end{proof}
It is worth noting that as $n\rightarrow \infty$, $R^{[n]}(D,P)$ converges to $R(D,0)$, which signifies perfect perceptual quality \cite{blau2019rethinking}.  This conclusion can be extended to the case where the sequence $\{\hat{X}_i\}$ is \textbf{not} necessarily identically distributed. In such cases, we adjust the corresponding definitions accordingly.
\begin{proposition}
Let $D\in \mathbb{R}^+$ be a given distortion level and $P\in \mathbb{R}+$ be a given perceptual fidelity level. When the sequence $\{\hat{X}_i\}$ is not necessarily identically distributed, we define the $n$-letter information RDP function as
\begin{equation}\label{eq20}
\begin{aligned}
\tilde{R}^{[n]}(D, P)= \min _{p_{\hat{X}_i \mid X}} \quad& \frac{1}{n}\sum_{i = 1}^n I(X, \hat{X}_i)    \\
 \text { \emph{s.t.} }\quad &\frac{1}{n}\sum_{i = 1}^n\mathds{E}[\Delta(X, \hat{X}_i)] \leq D,  \\  
 &  d\left(p_{X^n}, \prod_{i=1}^{n}p_{\hat{X}_i}\right) \leq P. 
\end{aligned}
\end{equation}
Then, for any perception measure using the Wasserstein distance, the TV distance, and the KL divergence, we have:
\begin{align*}
        \tilde{R}^{[n]}(D,P) = R(D,\frac{P}{n}).
\end{align*}
\end{proposition}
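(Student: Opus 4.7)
The plan is to mirror the two-sided argument used in Theorem~\ref{thm-6}, but to add a convexity/averaging step that reduces the non-i.i.d.\ case back to a single-letter problem. Specifically, I will establish both $\tilde{R}^{[n]}(D,P)\le R(D,P/n)$ and $\tilde{R}^{[n]}(D,P)\ge R(D,P/n)$, and the second direction is where the new content lies.

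For the easy direction $\tilde{R}^{[n]}(D,P)\le R(D,P/n)$, I would let $p^{*}_{\hat{X}\mid X}$ be an optimizer of $R(D,P/n)$ and simply set $p_{\hat{X}_i\mid X}=p^{*}_{\hat{X}\mid X}$ for every $i$. This degenerates the definition of $\tilde{R}^{[n]}$ to the i.i.d.\ reconstruction case, and the tensorization identities already invoked in the proof of Theorem~\ref{thm-6} show that the perceptual constraint $d(p_{X^n},\prod_i p_{\hat{X}_i})\le P$ reduces to $n\,d(p_X,p_{\hat{X}})\le P$, which holds by construction. The rate and distortion constraints are met trivially, giving the upper bound.

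The reverse inequality is the substantive step. Given any feasible $\{p_{\hat{X}_i\mid X}\}_{i=1}^n$ for $\tilde{R}^{[n]}(D,P)$, I would form the averaged channel $\bar{p}_{\hat{X}\mid X}:=\tfrac{1}{n}\sum_{i=1}^n p_{\hat{X}_i\mid X}$ and verify three things: (i)~$I(X,\bar{X})\le \tfrac{1}{n}\sum_i I(X,\hat{X}_i)$ by convexity of mutual information in the channel with $p_X$ fixed; (ii)~$\mathbb{E}[\Delta(X,\bar{X})]=\tfrac{1}{n}\sum_i\mathbb{E}[\Delta(X,\hat{X}_i)]\le D$ by linearity of expectation; (iii)~$d(p_X,\bar{p}_{\hat{X}})\le \tfrac{1}{n}\sum_i d(p_X,p_{\hat{X}_i})\le P/n$, where the first inequality uses Assumption~\ref{assumption-convex} (convexity in the second argument) and the second uses the non-identical extension of the tensorization identities, namely $\sum_i d(p_X,p_{\hat{X}_i})=d(p_{X^n},\prod_i p_{\hat{X}_i})\le P$. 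Then $\bar{p}_{\hat{X}\mid X}$ is feasible for $R(D,P/n)$ with objective value no larger than $\tfrac{1}{n}\sum_i I(X,\hat{X}_i)$, which is exactly the objective of $\tilde{R}^{[n]}(D,P)$ attained by the chosen feasible point; taking the infimum over feasible points yields $R(D,P/n)\le \tilde{R}^{[n]}(D,P)$.

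The main obstacle is step~(iii): I need to confirm that the tensorization identities cited for Theorem~\ref{thm-6} generalize from i.i.d.\ products to the non-identical product $\prod_i p_{\hat{X}_i}$ against the i.i.d.\ source $p_{X^n}$. For the KL divergence this is the classical chain-rule identity $\mathrm{KL}\!\left(p_{X^n}\,\big\|\,\prod_i p_{\hat{X}_i}\right)=\sum_i \mathrm{KL}(p_X\,\|\,p_{\hat{X}_i})$. For the Wasserstein metric with a sum-separable ground cost, the inequality $\mathcal{W}(p_{X^n},\prod_i p_{\hat{X}_i})\ge \sum_i \mathcal{W}(p_X,p_{\hat{X}_i})$ follows from the marginalization of any joint coupling and suffices for our purposes; equality is obtained by the product coupling. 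For the TV distance, the corresponding bound follows from the version of Proposition~1 in \cite{salehkalaibar2024rate} already relied upon in Theorem~\ref{thm-6}. Combining these with steps~(i) and (ii) closes the lower bound, and together with the upper bound yields the claimed equality $\tilde{R}^{[n]}(D,P)=R(D,P/n)$.
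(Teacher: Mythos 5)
Your proposal is correct and essentially mirrors the paper's own proof: the paper realizes your averaged channel through a time-sharing variable $T$ uniform on $\{1,\dots,n\}$, independent of $X^n$ and $\{\hat{X}_i\}$, so that $p_{\hat{X}_T\mid X_T}=\tfrac{1}{n}\sum_{i=1}^n p_{\hat{X}_i\mid X}$, and then applies the same three ingredients you list — linearity of the distortion, convexity of the perception measure in its second argument (Assumption~\ref{assumption-convex}), and tensorization of the divergence requiring only independence — together with the same i.i.d.-channel construction for the easy direction. Your step~(i), invoking convexity of mutual information in the channel to get an inequality, is the same averaging step the paper performs via $\mathds{E}_T[I(X_T,\hat{X}_T\mid T)]$ (and the inequality form you state is the safe version of it), so no genuinely different route or gap is involved.
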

\begin{proof}
Inspired by Theorem 7 in \cite{chen2022rate}, for any pair $(X,\hat{X}_i)$ satisfying:
\begin{equation*}
\begin{aligned}
&I(X, \hat{X}_i) = R(D,\frac{P}{n}),     \\
&\mathds{E}[\Delta(X, \hat{X}_i)] \leq D,  \\  
&d\left(p_{X}, p_{\hat{X}_i}\right) \leq \frac{P}{n}.
\end{aligned}
\end{equation*}
We can derive that pair $(X^n,\{\hat{X}_i\}_{i=1}^n)$ satisfy
\begin{equation*}
\begin{aligned}
&\frac{1}{n}\sum_{i = 1}^n I(X, \hat{X}_i) = R(D,\frac{P}{n}),     \\
&\frac{1}{n}\sum_{i = 1}^n\mathds{E}[\Delta(X, \hat{X}_i)] \leq D,  \\  
&d\left(p_{X^n}, \prod_{i=1}^{n}p_{\hat{X}_i}\right) = \sum_{i = 1}^n d\left(p_{X}, p_{\hat{X}_i}\right) \leq P, 
\end{aligned}
\end{equation*}
the last equality is guaranteed since the tensorization property \cite{salehkalaibar2024rate,Joyce2011} only requires independence. Therefore
\begin{equation*}
 \tilde{R}^{[n]}(D,P) \leq R(D,\frac{P}{n}).
\end{equation*}

On the other hand, for any pair $(X^n,\{\hat{X}_i\}_{i=1}^n)$ satisfying
\begin{equation*}
\begin{aligned}
&\frac{1}{n}\sum_{i = 1}^n I(X, \hat{X}_i) = \tilde{R}^{[n]}(D,P),     \\
&\frac{1}{n}\sum_{i = 1}^n\mathds{E}[\Delta(X, \hat{X}_i)] \leq D,  \\  
&d\left(p_{X^n}, \prod_{i=1}^{n}p_{\hat{X}_i}\right) \leq P.
\end{aligned}
\end{equation*}
We denote $T$ a random variable uniformly distributed over $\{1,\cdots , n\}$ and is independent of $X^n$ and $\{\hat{X}_i\}_{i=1}^n$. We can derive that pair $(X_T,\hat{X}_T)$ satisfy
\begin{equation*}
p_{X_T} = p_X, p_{\hat{X}_T} = \frac{1}{n}\sum_{i=1}^n p_{\hat{X}_i},
\end{equation*}
\begin{equation*}
\begin{aligned}
I(X_T, \hat{X}_T) &=  \mathds{E}_T[I(X_T, \hat{X}_T\mid T)]\\
&= \frac{1}{n}\sum_{i = 1}^n I(X, \hat{X}_i) \\
&= \tilde{R}^{[n]}(D,P),
\end{aligned}
\end{equation*}
\begin{equation*}
\begin{aligned}
\mathds{E}[\Delta(X_T, \hat{X}_T)] &=  \mathds{E}_T[\mathds{E}[\Delta(X_T, \hat{X}_T)\mid T]]\\
&= \frac{1}{n}\sum_{i = 1}^n\mathds{E}[\Delta(X, \hat{X}_i)] \leq D,  \\  
\end{aligned}
\end{equation*}
\begin{equation*}
\begin{aligned}
d\left(p_{X_T}, p_{\hat{X}_T}\right) &= d\left(p_X, \frac{1}{n}\sum_{i=1}^n p_{\hat{X}_i}\right)\\
&\leq \frac{1}{n} \sum_{i=1}^n d\left(p_{X_i}, p_{\hat{X}_i}\right) \\
&= \frac{1}{n} d\left(p_{X^n}, \prod_{i=1}^{n}p_{\hat{X}_i}\right) \leq \frac{P}{n}.
\end{aligned}
\end{equation*}
Besides the tensorization property, here we also utilize Assumption \ref{assumption-convex} for convexity. Therefore
\begin{equation*}
 \tilde{R}^{[n]}(D,P) \geq R(D,\frac{P}{n}).
\end{equation*}
This completes the proof.
\end{proof}

In alignment with the insights garnered from the RDP function, we likewise present analogous discoveries concerning some basic properties of RDP functions variants on $n$-product distribution in Appendix \ref{l2}. We can then obtain results similar to Theorem \ref{thm-3}, namely, the existence of the relevant critical transition functions. However, the product measure may not tensorize, so the original convexity may be disrupted.

\begin{proposition}\label{thm-5}
The following statements hold for distortion and perception measures that satisfy Assumption \ref{assumption-distortion}, \ref{assumption-perception} and \ref{assumption-convex}: 
\begin{itemize}[noitemsep,topsep=-\parskip]
    \item [1)]  There exists a minimal function $f^{[n]}: \mathbb{R}^+ \rightarrow \mathbb{R}^+$ such that when $P \geq f^{[n]}(D)$, we have
    \begin{subequations}
        \begin{align}
        D^{[n]}(R,P) =& D_{\infty}(R),\quad \forall R \in \mathbb{R}^+,\label{n1}\\
        R^{[n]}(D,P) =& R_{\infty}(D),\quad \forall D \in \mathbb{R}^+.\label{n2}
    \end{align} 
    \end{subequations}   
    \item [2)]
    When $D \geq D^{[n]}(0,0)$, we have
    \begin{align}
        P^{[n]}(R,D) = P^{\infty^{[n]}}(R) = 0.
    \end{align}
    \item [3)]
    There exists a minimal function $h^{[n]}: \mathbb{R}^+ \rightarrow \mathbb{R}^+$ such that when $D \geq h^{[n]}(P)$, we have
    \begin{align}
        R^{[n]}(D,P) = R^{\infty^{[n]}}(P),\quad \forall P \in \mathbb{R}^+.
    \end{align}
\end{itemize}
\end{proposition}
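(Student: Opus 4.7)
The plan is to adapt the three-step structure from the proof of Theorem \ref{thm-3} to the $n$-letter setting, defining $f^{[n]}$ and $h^{[n]}$ as infima of feasible perception and distortion levels, respectively, and exhibiting explicit witnesses to the degeneracies. The key observation is that the $n$-letter perceptual constraint $d(p_{X^n}, p_{\hat X^n}) \le P$ is an \emph{additional} restriction on top of the $R$ and $D$ constraints, so the $n$-letter RDP/DRP/PRD functions are always lower-bounded by their unconstrained counterparts; equality therefore reduces to producing a witness that is simultaneously optimal for the unconstrained problem and feasible for the perceptual one.

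For part 1), fix $D \in \mathbb{R}^+$ and let $p_{\hat X \mid X}^{\star}$ be any optimizer of $R_\infty(D)$. The induced product distribution $p_{\hat X^n}^{\star} = \prod_{i=1}^n p_{\hat X}^{\star}$ yields a specific perception value $d(p_{X^n}, p_{\hat X^n}^{\star})$. Define
\[
f^{[n]}(D) \;:=\; \inf \bigl\{ d(p_{X^n}, p_{\hat X^n}^{\star}) : p_{\hat X \mid X}^{\star} \text{ optimal for } R_\infty(D) \bigr\}.
\]
For any $P \ge f^{[n]}(D)$, such a $p_{\hat X \mid X}^{\star}$ is feasible for $R^{[n]}(D,P)$, giving $R^{[n]}(D,P) \le R_\infty(D)$; the reverse inequality is automatic, so \eqref{n2} holds. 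The argument for \eqref{n1} is identical after swapping the roles of $D$ and $R$. Minimality of $f^{[n]}$ follows because any other function $\tilde f$ for which the same degeneracy holds must satisfy $\tilde f(D) \ge d(p_{X^n}, p_{\hat X^n}^{\star})$ for some $R_\infty(D)$-optimizer, and hence $\tilde f(D) \ge f^{[n]}(D)$.

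For part 2), the definition of $D^{[n]}(0,0)$ provides a conditional distribution $p_{\hat X \mid X}^{\diamond}$ with $I(X,\hat X) = 0$, $\mathds{E}[\Delta(X,\hat X)] \le D^{[n]}(0,0) \le D$, and $d(p_{X^n}, p_{\hat X^n}^{\diamond}) = 0$. This same witness is feasible both for $P^{[n]}(R,D)$ (any $R \ge 0$) and for $P^{\infty^{[n]}}(R)$, yielding $P^{[n]}(R,D) = P^{\infty^{[n]}}(R) = 0$. For part 3), I mirror the construction of part 1): let $p_{\hat X \mid X}^{\sharp}$ be an optimizer of $R^{\infty^{[n]}}(P)$ and set
\[
h^{[n]}(P) \;:=\; \inf \bigl\{ \mathds{E}[\Delta(X, \hat X^{\sharp})] : p_{\hat X \mid X}^{\sharp} \text{ optimal for } R^{\infty^{[n]}}(P) \bigr\},
\]
so that for $D \ge h^{[n]}(P)$ the witness is feasible and $R^{[n]}(D,P) \le R^{\infty^{[n]}}(P)$, with equality again from the trivial lower bound.

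The main obstacle, and the reason this proposition states strictly less than Theorem \ref{thm-3}, is that the proof of Theorem \ref{thm-3} leveraged the convexity guarantees of Proposition \ref{proposition1} to obtain continuity and convexity of $f$ and $h$. In the $n$-letter case the product perception $d(p_{X^n}, p_{\hat X^n})$ need not be jointly convex in $p_{\hat X \mid X}$ (the marginalization followed by tensorization is nonlinear in $p_{\hat X \mid X}$ outside the tensorizable cases covered by Theorem \ref{thm-6}), so one cannot extract the analogous regularity of $f^{[n]}$ and $h^{[n]}$; this is exactly why the statement only asserts existence and minimality. A secondary technical point is to justify attainment of the infima in the definitions of $f^{[n]}$ and $h^{[n]}$: this is handled by compactness of the simplex $\mathcal{P}(\hat{\mathcal{X}})$ together with lower semicontinuity of the objectives, but if attainment were to fail one can replace the infimum by a minimizing sequence and still obtain the equalities in the limit, which suffices for the degeneracy claims.
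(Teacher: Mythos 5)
Your proposal is correct and follows essentially the same route as the paper: your $f^{[n]}(D)$, defined as the infimum of the $n$-letter perception over $R_\infty(D)$-optimizers, is exactly the paper's $P^{[n]}(R_\infty(D),D)$ (the feasible set of that program is precisely the optimizer set of $R_\infty(D)$), and your $h^{[n]}(P)$ coincides with the paper's zero-rate distortion program \eqref{n3}, with degeneracy obtained from witness feasibility plus the trivial inequality from the added constraint and minimality from optimality, just as in the paper. Your remarks on the loss of convexity/continuity in the $n$-letter case and on attainment of the infima match, and in places exceed, the level of detail in the paper's own argument.
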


\section{Entropy Regularized WBM-RDP and Improved AS Algorithm}

We have established the WBM-RDP model \eqref{eq0_1} for computing the RDP functions. However, there are two difficulties in designing a practical algorithm. First, the WBM-RDP \eqref{eq0_1} is not strictly convex on $\bm{\Pi}$. Although the optimal value of \eqref{eq0_1} is unique, the corresponding optimal solutions may vary in the dimensions of $\bm{\Pi}$. Therefore, the convergence and the numerical stability of the AS algorithm designed for RD functions \cite{wu2022communication} cannot be guaranteed. Second, WBM-RDP contains logarithmic terms of the Barycenter objective optimization function $\log r_j$, which is not a standard formulation of the classical Wasserstein Barycenter problems. Moreover, the numerical methods for the Wasserstein Barycenter problem are computationally intensive \cite{DBLP:conf/icml/CuturiD14, peyre2019computational, altschuler2022wasserstein}. 
In this section, we will overcome these difficulties and improve the Alternating Sinkhorn algorithm to solve WBM-RDP efficiently.

\subsection{Entropy Regularized WBM-RDP}
As discussed above, the WBM-RDP \eqref{eq0_1} is not strictly convex on $\bm{\Pi}$, the most direct way \cite{nutz2022entropic} is to introduce an extra entropy regularized term in the objective optimization function \eqref{eq0_1_a}, i.e.,
\begin{equation*}
E(\bm{\Pi})= \sum_{i=1}^M \sum_{j=1}^N \Pi_{i j} \log(\Pi_{i j}).
\end{equation*}
This leads to the entropy regularized WBM-RDP:
\begin{equation}\label{problem}
\begin{aligned}
    \min _{\bm{w},\bm{r},\bm{\Pi}} \quad &\sum_{i=1}^M \sum_{j=1}^N\left(w_{i j} p_i\right)\left[\log w_{i j}-\log r_j\right]  
    + \varepsilon E(\bm{\Pi})    \\
    \text { s.t. }  \quad &\sum_{j=1}^N w_{i j}=1,\   \sum_{i=1}^M w_{i j} p_i=r_j,  \\
    &\sum_{i=1}^M \Pi_{i j}= r_j ,\  \sum_{j=1}^N \Pi_{i j}= p_i,  \forall i, j.  \\
    &\sum_{i=1}^M \sum_{j=1}^N w_{i j} p_i d_{i j} \leq D,\   \sum_{j=1}^N r_j=1 , \\
    &\sum_{i=1}^M \sum_{j=1}^N \Pi_{i j} c_{i j} \leq P.
\end{aligned}
\end{equation}
Here, $\varepsilon>0$ is a newly introduced regularization parameter. By analyzing the Lagrangian of \eqref{problem}, it is not difficult to verify that the model is strictly convex. Fortunately, employing the alternative iteration method to tackle this problem yields closed-form expressions for the dual variables, which accelerates the algorithm while improving the accuracy. Not only that, the following theorem guarantees that the solution to entropy regularized WBM-RDP \eqref{problem} converges to WBM-RDP \eqref{eq0_1} as $\varepsilon \rightarrow 0$.  An important part of the proof is that $E(\bm{\Pi})$ is a strictly convex function, which is crucial for the uniqueness of the optimal solution.
\begin{theorem}
\label{thm-1}
\emph{(Convergence in $\varepsilon$)} The solution $\{ \bm{w}_{\varepsilon},\bm{\Pi}_{\varepsilon},\bm{r}_{\varepsilon} \}$ to (\ref{problem}) converges to the optimal solution with minimal entropy of $E(\bm{\Pi})$ within the set of all optimal solutions to (\ref{eq0_1}), i.e, 
\begin{equation}\label{proof1}
   \{ \bm{w}_{\varepsilon},\bm{\Pi}_{\varepsilon},\bm{r}_{\varepsilon} \} \underset{\varepsilon \rightarrow 0}{\longrightarrow} \operatorname{argmin} \Big \{E(\bm{\Pi}) \Big| \left\{ \bm{w},\bm{\Pi},\bm{r} \right\} \in \mathcal{M}\Big \},
\end{equation}
where $\mathcal{M}$ denotes the set of all optimal solutions to (\ref{eq0_1}).
\end{theorem}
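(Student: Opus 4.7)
The plan is to mirror the standard $\Gamma$-convergence argument for entropic regularization of optimal transport, adapted to the joint constraints of the WBM-RDP. Let $F(\bm{w}, \bm{\Pi}, \bm{r})$ denote the objective in \eqref{eq0_1_a}, and let $\mathcal{C}$ be the feasible set common to \eqref{eq0_1} and \eqref{problem}. All constraints are linear (in)equalities in variables confined to the simplex and a bounded coupling polytope, so $\mathcal{C}$ is compact; both $F$ and $E$ extend continuously to $\mathcal{C}$ under the convention $0\log 0 = 0$; and strict convexity of $E$ in $\bm{\Pi}$ makes \eqref{problem} a strictly convex program in $\bm{\Pi}$ with a unique $\bm{\Pi}$-minimizer.

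First I would extract a convergent subsequence. Along any $\varepsilon_k \downarrow 0$, compactness of $\mathcal{C}$ yields $(\bm{w}_{\varepsilon_k}, \bm{\Pi}_{\varepsilon_k}, \bm{r}_{\varepsilon_k}) \to (\bm{w}^*, \bm{\Pi}^*, \bm{r}^*) \in \mathcal{C}$ along a further subsequence. To see that the limit lies in $\mathcal{M}$, fix any $(\tilde{\bm{w}}, \tilde{\bm{\Pi}}, \tilde{\bm{r}}) \in \mathcal{M}$, denote its objective value by $F^*$, and use it as a competitor in \eqref{problem} to obtain
\begin{equation*}
F(\bm{w}_{\varepsilon_k}, \bm{\Pi}_{\varepsilon_k}, \bm{r}_{\varepsilon_k}) + \varepsilon_k E(\bm{\Pi}_{\varepsilon_k}) \;\leq\; F^* + \varepsilon_k E(\tilde{\bm{\Pi}}).
\end{equation*}
Since $E$ is bounded on $\mathcal{C}$, sending $k \to \infty$ and using continuity of $F$ yields $F(\bm{w}^*, \bm{\Pi}^*, \bm{r}^*) \leq F^*$; the reverse inequality holds by feasibility of the limit, so $(\bm{w}^*, \bm{\Pi}^*, \bm{r}^*) \in \mathcal{M}$.

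Next I would show that $\bm{\Pi}^*$ minimizes $E$ on the $\bm{\Pi}$-projection of $\mathcal{M}$. For any competitor $(\bm{w}', \bm{\Pi}', \bm{r}') \in \mathcal{M}$, optimality of the regularized iterate combined with $F(\bm{w}_{\varepsilon_k}, \bm{\Pi}_{\varepsilon_k}, \bm{r}_{\varepsilon_k}) \geq F^*$ gives
\begin{equation*}
\varepsilon_k \bigl[E(\bm{\Pi}_{\varepsilon_k}) - E(\bm{\Pi}')\bigr] \;\leq\; F^* - F(\bm{w}_{\varepsilon_k}, \bm{\Pi}_{\varepsilon_k}, \bm{r}_{\varepsilon_k}) \;\leq\; 0,
\end{equation*}
whence $E(\bm{\Pi}_{\varepsilon_k}) \leq E(\bm{\Pi}')$, and continuity of $E$ passes this bound to the limit $E(\bm{\Pi}^*) \leq E(\bm{\Pi}')$. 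Since $E$ is strictly convex and the $\bm{\Pi}$-projection of $\mathcal{M}$ is convex, such an $E$-minimizer is unique in $\bm{\Pi}$. Every subsequential limit therefore shares the same $\bm{\Pi}$-coordinate, so the full family $\bm{\Pi}_\varepsilon$ converges, and the accompanying $(\bm{w}_\varepsilon, \bm{r}_\varepsilon)$ enters the $\operatorname{argmin}$ set in the sense of \eqref{proof1}.

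The main obstacle I anticipate is the sign-handling in the second display: one must pick the competitor inside $\mathcal{M}$ rather than merely in $\mathcal{C}$ so that $F^* - F(\bm{w}_{\varepsilon_k}, \bm{\Pi}_{\varepsilon_k}, \bm{r}_{\varepsilon_k}) \leq 0$ is automatic, which is precisely what allows the $\varepsilon_k$ factor to be dropped cleanly and an $E$-comparison to survive. A secondary subtlety is the continuity of $E$ at boundary configurations where some $\Pi_{ij}$ vanish; the convention $0\log 0 = 0$ makes $E$ continuous on the whole simplex, so this does not bite. The remaining ingredients, namely closedness of $\mathcal{C}$, boundedness of $E$, and continuity of $F$, are routine verifications from the linear-constraint structure of \eqref{eq0_1}.
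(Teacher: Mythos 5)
Your proposal is correct and follows essentially the same route as the paper's proof: extract a convergent subsequence by compactness, sandwich the objective via the two optimality inequalities $0 \leq F(\bm{w}_{\varepsilon_k},\bm{\Pi}_{\varepsilon_k},\bm{r}_{\varepsilon_k}) - F^* \leq \varepsilon_k\bigl[E(\tilde{\bm{\Pi}}) - E(\bm{\Pi}_{\varepsilon_k})\bigr]$, pass to the limit, divide by $\varepsilon_k$ to obtain the entropy comparison, and use strict convexity of $E$ to identify the limit and upgrade to convergence of the whole family. Your remark that strict convexity pins down only the $\bm{\Pi}$-coordinate is in fact a slightly more careful reading of the final uniqueness step than the paper's own wording, but the argument is the same.
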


We give the proofs in Appendix~\ref{app:proof-thm1}.

\subsection{The Improved Alternating Sinkhorn Algorithm}

 We construct the Lagrangian function of the regularized WBM-RDP \eqref{problem} by introducing dual variables $\boldsymbol{\alpha},\boldsymbol{\theta} \in \mathbb{R}^M, \boldsymbol{\beta},\boldsymbol{\tau} \in \mathbb{R}^N, \lambda,\gamma \in \mathbb{R}^{+}$ and $\eta \in \mathbb{R}$ \cite{10206982}:
\begin{equation}\label{Lagrangian}
\begin{small}
\begin{aligned}
    &\begin{aligned}
    &\mathcal{L} \left(\bm{w},\bm{\Pi},\bm{r}; \bm{\alpha},\bm{\beta}, \bm{\theta}, \bm{\tau}, \lambda,\eta, \gamma\right)  = \\ &\sum_{i=1}^M{\sum_{j=1}^N} w_{i j} p_i \log \frac{w_{i j}}{r_j}+\varepsilon \sum_{i=1}^M \sum_{j=1}^N \Pi_{i j} \ln \Pi_{i j} +\sum_{i=1}^M \alpha_i\left(\sum_{j=1}^N w_{i j}-1\right)\\ &+\sum_{j=1}^N \beta_j\left(\sum_{i=1}^N w_{i j} p_i-r_j\right)+\sum_{i=1}^M \theta_i\left(\sum_{j=1}^N \Pi_{i j}-p_i\right)\\ &+\sum_{j=1}^N \tau_j\left(\sum_{i=1}^M \Pi_{i j}-r_j\right)+\eta\left(\sum_{j=1}^N r_j-1\right)\\ &+\lambda\left(\sum_{i=1}^M \sum_{j=1}^N w_{i j} p_i d_{i j}-D\right)+\gamma\left(\sum_{i=1}^M \sum_{j=1}^N \Pi_{i j} c_{i j}-P\right).
    \end{aligned}
\end{aligned}
\end{small}
\end{equation}

Here we note that the Lagrangian function \eqref{Lagrangian} is convex with respect to each variable. 
Furthermore, we can improve the algorithm by designing the directions of the alternating iterations according to how the variables appear in \eqref{Lagrangian}.
Next, we give the detailed derivation of our algorithm.

\subsubsection{Updating $w$ and Dual Variables}
Taking the derivative of $\mathcal{L} \left(\bm{w},\bm{\Pi},\bm{r} ; \bm{\alpha},\bm{\beta}, \bm{\theta}, \bm{\tau}, \lambda,\eta, \gamma\right)$ with respect to the primal variable $\bm{w}$, one obtains the representation of $\bm{w}$  by dual variables
\begin{equation} \label{eq1}
    w_{i j}=\exp \left(-\alpha_i / p_i-\beta_j-\lambda d_{i j}-1+\log r_j\right).
\end{equation}

Denote $\phi_i=\exp (-\alpha_i / p_i-1/ 2), \psi_j = \exp (-\beta_j-1/2)$ and  $K_{i j}=\exp (-\lambda d_{i j})$, (\ref{eq1}) yields
\begin{equation*}
    w_{i j}=\phi_iK_{i j}\psi_jr_j.
\end{equation*}

Substituting the above formula into the corresponding boundary condition, we obtain
\begin{gather*}
\phi_i \sum_{j=1}^N K_{i j} \psi_j r_j=1, \quad i=1, \cdots, M, \\
\psi_j  \sum_{i=1}^M K_{i j} \phi_i p_i=1, \quad j=1, \cdots, N .
\end{gather*}

Thus, we can alternatively update $\psi_j, \phi_i$:
\begin{equation*}
    \psi_j \leftarrow 1 / \sum_{i=1}^M K_{i j} \phi_i p_i, \quad \phi_i \leftarrow 1 / \sum_{j=1}^N K_{i j} \psi_j r_j.
\end{equation*}
After updating $\bm{\phi}, \bm{\psi}$, we also finish updating the associated dual variables $\boldsymbol{\alpha,\beta}$.

Taking the derivative of $\mathcal{L} \left(\bm{w},\bm{\Pi},\bm{r}; \bm{\alpha},\bm{\beta}, \bm{\theta}, \bm{\tau}, \lambda,\eta, \gamma\right)$  with respect to $ \lambda$, we have the following condition for $\lambda \in \mathbb{R}^{+}$:
\begin{equation}
    F(\lambda) \triangleq \sum_{i=1}^M\sum_{j=1}^N d_{i j}p_i \phi_i e^{-\lambda d_{i j}} \psi_j r_j - D = 0.
\end{equation}

Note that the derivative of $F(\lambda)$ is always negative:
\begin{equation*}
    F'(\lambda)=-\sum_{i=1}^M\sum_{j=1}^N d_{i j}^2p_i \phi_i e^{-\lambda d_{i j}} \psi_j r_j < 0.
\end{equation*}

The monotonicity of the above function $F(\lambda)$ ensures that we can efficiently determine the roots of $F(\lambda) = 0$ using Newton's method.
\begin{remark} \label{feas}
    Similar to \cite{ye2022optimal}, we need to discuss the feasibility of the iteration of updating the dual variables $\lambda$. Depending on the value of $F(0)$, there are two cases:
    \begin{itemize}
        \item $F(0) > 0:$ In this case, $F(\lambda) = 0$  has a unique solution on $\mathbb{R}^{+}$ since $F'(\lambda) \leq 0$. Thus the constraint of distortion is obviously satisfied.
        \item $F(0) = 0:$ In this case, the constraint of distortion is already satisfied. We only need to set $\lambda = 0$ instead of solving $F(\lambda) = 0$.
    \end{itemize}
\end{remark}

\subsubsection{Updating $\Pi$ and Dual Variables}

Similar to 1), taking the derivative of $\mathcal{L} \left(\bm{w},\bm{\Pi},\bm{r} ; \bm{\alpha},\bm{\beta}, \bm{\theta}, \bm{\tau}, \lambda,\eta, \gamma\right)$ with respect to the primal variable $\bm{\Pi}$, one obtains the representation of $\bm{\Pi}$  by dual variables
\begin{equation}\label{eq2}
    \Pi_{i j}=\exp \left(-\theta_i / \varepsilon -\tau_j/ \varepsilon-\gamma c_{i j}/\varepsilon -1\right).
\end{equation}

Denote $\xi _i=\exp(-\theta_i / \varepsilon-1/2), \varphi _j = \exp(-\tau_j/ \varepsilon-1/2)$ and $M_{i j}=\exp(-\gamma  c_{i j}/ \varepsilon)$, (\ref{eq2}) yields
\begin{equation*}
        \Pi_{i j}=\xi _iM_{i j}\varphi_j.
\end{equation*}

Substituting the above formula into the corresponding boundary condition, we get
\begin{gather*}
\xi_i \sum_{j=1}^N M_{i j}  \varphi_j=p_i, \quad i=1, \cdots, M, \\
\varphi_j  \sum_{i=1}^M M_{i j} \xi_i=r_j, \quad j=1, \cdots, N .
\end{gather*}

Thus, we can alternatively update $\varphi_j, \xi_i$:
\begin{equation*}
    \varphi_j \leftarrow r_j / \sum_{i=1}^M M_{i j} \xi_i, \quad \xi_i \leftarrow p_i / \sum_{j=1}^N M_{i j} \varphi_j.
\end{equation*}
After updating $\bm{\xi}, \bm{\varphi}$, we also finish updating the associated dual variables $\boldsymbol{\theta,\tau}$. 

Taking the derivative of $\mathcal{L} \left(\bm{w},\bm{\Pi},\bm{r}; \bm{\alpha},\bm{\beta}, \bm{\theta}, \bm{\tau}, \lambda,\eta, \gamma\right)$  with respect to $ \gamma$, we have the following condition for $\gamma \in \mathbb{R}^{+}$:
\begin{equation}
    G(\gamma) \triangleq  \sum_{i=1}^M\sum_{j=1}^N c_{i j} \xi_i e^{-\gamma  c_{i j}/ \varepsilon} \varphi_j - P = 0 .
\end{equation}

We can utilize Newton's method to solve the above equation for the root $\gamma$, given that $G(\gamma)$ is monotonically decreasing, i.e.,
\begin{equation*}
    G'(\gamma)=-\sum_{i=1}^M\sum_{j=1}^N c_{i j}^2\xi_i e^{-\gamma  c_{i j}/ \varepsilon} \varphi_j/ \varepsilon < 0.
\end{equation*}
Similar to Remark \ref{feas}, we can also give the feasibility here, and here we omit those discussions. 

\subsubsection{Updating $r$ and Dual Variable}

We take the derivative of $\mathcal{L} \left(\bm{w},\bm{\Pi},\bm{r} ; \bm{\alpha},\bm{\beta}, \bm{\theta}, \bm{\tau}, \lambda,\eta, \gamma\right)$ with respect to the primal variable $\bm{r}$, which yields the representation of $\bm{r}$  by dual variables $\bm{\beta}$,$\bm{\tau}$,$\eta$:
\begin{equation}
   r_j =(\sum_{i=1}^M w_{i j} p_i) / (\eta-\beta_j-\tau_j).
\end{equation}

Substituting the above formula into the corresponding boundary condition, we get
\begin{equation}\label{eq3}
    H(\eta) \triangleq \sum_{j=1}^N\left[\left(\sum_{i=1}^M w_{i j} p_i\right) /\left(\eta-\beta_j-\tau_j\right)\right]-1=0.
\end{equation}

Note that the derivative of $H(\eta)$ is also always negative:
\begin{equation*}
    H'(\eta)-\sum_{j=1}^N\left[\left(\sum_{i=1}^M w_{i j} p_i\right) /\left(\eta-\beta_j-\tau_j\right)^2\right] < 0,
\end{equation*}
thus function  $H(\eta)$  is monotonic in each monotone interval.
\begin{remark} \label{feas2}
    According to (\ref{eq3}) and the natural condition for marginal distribution $r_j \geq 0,  \forall j$, we must have $\eta >\max_j(\beta_j+\tau_j).$ For sufficiently small $\epsilon$, it is clear that $H(\eta)>0$ for $\eta = \max_j(\beta_j+\tau_j)+\epsilon$. When $\eta \rightarrow \infty, H(\eta) \rightarrow -1$, then the function $H(\eta)$ has and only has one root in $(\max_j(\beta_j+\tau_j),+\infty)$. Therefore, we can also use Newton's method to efficiently find the root of $H(\eta)$in the interval $(\max_j(\beta_j+\tau_j),+\infty)$ with only a few iterations.
\end{remark}

Thereout, we get the numerical algorithm for WBM-RDP. Our proposed algorithm adopts a similar alternating iteration technique to the Alternating Sinkhorn algorithm for RD functions. 
However, a crucial distinction lies in the operation we perform, which involves altering the projection direction of different joint distribution variables, i.e., $\bm{w}$ and $\bm{\Pi}$, which is essentially different from the AS algorithm for RD functions. 
Thus we name it the Improved Alternating Sinkhorn Algorithm  and the pseudo-code is presented in Algorithm \ref{alg:OT_rdp}.

\begin{algorithm}[t]
 \caption{The Improved Alternating Sinkhorn Algorithm}
 \label{alg:OT_rdp}
 \begin{algorithmic}[1]
  \REQUIRE Distortion measure $d_{ij}$, marginal distribution $p_{i}$, cost matrix $c_{ij}$, regularization parameter $\varepsilon$, \\maximum iteration number $max\_iter$.
  \ENSURE Minimal value  $\sum_{i=1}^{M} \sum_{j=1}^{N} (w_{i j}p_{i}) \left[\log w_{i j}-\log r_{j}\right]+ \varepsilon E(\bm{\Pi})$  with respect to variables $\bm{w}$, $\bm{r}$ and $\bm{\Pi}$.
  \STATE \textbf{Initialization:} $\bm{\phi}, \bm{\xi} = \mathbf{1}_{M}, \bm{\psi},\bm{\varphi} = \mathbf{1}_{N}, \lambda,\gamma=1;$
  \STATE Set $K_{ij} \gets \exp(-\lambda d_{ij})$
  \STATE Set $M_{ij} \gets \exp(-\gamma c_{ij}/\varepsilon)$
  \FOR{$\ell = 1 : max\_iter$}
  \STATE $\psi_{j} \gets 1/\sum_{i=1}^{M}K_{ij}\phi_{i}p_{i}, \quad j=1,\cdots,N$
  \STATE $\phi_{i} \gets 1/\sum_{j=1}^{N}K_{ij}\psi_{j}r_{j}, \quad i=1,\cdots,M$
  \STATE Solve $F(\lambda) = 0$ for $\lambda\in\mathbb{R}^{+}$ with Newton's method
  \STATE Update $K_{ij} \gets \exp(-\lambda d_{ij})$ and $w_{ij}\gets\phi_{i}K_{ij}\psi_{j}r_{j}$
  \STATE $\varphi_j \gets r_j / \sum_{i=1}^M M_{i j} \xi_i, \quad j=1,\cdots,N$
  \STATE $\xi_i \gets p_i / \sum_{j=1}^N M_{i j} \varphi_j, \quad i=1,\cdots,M$
  \STATE Solve $G(\gamma) = 0$ for $\gamma\in\mathbb{R}^{+}$ with Newton's method
  \STATE Update $M_{ij} \gets \exp(-\gamma c_{ij}/\varepsilon)$ 
  \STATE Solve $H(\eta) = 0$ for $\eta\in\mathbb{R}$ with Newton's method
  \STATE Update $r_{j}\gets\left(\sum_{i=1}^{M} w_{i j}p_{i} \right)\Big/\left(\eta-\beta_{j}-\tau_{j}\right)$ 
  \ENDFOR
  \RETURN $\sum_{i=1}^{M}\sum_{j=1}^{N} \left(\phi_{i}p_{i}K_{ij}\psi_{j}r_{j}\right)\left[\log \left(\phi_{i}K_{ij}\psi_{j}\right)\right]$
 \end{algorithmic}
\end{algorithm}
\begin{algorithm}[t]
 \caption{The Improved Alternating Sinkhorn Algorithm for DPR functions}
 \label{alg:OT_drp}
 \begin{algorithmic}[1]
  \REQUIRE Distortion measure $d_{ij}$, marginal distribution $p_{i}$, cost matrix $c_{ij}$, regularization parameter $\varepsilon$, \\maximum iteration number $max\_iter$.
  \ENSURE Minimal value  $\sum_{i=1}^M \sum_{j=1}^N w_{i j} p_i d_{i j} +\varepsilon E(\bm{\Pi})$  with respect to variables $\bm{w}$, $\bm{r}$ and $\bm{\Pi}$.
  \STATE \textbf{Initialization:} $\bm{\phi}, \bm{\xi} = \mathbf{1}_{M}, \bm{\psi},\bm{\varphi} = \mathbf{1}_{N}, \lambda,\gamma=1;$
  \STATE Set $K_{ij} \gets \exp(-\lambda d_{ij}/\gamma)$
  \STATE Set $M_{ij} \gets \exp(-\gamma c_{ij}/\varepsilon)$
  \FOR{$\ell = 1 : max\_iter$}
  \STATE $\psi_{j} \gets 1/\sum_{i=1}^{M}K_{ij}\phi_{i}p_{i}, \quad j=1,\cdots,N$
  \STATE $\phi_{i} \gets 1/\sum_{j=1}^{N}K_{ij}\psi_{j}r_{j}, \quad i=1,\cdots,M$
  \STATE Solve $G_2(\gamma) = 0$ for $\lambda\in\mathbb{R}^{+}$ with Newton's method
  \STATE Update $K_{ij} \gets \exp(-\lambda d_{ij}/\gamma)$ and $w_{ij}\gets\phi_{i}K_{ij}\psi_{j}r_{j}$
  \STATE $\varphi_j \gets r_j / \sum_{i=1}^M M_{i j} \xi_i, \quad j=1,\cdots,N$
  \STATE $\xi_i \gets p_i / \sum_{j=1}^N M_{i j} \varphi_j, \quad i=1,\cdots,M$
  \STATE Solve $F_2(\lambda) = 0$ for $\lambda\in\mathbb{R}^{+}$ with Newton's method
  \STATE Update $M_{ij} \gets \exp(-\lambda c_{ij}/\varepsilon)$ 
  \STATE Solve $H_2(\eta) = 0$ for $\eta\in\mathbb{R}$ with Newton's method
  \STATE Update $r_{j}\gets\gamma\left(\sum_{i=1}^{M} w_{i j}p_{i} \right)\Big/\left(\eta-\beta_{j}-\tau_{j}\right)$ 
  \ENDFOR
  \STATE \textbf{end}
  \RETURN $\sum_{i=1}^M \sum_{j=1}^N w_{i j} p_i d_{i j} $
 \end{algorithmic}
\end{algorithm}

\begin{remark}
    If the perception measure in \eqref{eq0_0_d} is substituted by TV distance, we only need to set the cost matrix in \eqref{eq0_1_e} as $c_{ij} = \mathbf{1}_{i \neq j}$ (see Eq. (6.11) of \cite{villani2009optimal}). Then our improved AS algorithm is still applicable.   
\end{remark}

\begin{remark}
   If the perception measure in \eqref{eq0_0_d} is substituted by KL divergence, our improved AS algorithm would be simpler. The Sinkhorn iteration in step 2) can be omitted since $\bm{\Pi}$ does not need to be introduced. In step 2) the $G(\gamma)$ is substituted by
    \begin{equation}
       G_{\text{KL}}(\gamma) \triangleq \sum_{j=1}^M p_j \log \left(\Big(\gamma p_j + \sum_{i=1}^M w_{i j} p_i \Big) \Big/(\eta -\beta_j)\right) - T,
    \end{equation}
    where $T = \sum_{i=1}^M p_i \log p_{i} - P$ and $H(\eta)$ is substituted by
    \begin{equation}
        H_{\text{KL}}(\eta) \triangleq \sum_{j=1}^M \left(\gamma p_j + \sum_{i=1}^M w_{i j} p_i \right) \Big/(\eta -\beta_j)-1,
    \end{equation}
     and $r_j = (\gamma p_j + \sum_{i=1}^M w_{i j} p_i ) /(\eta -\beta_j) $. 
   
\end{remark}

\subsection{The Improved AS Algorithm for DRP Functions}

In practical applications, the more common scenario is that the rate is limited and we need to minimize distortion and perception distance under this rate constraint \cite{zhang2021universal}. To address this, computing the DRP function becomes a more direct approach to fulfilling this requirement. The DRP function is akin to the RDP function, with the key difference being the swapping of the rate objective and the distortion constraint, which results in a significant increase in the complexity of solving the problem due to the introduction of new nonlinear constraints from the rate expression. Additionally, the inclusion of the perceptual term poses challenges for traditional BA type algorithms, similar to those encountered in solving RDP problems. Fortunately, our proposed method offers a unified framework that enables us to develop an algorithm for DRP functions that is nearly identical to the one used for RDP functions, requiring only minor adjustments. Similar to the Wasserstein Barycenter Model for RDP, we can also apply the Wasserstein Barycenter model to DRP functions, as shown in equation \eqref{eq8_3}.
\begin{equation} \label{eq6}
\begin{aligned}
    \min _{\bm{w},\bm{r},\bm{\Pi}} \quad  &\sum_{i=1}^M \sum_{j=1}^N w_{i j} p_i d_{i j}  \\
    { \text { s.t. } } \quad&\sum_{j=1}^N w_{i j}=1,\   \sum_{i=1}^M w_{i j} p_i=r_j,  \\
    &\sum_{i=1}^M \Pi_{i j}= r_j ,\  \sum_{j=1}^N \Pi_{i j}= p_i, \  \forall i, j, \\
    &\sum_{i=1}^M \sum_{j=1}^N \Pi_{i j} c_{i j} \leq P,\   \sum_{j=1}^N r_j=1 , \\
    & \sum_{i=1}^M \sum_{j=1}^N\left(w_{i j} p_i\right)\left[\log w_{i j}-\log r_j\right] \leq R.
\end{aligned}
\end{equation}

The numerical solution of equation \eqref{eq6} also requires the entropy regularized term $E(\bm{\Pi})$. From this, we directly give the corresponding Lagrangian function by introducing dual variables $\boldsymbol{\alpha},\boldsymbol{\theta} \in \mathbb{R}^M, \boldsymbol{\beta},\boldsymbol{\tau} \in \mathbb{R}^N, \lambda,\gamma \in \mathbb{R}^{+}$ and $\eta \in \mathbb{R}$:
\begin{equation}\label{eq7}
\begin{small}
\begin{aligned}
    &\begin{aligned}
    &\mathcal{L}_2 \left(\bm{w},\bm{\Pi},\bm{r}; \bm{\alpha},\bm{\beta}, \bm{\theta}, \bm{\tau}, \lambda,\eta, \gamma\right) =\\ &\sum_{i=1}^M{\sum_{j=1}^N}  w_{i j} p_i d_{i j}+\varepsilon \sum_{i=1}^M \sum_{j=1}^N \Pi_{i j} \ln \Pi_{i j}  +\sum_{i=1}^M \alpha_i\left(\sum_{j=1}^N w_{i j}-1\right)\\ &+\sum_{j=1}^N \beta_j\left(\sum_{i=1}^N w_{i j} p_i-r_j\right)+\sum_{i=1}^M \theta_i\left(\sum_{j=1}^N \Pi_{i j}-p_i\right)\\ &+\sum_{j=1}^N \tau_j\left(\sum_{i=1}^M \Pi_{i j}-r_j\right)+\eta\left(\sum_{j=1}^N r_j-1\right)\\
    &+\lambda\left(\sum_{i=1}^M \sum_{j=1}^N \Pi_{i j} c_{i j}-P\right)+\gamma\left( \sum_{i=1}^M \sum_{j=1}^N w_{i j} p_i \log \frac{w_{i j}}{r_j}-R\right).
    \end{aligned}
\end{aligned}
\end{small}
\end{equation}

Thus problem (\ref{eq7}) can be efficiently solved by our improved Alternating Sinkhorn algorithm. Here we  sketch the main ingredients and the pseudo-code is presented in Algorithm \ref{alg:OT_drp}.

\begin{enumerate}
  \item [1)] 
   Update $\bm{w}$ and associated dual variables $\boldsymbol{\alpha,\beta},\lambda,\gamma$ while fixing  $\bm{r}, \bm{\Pi}$ as constant parameters. We can use Sinkhorn algorithm to alternatively update $\alpha_i, \beta_j$:
  \begin{equation*}
    \psi_j \leftarrow 1 / \sum_{i=1}^M K_{i j} \phi_i p_i, \quad \phi_i \leftarrow 1 / \sum_{j=1}^N K_{i j} \psi_j r_j,
\end{equation*}
where $\phi_i=\exp (-\frac{\alpha_i}{p_i\gamma}-\frac{1}{2}), \psi_j = \exp (-\frac{\beta_j}{\gamma}-\frac{1}{2})$ and  $K_{i j}=\exp (-\frac{ d_{i j}}{\gamma})$. We can use the Newton’s method to find the root of the following monotonic single-variable function on $\mathbb{R}^{+}$:
\begin{equation}
    G_2(\gamma) \triangleq  \sum_{i=1}^M\sum_{j=1}^N p_iw_{ij}\log (\phi_i K_{ij}\psi_j) - R = 0,
\end{equation}
  \item [2)]
   Update $\bm{\Pi}$ and associated dual variables $\boldsymbol{\theta,\tau}$ while fix  $\bm{r}, \bm{w}$ as constant parameters. Using the similar technique in 1), we first use Sinkhorn algorithm to alternatively update $\theta_i, \tau_j$:
\begin{equation*}
    \varphi_j \leftarrow r_j / \sum_{i=1}^M M_{i j} \xi_i, \quad \xi_i \leftarrow p_i / \sum_{j=1}^N M_{i j} \varphi_j.
\end{equation*}
where $\xi _i=\exp(-\theta_i / \varepsilon-1/2), \varphi _j = \exp(-\tau_j/ \varepsilon-1/2)$ and $M_{i j}=\exp(-\lambda c_{i j}/ \varepsilon)$.
We can also use the Newton’s method to find the root of the following monotonic single-variable function on $\mathbb{R}^{+}$:
\begin{equation}
    F_2(\lambda) \triangleq  \sum_{i=1}^M\sum_{j=1}^N c_{i j} \xi_i e^{-\lambda  c_{i j}/ \varepsilon} \varphi_j - P = 0 .
\end{equation}
  \item [3)]
   Update $\bm{r}$ and associated dual variables $\eta$ while fix $\bm{w}, \bm{\Pi}$ as constant parameters. We can use Newton's method to find the root of the following single-variable function on its largest monotone interval $(\max_j(\beta_j+\tau_j),+\infty)$:
   \begin{equation}
    H_2(\eta) \triangleq \gamma\sum_{j=1}^N\left[\left(\sum_{i=1}^M w_{i j} p_i\right) /\left(\eta-\beta_j-\tau_j\right)\right]-1=0,
\end{equation}
and we finally get $ r_j =\gamma(\sum_{i=1}^M w_{i j} p_i) / (\eta-\beta_j-\tau_j)$.
\end{enumerate}

\begin{remark}
    The feasibility of the iteration of updating the dual variables $\gamma,\lambda$ and $\eta$ is similar to Remark \ref{feas} and \ref{feas2}, here we omit those discussions.
\end{remark}

\begin{remark}
    Here we do not delve into the numerical solution of the PRD functions \eqref{eq8_2}. 
    Given our ability to solve RDP and DRP functions, we have already provided a comprehensive characterization of the trade off among these three indexes. 
\end{remark}

\section{Numerical Experiment}

In this section, we numerically study the validity of the WBM-RDP and the improved AS algorithm. We compute RDP functions under two settings with different perception measures: one is the binary source with Hamming distortion \cite{blau2019rethinking}, and the other is the Gaussian source with squared error distortion \cite{zhang2021universal}. Moreover, the above two settings have analytical expressions \cite{blau2019rethinking,zhang2021universal} when the perceptual constraints are TV distance and Wasserstein-2 metric, respectively. We use them to test the accuracy of our method.

\begin{itemize}
    \setlength{\itemsep}{12pt}
    \item The binary source $X \sim \mathrm{Be}(p)$ with Hamming distortion and TV distance perception: WLOG, we assume $p \leq 1/2$,
    \begin{itemize}
        \item when $P>p$:
        \begin{small}
        \begin{equation*}
    R(D,P) = R(D, \infty)= \begin{cases}H_b(p)-H_b(D), & D \in[0, p) \\ 0, & D \in[p, \infty)\end{cases},
        \end{equation*}
        \end{small}
    \item when $P\leq p$:
    \begin{small}
        \begin{equation*}
        \begin{aligned}
        R(D, P)= \begin{cases}H_b(p)-H_b(D), & D \in \mathcal{S}_1 \\
        2 H_b(p)+H_b(p-P) \\-H_t\left(\frac{D-P}{2}, p\right)-H_t\left(\frac{D+P}{2}, q\right), & D \in \mathcal{S}_2 \\
        0, & D \in \mathcal{S}_3\end{cases},
        \end{aligned}
        \end{equation*}
    \end{small}
    \end{itemize}
    where $q = 1-p$. $H_b(\alpha)$ denotes the entropy of a binary random variable with probabilities $\alpha,1-\alpha$, $H_t(\alpha,\beta)$ denotes the entropy of a ternary random variable with probabilities $\alpha,\beta,1-\alpha-\beta$, and 
    \begin{equation*}
    \begin{aligned}
    &\mathcal{S}_1 =  [0,\frac{P}{1-2(p-P)}  ),\\ &\mathcal{S}_2 =  [\frac{P}{1-2(p-P)},2pq-(q-p)P  ),\\  &\mathcal{S}_3 =  [2pq-(q-p)P,+\infty  ).
    \end{aligned}
    \end{equation*}
    
    \item The Gaussian source $X\sim \mathcal{N} (\mu_X, \sigma^2_X )$ with squared error distortion and Wasserstein-2 metric perception:
    \begin{itemize}
    \item when $\sqrt{P} \leq \sigma_X-\sqrt{\left|\sigma_X^2-D\right|}$:
    \begin{small}
    \begin{equation*}
    \!\!\!\!\!\! R(D, P)= \frac{1}{2} \log \frac{\sigma_X^2\left(\sigma_X-\sqrt{P}\right)^2}{\sigma_X^2\left(\sigma_X-\sqrt{P}\right)^2-\left(\frac{\sigma_X^2+\left(\sigma_X-\sqrt{P}\right)^2-D}{2}\right)^2},
    \end{equation*}
    \end{small}
    \item when $\sqrt{P} > \sigma_X-\sqrt{\left|\sigma_X^2-D\right|}$:
    \begin{equation*}
    R(D, P)=  \max \left\{\frac{1}{2} \log \frac{\sigma_X^2}{D}, 0\right\}.
    \end{equation*}
    \end{itemize}
\end{itemize}

For the binary source, we can directly compute the result since we can set the discrete distribution $p$ beforehand. As for Gaussian source, we first truncate the sources into an interval $[-S, S]$ and then discretize it by a set of uniform grid points $\{ x_i\}_{i=1}^{N}$ whose adjacent spacing is $\delta = 2S/({N-1})$, i.e.,
\begin{equation*}
    x_i = -S + (i-1)\delta, \quad i = 1,\cdots, N.
\end{equation*}
The corresponding distribution $\bm{p}$ of the Gaussian source can then be denoted by
\begin{equation*}
p_i = F(x_i+\frac{\delta}{2})-F(x_i-\frac{\delta}{2}), \quad i = 1,\cdots, N,
\end{equation*}
where $F(x)$ denotes the distribution of the Gaussian source. Unless otherwise specified, we take $p = 0.1$ for the binary source and $S = 8, \delta = 0.5, \mu = 0, \sigma = 2$ for the Gaussian source. For our entropy regulation method, we set $\varepsilon = 0.01$.

\begin{figure}[t]
\centering
\includegraphics[width=\linewidth]{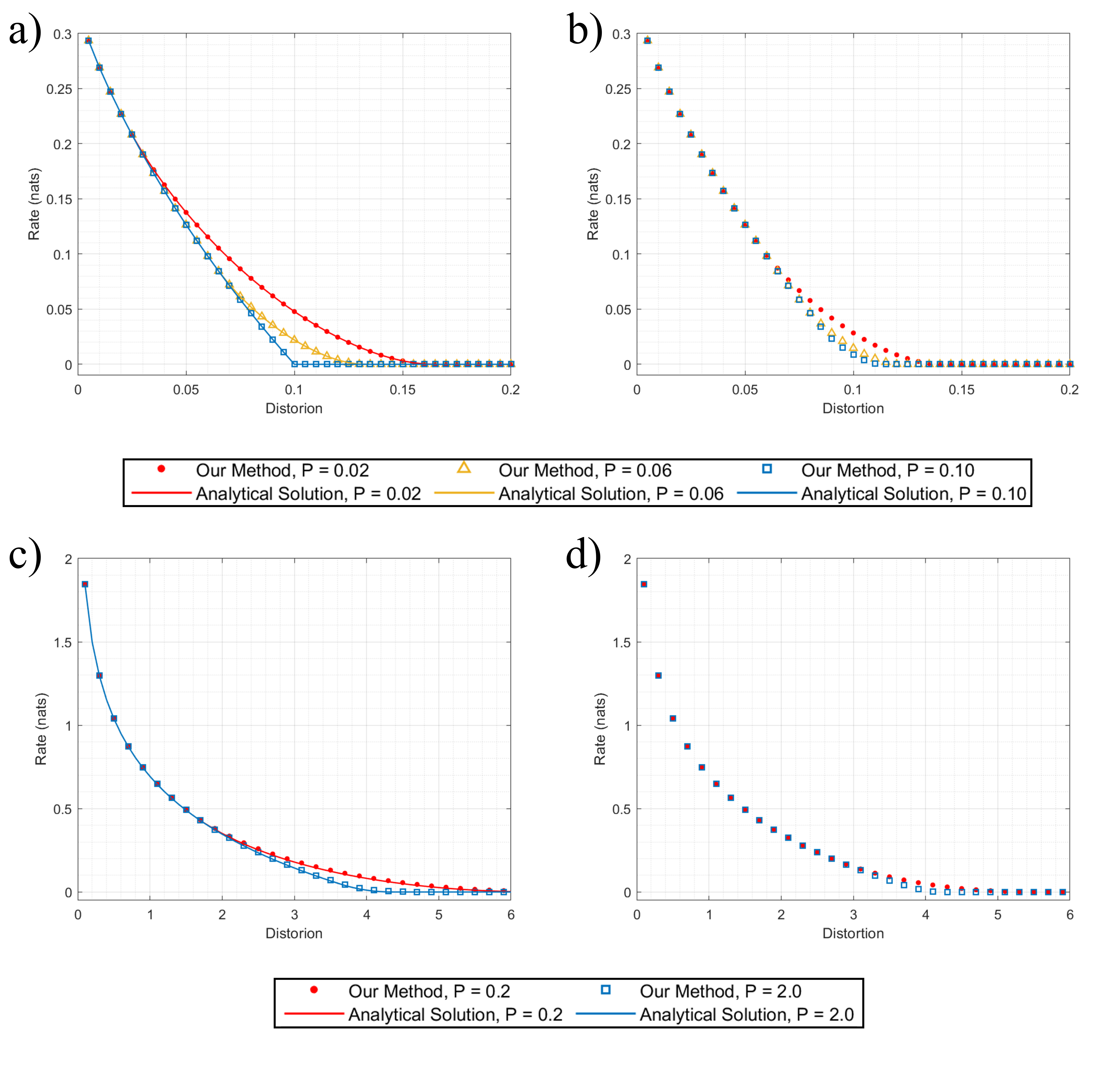}
\caption{The rate-distortion-perception functions obtained by our method and a) the binary source with TV distance and its analytical solution; b) the binary source with KL divergence; c) the Gaussian source with Wasserstein-2 metric and its analytical solution;  d) the Gaussian source with TV distance.}  \label{fig1}
\end{figure}

In Fig. \ref{fig1}, we output RDP curves given by our method under different perception parameter $P$ and compare them to the results with known theoretical expressions. The results obtained by our method match well the analytic expression in both scenarios.  Furthermore, we can also plot the results where the analytical solution is not known in Fig \ref{fig1} (b) and (d). 
 We also output the 3D diagram of RDP surface in Fig. \ref{fig2}. For the Gaussian source, we set $S = 4, \sigma = 1$ for visual effect. The results are in accord with those derived from data-driven methods in \cite{blau2019rethinking}. 
\begin{figure}[t]
\centering
\includegraphics[width=\linewidth]{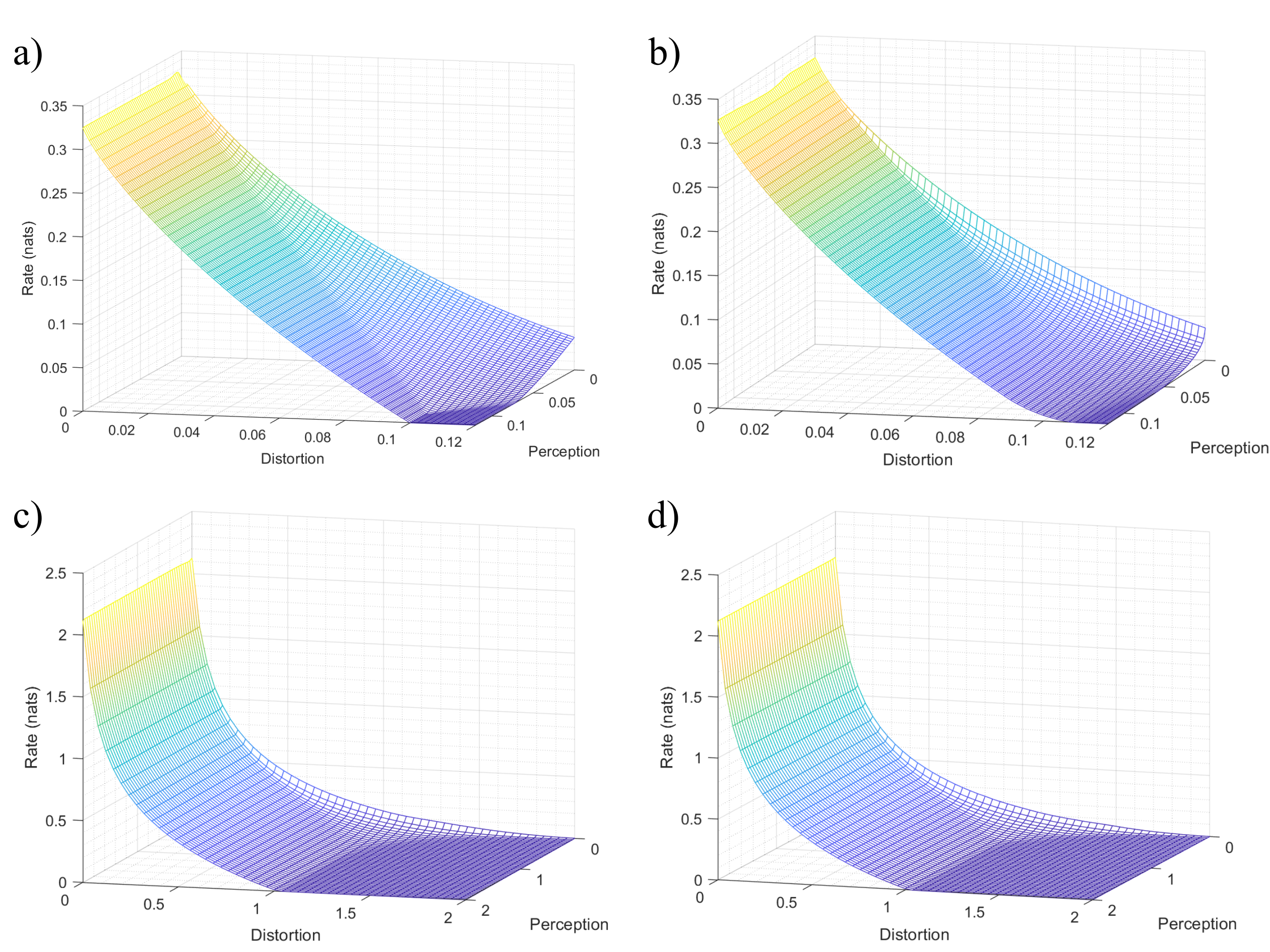}
\caption{The rate-distortion-perception functions obtained by our method. a) the
binary source with TV distance. b) the binary source with KL
divergence. c) the Gaussian source with Wasserstein-2 metric d) the Gaussian source with TV distance.}  \label{fig2}
\end{figure}

\subsection{Convergence Analysis}

We verify the convergence of the improved AS algorithm in this subsection. Here we consider the residual errors of the Karish-Kuhn-Tucker (KKT) condition of the optimization problem (\ref{problem}) to be the indicator of convergence. We define $L_1$ residual errors $r_{\psi}$ as:
    \begin{equation*}
       r_{\psi}=\sum_{j=1}^N\left|\psi_j \sum_{i=1}^M K_{i j} \phi_i p_i- 1\right|.
\end{equation*}
$r_{\phi}$,$r_{\lambda}$,$r_{\eta}$,$r_{\varphi}$,$r_{\xi}$,$r_{\gamma}$ can be defined similarly. We define the overall residual error $r$ to be the root mean square of the above residual errors, i.e.,
    \begin{equation*}
       r=\sqrt{\frac{1}{7}(r_{\psi}^2+r_{\phi}^2+r_{\lambda}^2+r_{\eta}^2+r_{\varphi}^2+r_{\xi}^2+r_{\gamma}^2)}.
\end{equation*}

In Fig. \ref{fig3}, we respectively output the convergent trajectories of $r$ of the binary source with TV distance and Gaussian source with Wasserstein-2 metric against iteration numbers. For the binary source, we set $P = 0.06$. For the Gaussian source, we set $ P = 2$. The results show different convergence behaviors with different distortion parameters. In all of these scenarios, the convergence is eventually achieved below $1 \times 10^{-10}$, indicating that our algorithm exhibits excellent convergence across various situations. 

\begin{figure}[t]
\centering
\includegraphics[width=\linewidth]{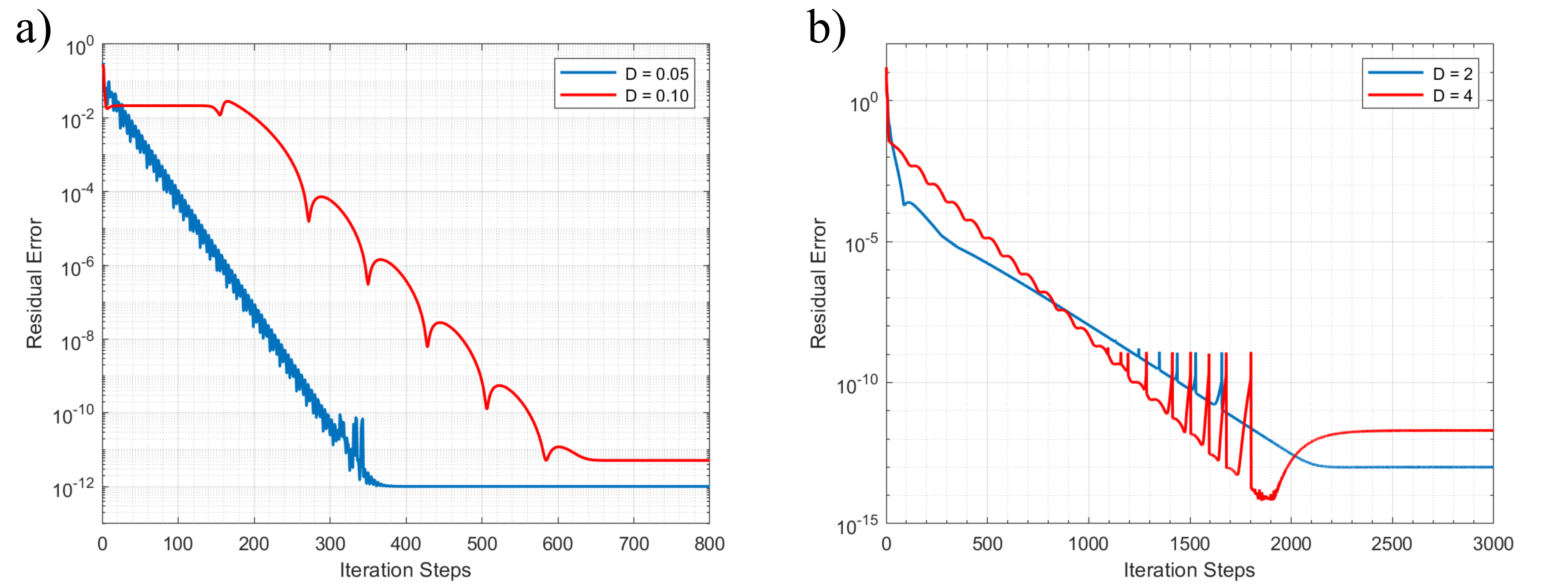}
\caption{The convergent trajectories of $r$ from our method and a) the binary source with different distortion parameters; b) the Gaussian source with different distortion parameters.}  \label{fig3}
\end{figure}

\subsection{Relationship Between Regularization Parameter $\varepsilon$ and Accuracy}

In this subsection, we discuss the effect of regularization parameter $\varepsilon$ on the accuracy of the calculation. We have theoretically demonstrated that when $\varepsilon \rightarrow 0$ the entropy regularized problem converges to the original problem in Theorem \ref{thm-1}. However, we wish to investigate more numerical behavior of $\varepsilon$ to the WBM-RDP.

Results are shown in Table \ref{Tabel2}, where the error represents the $L_1$ difference between the algorithm results and the explicit results for the optimal value. When $\varepsilon$ decreases, the results are more accurate regardless of the source and perception $P$. Furthermore, according to the numerical simulations, when $\varepsilon$ is relatively small, further reducing $\varepsilon$ has limited impact on the error. Therefore, $\varepsilon = 0.01$ seems to be an ideal choice for computing RDP functions, as it ensures a certain level of accuracy.

\begin{table}[t]
\centering
\caption{The computational error against $\varepsilon$ with different sources.}
\label{Tabel2}

\resizebox{\linewidth}{!}
{
\begin{tabular}{c|c|c|c}
\hline
                                 & {$\varepsilon$} &\tiny{$P = 0.02$} &\tiny $P = 0.06$ \\ \hline
\multirow{4}{*}{Binary source}  & \tiny  $1.00 \times 10^{-1}$                       & \tiny $1.26\times 10^{-4}$        & \tiny $4.36\times 10^{-4}$        \\
                                 & \tiny$5.00 \times 10^{-2}$                       & \tiny   $2.84\times 10^{-5}$      & \tiny  $8.75\times 10^{-5}$       \\
                                 & \tiny $1.00 \times 10^{-2}$                       &  \tiny $3.74\times 10^{-6}$       &  \tiny $5.41\times 10^{-6}$        \\
                                 & \tiny $5.00 \times 10^{-3}$                       & \tiny $3.39\times 10^{-6}$        &  \tiny $3.47\times 10^{-6}$        \\
                                  \hline
                                 & {$\varepsilon$}  &\tiny $P = 0.2$    &\tiny $P = 2.0$    \\ \hline
\multirow{4}{*}{Gaussian source} &\tiny $1.00 \times 10^{-1}$                        & \tiny $2.50 \times 10^{-2}$        &\tiny   $1.49 \times 10^{-2}$        \\
                                 &\tiny $5.00 \times 10^{-2}$                        &\tiny  $1.59 \times 10^{-2}$        &\tiny  $7.30 \times 10^{-3}$         \\
                                 &\tiny $1.00 \times 10^{-2}$                        &\tiny  $5.30 \times 10^{-3}$        &\tiny  $2.30 \times 10^{-3}$        \\
                                 &\tiny $5.00 \times 10^{-3}$                        &\tiny $4.20 \times 10^{-3}$         &\tiny   $1.80 \times 10^{-3}$        \\
                                 \hline
\end{tabular}}

\end{table}

\subsection{Distortion-Perception Trade-Off}

\begin{figure}[t]
\centering
\includegraphics[width=\linewidth]{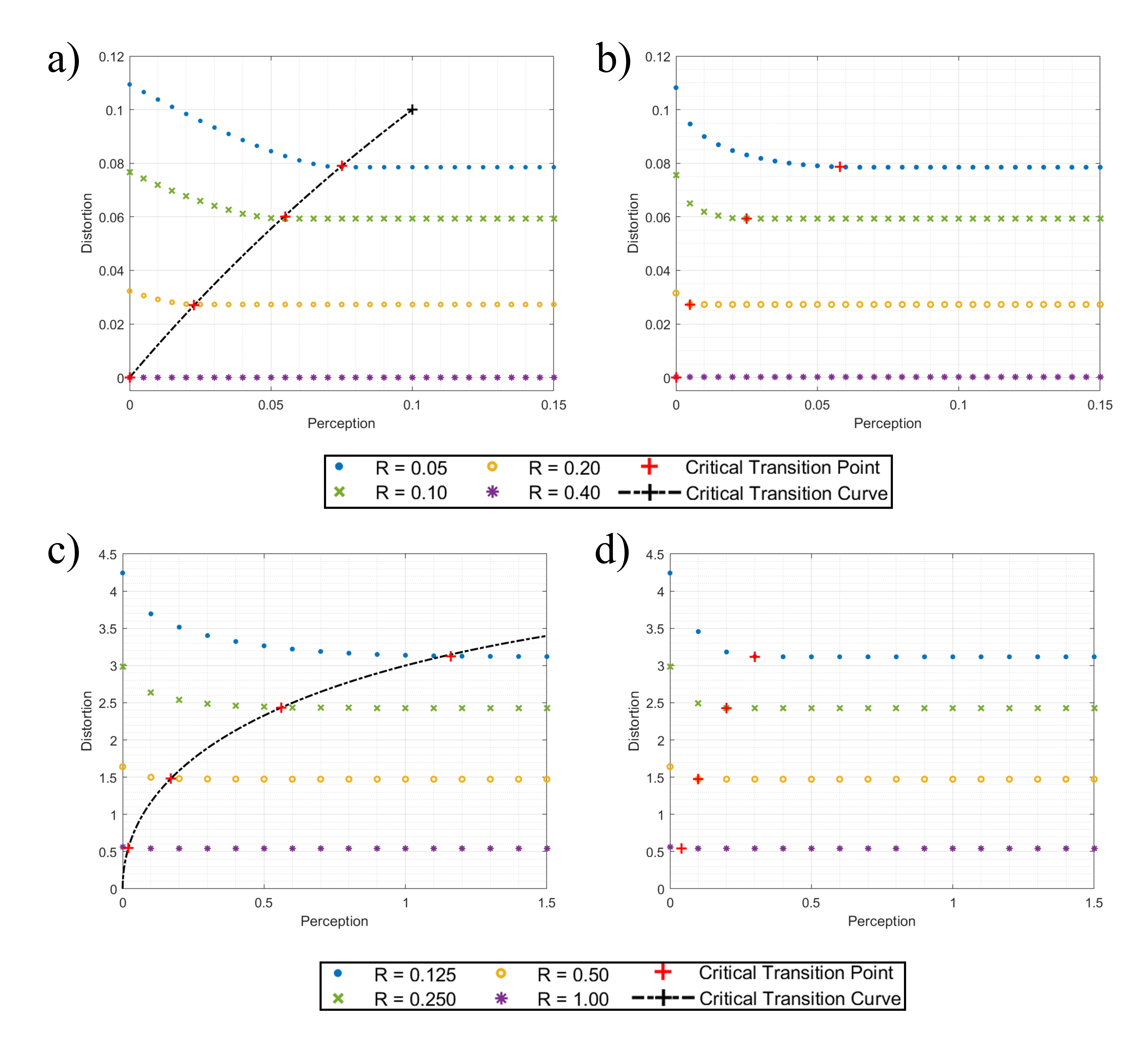}
\caption{Distortion-perception cross-sections under different rates obtained by our method and a) the binary source with TV distance and its analytical critical transition curve; b) the binary source with KL divergence; c) the Gaussian source with Wasserstein-2 metric and its analytical critical transition curve;  d) the Gaussian source with TV distance. Red cross points are the critical transition points obtained by our method.}  \label{fig4}
\end{figure}

\begin{figure}[th]
\centering
\includegraphics[width=\linewidth]{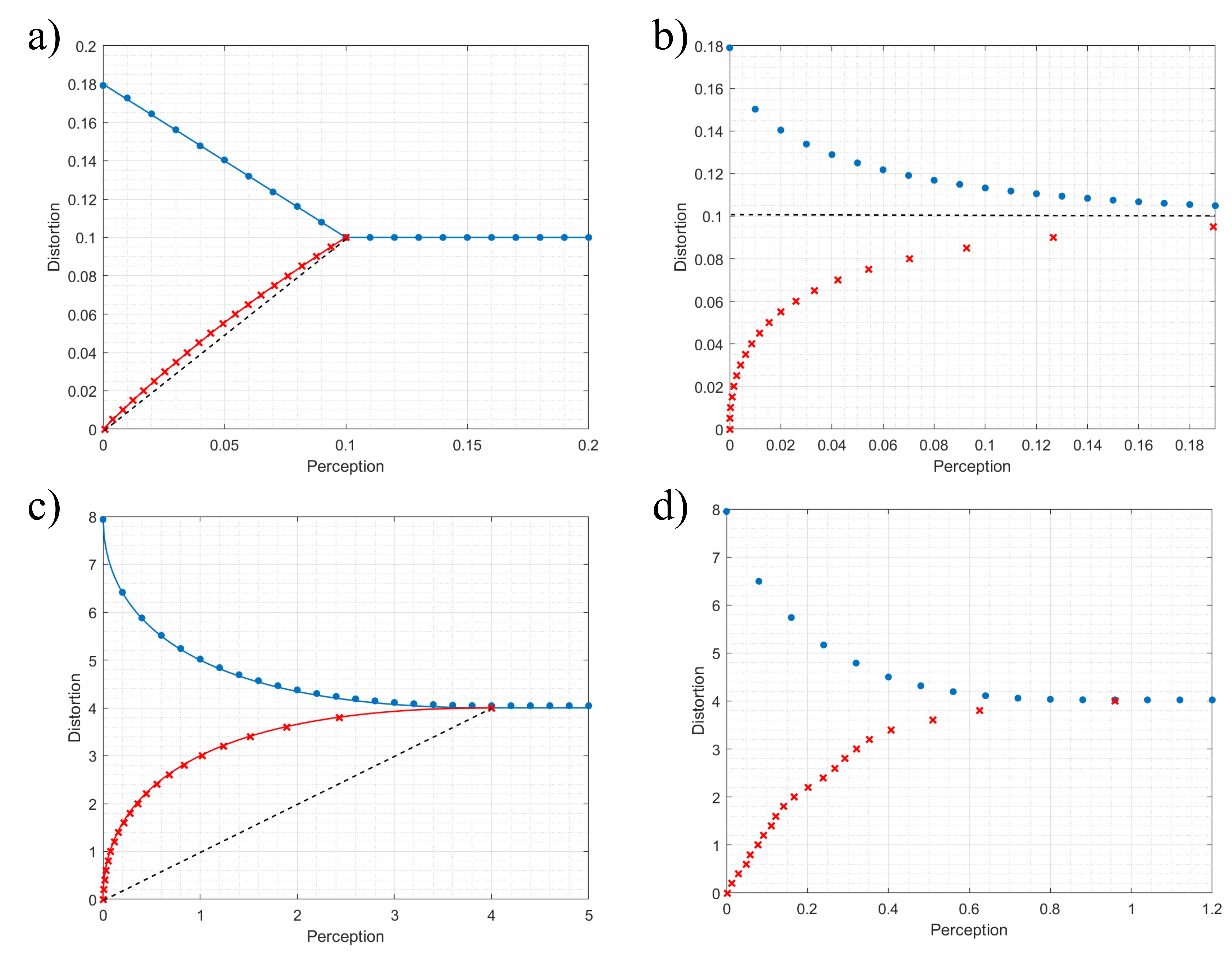}
\caption{Critical transition curve (red) and upper bound (blue) obtained by our method and a) the binary source with TV distance, their analytical solution (corresponding solid line) and auxiliary line $D = P$ (black dashed line); b) the binary source with KL divergence and auxiliary line $P = 0.1$; c) the Gaussian source with Wasserstein-2 metric, their analytical solution and auxiliary line $D = P$;  d) the Gaussian source with TV distance.}  \label{fig5}
\end{figure}
We discuss the distortion-perception trade off with a given rate threshold and the critical transition curve in this part. By exploring this trade-off relationship and calculating the critical transition curve, we can achieve theoretically optimal perception without altering the rate and distortion specified by the original RD functions. This provides a visual perceptual benchmark for potential future compression coding schemes.

In determining the transition curve, we utilize two methods discussed in the text. First, we use the improved AS algorithm to compute the DPR function, yielding a curve illustrating the relationship between distortion and perception index at a specific rate. This analysis enables us to pinpoint the corresponding transition points discussed in Section \ref{sec:interplay} where distortion no longer varies with perception index. Second, to determine the transition boundaries, we can employ the numerical method in \cite{wu2022communication} for solving the RD function to obtain the corresponding $\bm{r}$ given distortion parameter $D$, and then compute $d(\bm{p},\bm{r})$ to obtain the corresponding $P$. To validate the accuracy of our methods, we compare our results with cases having analytical solutions. Once having the analytical expressions of the RDP function, we can obtain its analytical critical transition curve and corresponding upper bound. For binary source with Hamming distortion and TV distance perception, the critical transition curve and upper bound can be written as:
\begin{subequations}
\begin{align*}
    P &= f(D) = \frac{D(1-2p)}{1-2D}, \quad D \in [0,p],\\
    D &= h(P) = \begin{cases}2p(1-p)-(1-2p)P, & P \in[0, p) \\ p, & P \in[p, \infty)\end{cases},
\end{align*}
\end{subequations}
for Gaussian source with squared error distortion and Wasserstein-2 metric perception, the critical transition curve and upper bound can be written as
\begin{subequations}
\begin{align*}
    P &= f(D) = \left(\sigma_X-\sqrt{\left|\sigma_X^2-D\right|}\right)^2, \quad D \in [0,\sigma_X^{2}],\\
    D &= h(P) = \begin{cases}\sigma_X^2+(\sigma_X-\sqrt{P})^2, & P \in[0, \sigma_X^2) \\ \sigma_X^2, & P \in[\sigma_X^2, \infty).\end{cases}
\end{align*}
\end{subequations}

In Fig. \ref{fig4}, we output the distortion-perception cross-sections under different rates $R$ by our method for solving DRP functions. These critical transition points are determined by evaluating their slopes obtained by our method. Specifically, we calculate the absolute values of the slopes of the secant lines between adjacent points to check whether they fall below the threshold of $10^{-3}$, and identify the first point that meets this criterion. In cases where analytical solutions for the transition curve exist, we also draw the corresponding lines to validate the effectiveness of our methods. From this figure, we can observe that for each curve, the critical transition points are consistent with the results obtained from analytical solutions.

To further elucidate the conclusions drawn in Section \ref{sec:interplay} regarding the interplay between distortion and perception as demonstrated in Fig. \ref{curve}, we obtain the corresponding upper bound and critical transition curve in Fig. \ref{fig5} through the solution of $D(0, P)$ and $f(D)$. 
To validate the efficacy of our approach, we have also included analytical results in this context. 
Note that $\text{KL}(\bm{p}\| \bm{r}) \rightarrow +\infty$ when $D \rightarrow D_{\infty}(0)$. The numerical results match well the analytical solutions and the upper bounds in different scenarios. Furthermore, the behavior of $f(D)$ in Fig. \ref{fig5} a) and c) demonstrates well the properties given in Proposition \ref{thm-4} where the critical transition curves are always above dashed line $D = P$ and at the endpoints of both curves distortion index is equal to perception index.

\section{Application to Steganography}
In this section, we apply the above approach of the RDP function to steganographic communication, specifically, to reverse data hiding. The ever-increasing demands for machine-generated content (e.g., picture, video) and digital arts with NFT have motivated the introduction of digital watermarking -- a practice of embedding a message imperceptibly into the content of interest \cite{volkhonskiy2020steganographic}. As an example, the goal of image steganography is to hide a piece of information inside a cover image in a way that is not visible to the human eyes \cite{9335027}. Recently, much attention has been on distribution-preserving steganography using generative models \cite{chen2021distribution}. Traditional digital watermarking methods, exemplified by works such as \cite{1608163}, \cite{tsai2009reversible}, and \cite{6194314}, have historically focused on minimizing the average per-letter distortion following the principles of RD theory. Yet, recent studies have unveiled a critical insight: the sole emphasis on minimizing distortion-based measures, rooted in the RD theory, does not always ensure outcomes with satisfactory perceptual quality \cite{blau2019rethinking}. In the following, we consider image steganography for achieving superior perceptual quality. We introduce the perceptual constraint between the cover image and the stego image into the design. The proposed algorithm provides a guarantee to the imperceptibility, which lies at the heart of steganography.

\begin{figure}[t]
\centering
\includegraphics[width=\linewidth]{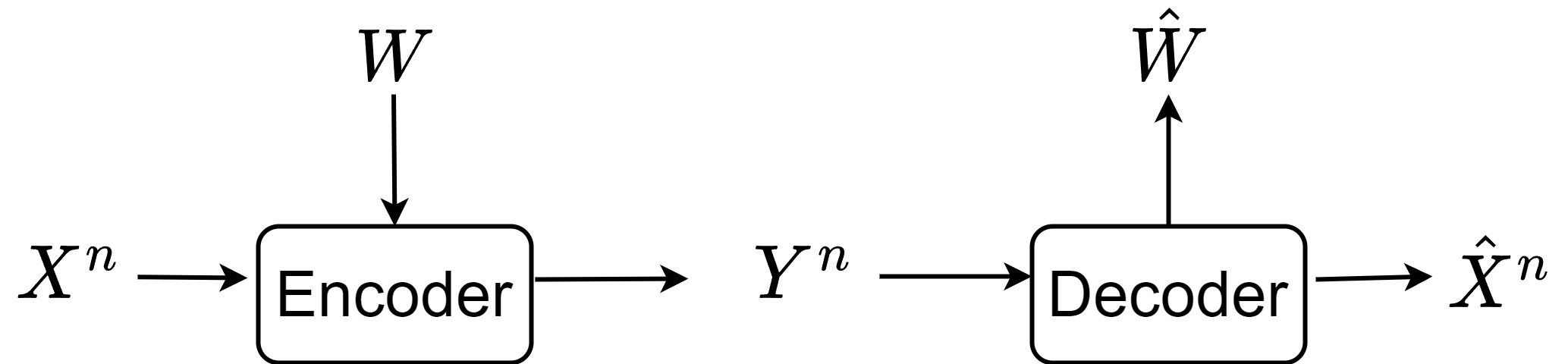}
\caption{Schematic illustration of the steganographic communication -- reversible data hiding system. The host sequence $X^n$ is encoded into marked-sequence $Y^n$ and then decoded into $\hat{X}^n$ such that $X^n= \hat{X}^n$. $W$ is the message to be hidden.}  \label{RDH}
\end{figure}

\subsection{The RDP Function for Reversible Data Hiding}
In the domain of information hiding techniques, reversible data hiding (RDH) stands out due to its unique characteristic, requiring not only precise extraction of the hidden message but also lossless restoration of the cover itself. Mathematically, the message $W = (w_1,\cdots,w_K)$ is embedded by the encoder into the host sequence $X^n = (X_1,\cdots,X_n)$ to produce the marked-sequence $Y^n$ through an injective function $Y^n = f_n(X^n,W)$. We anticipate by elaborate encoding design, the decoder can losslessly reconstruct the message $\hat{W}$ and the host sequence $\hat{X}^n$ through the inverse function $\hat{X}^n = f^{-1}_n(Y^n,W)$ where $X^n= \hat{X}^n$ while keeping $Y^n$ as close as possible to $X^n$. The whole procedure is shown in Fig. \ref{RDH}.
Kalker and Willems \cite{kalker2002capacity} address this question for i.i.d. host sequences. They approach the RDH as a specialized rate-distortion problem and successfully derive the corresponding RD function.

\begin{definition}\label{definition_RDH}
Given a distortion $D\in \mathbb{R}^+$ and source distribution $p_X\in \mathcal{P}(\mathcal{X})$, the RD function for RDH problem is
\begin{equation}\label{eq30}
\begin{aligned}
R_{\text{\emph{RDH}}}(D)= \max _{p_{\hat{X} \mid X}} \quad& H(Y)-H(X)    \\
 \text { \emph{s.t.} }\quad &\mathds{E}[\Delta(X, Y)] \leq D,  
\end{aligned}
\end{equation}%
where $X$ and $Y$ denote the random variables of the host signal and the marked-signal respectively.
\end{definition}

Lin \emph{et al.} put forth a novel scalar code construction aiming at approaching the rate-distortion bound \cite{6194314}. This coding method involves altering the host sequence based on the optimal distribution of marked sequence under the RD function \eqref{eq30}. Under the optimal distribution, the message $W$ is embedded within the host sequence by employing the coding scheme which achieves this boundary. Moreover, this encoding process is reversible, allowing for corresponding decoding. For a comprehensive understanding of the coding technique, readers are encouraged to refer to \cite{6194314} for in-depth analysis.  Since the goal of RDH for image is to conceal information within a cover image such that it remains imperceptible to human visual perception, we believe by incorporating the perceptual constraint into the RDH, it will output visually more satisfying result under the same embedding rate. Therefore we obtain the following RDP function for RDH.
\begin{definition}\label{definition_RDH_RDP}
Given a distortion $D\in \mathbb{R}^+$, a perceptual fidelity $P\in \mathbb{R}^+$ and source distribution $p_X\in \mathcal{P}(\mathcal{X})$, the RDP function for RDH problem is defined as
\begin{equation}\label{eq31}
\begin{aligned}
R_{\text{\emph{RDH}}}(D,P)= \max _{p_{\hat{X} \mid X}} \quad& H(Y)-H(X)   \\
 \text { \emph{s.t.} }\quad &\mathds{E}[\Delta(X, Y)] \leq D, \\  
 &  d\left(p_X, p_Y\right) \leq P,
\end{aligned}
\end{equation}%
where $X$ and $Y$ denote the random variables of host signal and marked-signal respectively. 
\end{definition}
Under the new optimization problem \eqref{eq31}, the sole adjustment required within the original encoding procedure is the substitution of the process for determining the optimal distribution of the marked sequence, while the rest of the steps can remain unchanged. Our WBM-RDP and improved AS algorithm can be directly applied to numerically solve the aforementioned problem \eqref{eq31}.

\begin{figure}[t]
\centering
\includegraphics[width=\linewidth]{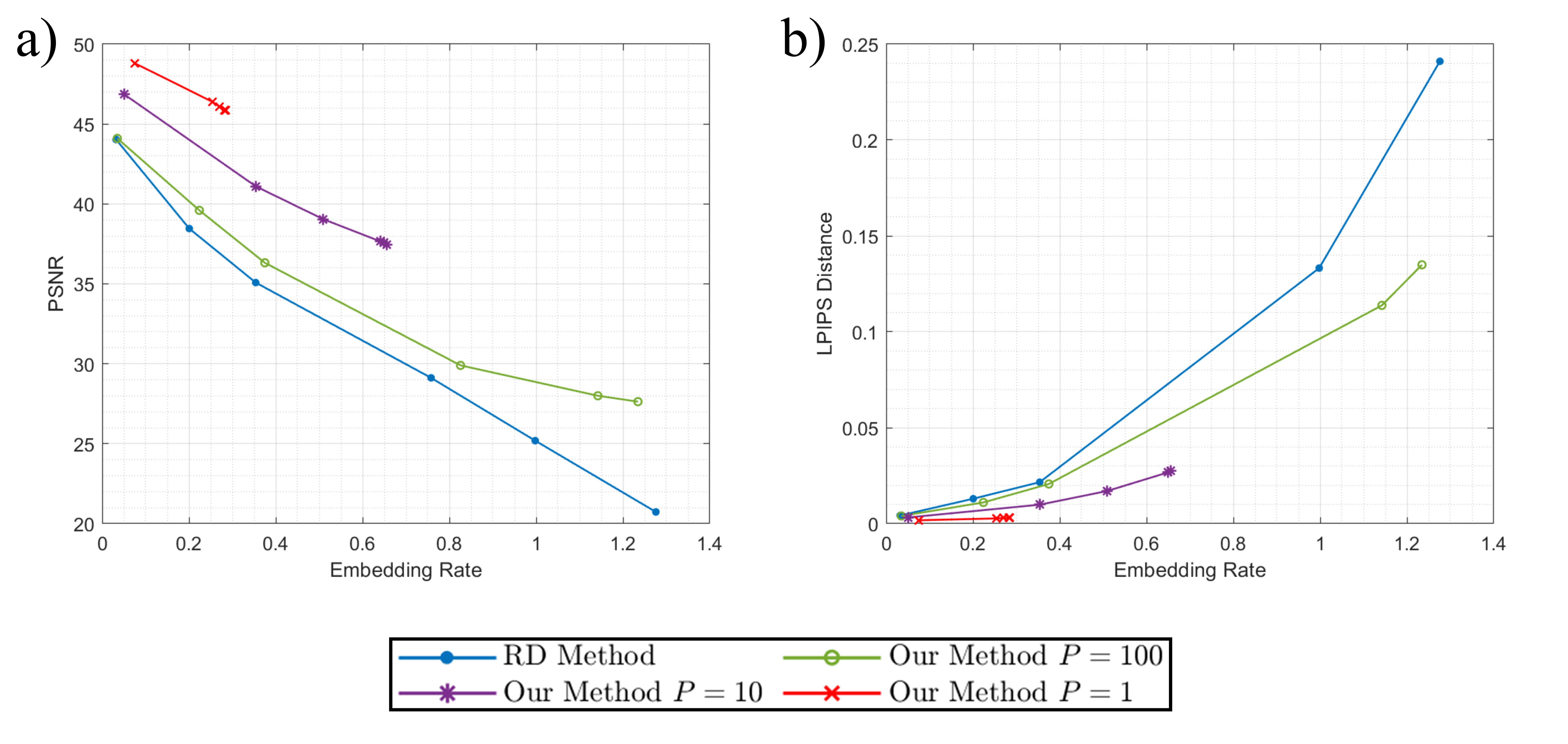}
\caption{Comparison of reconstruction quality with respect to different embedding rates using our method and the baseline methods a) PSNR; b) LPIPS.}  \label{fig_RDH_1}
\end{figure}

\subsection{Experimental Results}
 We applied our algorithm to RDH scheme designed for gray-scale images. Briefly speaking, this approach involves applying the coding method to the prediction errors (PE) of individual pixels. Specifically, for any pixel $h_i$, we use the average value of its neighbour pixels $\hat{h}_i$ to predict it and the PE is defined as
 \begin{equation*}
     \hat{d}_i = (h_i - \hat{h}_i + 128) \ mod \ 256.
 \end{equation*}
 By utilizing the PE as cover signals, we employ the coding method introduced in \cite{6194314} to embed the message into the sequence of PEs, generating a marked-sequence $\bm{y}$. Subsequently, the corresponding pixel of the watermarked image is defined as
 \begin{equation*}
     z_i = (y_i + \hat{h}_i - 128) \ mod \ 256.
 \end{equation*}
 Since the coding method requires the optimal distribution of marked-signals, here we can replace it with the outcome of our Improved AS Algorithm for RDH problem. We compare this new scheme with the method where the perception index is not involved and with different perception parameter $P$ using the test image in Fig.~\ref{fig_RDH_2} and \ref{fig_RDH_3}.

 \begin{figure}[t]
\centering
\includegraphics[width=0.8\linewidth]{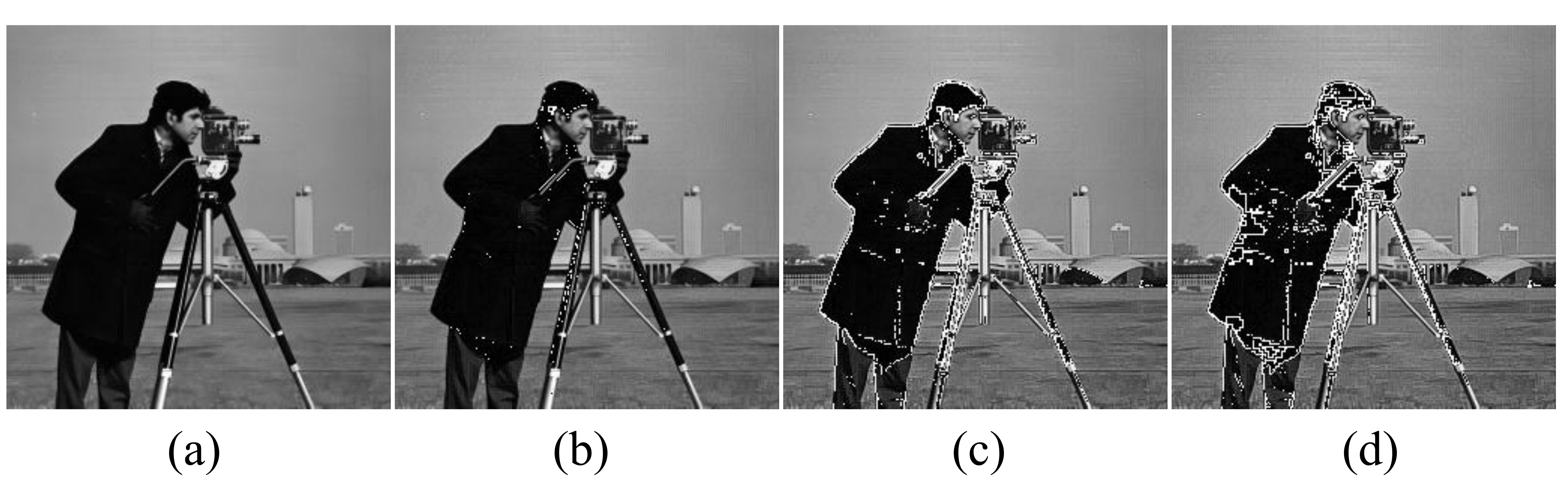}
\caption{Image with increasing perception parameter $P$ when distortion parameter $D = 100$, a) $P = 1$; b) $P = 10$; c) $P = 100$; d) $P = 200$.}  \label{fig_RDH_2}
\end{figure}

\begin{figure}[t]
\centering
\includegraphics[width=0.8\linewidth]{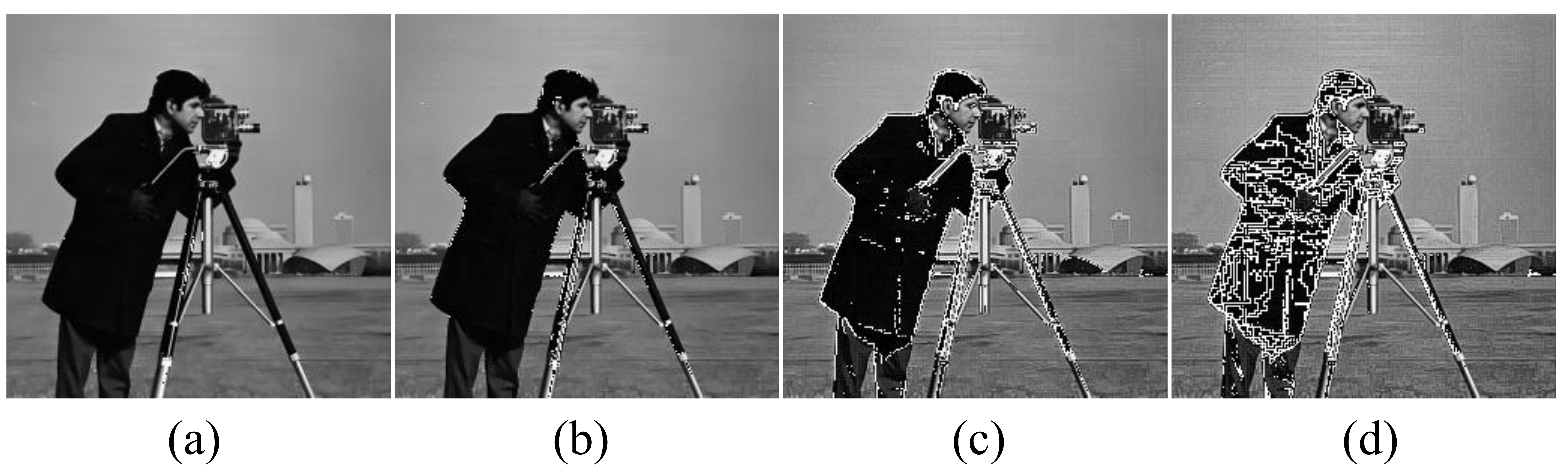}
\caption{Image with increasing distortion parameter $D$ when perception parameter $P = 100$, a) $D = 1$; b) $D = 10$; c) $D = 100$; d) $D = 200$.}  \label{fig_RDH_3}
\end{figure}

The motivation behind introducing the perceptual constraint $P$ into the existing baseline method is to further enhance the quality of images under the same embedding rate, rendering them imperceptible to the human eye. Fig. \ref{fig_RDH_1} a) shows PSNR of the embedded figure obtained by our method and the baseline method. To further elucidate the improvement in image quality brought about by our method, we employ the more perceptually relevant measure LPIPS. Lower LPIPS value indicates higher visual similarity between two images, which is in better accordance with human perception. The simulation result is shown in Fig. \ref{fig_RDH_1} b). Both a) and b) in Fig.~\ref{fig_RDH_1} illustrate that our proposed method effectively enhances the image qualities across various embedding rates, with more significant improvements observed at lower $P$ values. A noteworthy observation is that when incorporating the perception constraint, the embedding rate appears to converge at a certain point 
  in both figures, with a lower convergence embedding rate observed for lower $P$ values. The endpoint corresponds to where the distortion constraint becomes inactive, namely, where the RDP function turns into $R^{\infty}_{\text{{RDH}}}(P),$ and it is monotone increasing against $P$. Therefore, when the amount of information needed for the embedding is relatively small, i.e., when the embedding rate is low, we can opt for a smaller $P$ to meet the corresponding requirements. In this case, the impact on image quality is substantial. Conversely, when a higher embedding rate is required, we should select a larger $P$, although there might be a slight decline in image quality, it still signifies a substantial enhancement compared to the baseline method. Fig. \ref{fig_RDH_2} and Fig. \ref{fig_RDH_3} show the outcomes with different distortion and perception fidelities. Both distortion and perception constraints play a vital role in the RDH, and the introduction of the perceptual constraint yields visually satisfying outcomes while maintaining high embedding rates.  

\section{Conclusion}
We study the information rate-distortion-perception function and its variants. Using techniques in optimal transport, we convert the original problem to a Wasserstein Barycenter model for Rate-Distortion-Perception functions (WBM-RDP). Through our investigation, we explore the interplay between the distortion and perception constraints intrinsic to the RDP functions, including the critical transitions when one of these constraints becomes inactive. Furthermore, we present a novel scheme for computing the information RDP functions by proposing an improved Alternating Sinkhorn method with entropy regularization. Numerical experiments show that our algorithm performs with high accuracy and efficiency. Moreover, we extend our algorithm to an application in steganography, and the outcomes give visually satisfying results in comparison to traditional reversible data hiding methods.

\section*{Acknowledgment}
This work was supported by National Natural Science Foundation of China Grant No. 12271289.

\appendices
\section{Basic Properties of RDP Functions Variants} \label{l1}

\begin{lemma}\label{lemma1}
\begin{align*}
	D_{\infty}(R) &= D(R,+\infty) \leq D(R,P) \leq D(0,P),\quad \forall P \in \mathbb{R}^{+}; \\ P^{\infty}(R)  &= P(R,+\infty) \leq P(R,D) \leq P(0,D),\quad \forall D \in \mathbb{R}^{+}.
\end{align*}
\end{lemma}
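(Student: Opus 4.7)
The plan is to establish both chains of inequalities purely through set-inclusion arguments on the feasible regions of the underlying optimization problems. Since all three variants ($R(D,P)$, $D(R,P)$, $P(R,D)$) are defined as minima of the same type of objective (mutual information, expected distortion, or divergence) over conditional distributions $p_{\hat{X}\mid X}$, any inclusion of feasible sets translates directly into the desired inequality, so the whole lemma reduces to four such comparisons.

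First I would handle the leftmost equalities $D_\infty(R)=D(R,+\infty)$ and $P^\infty(R)=P(R,+\infty)$. Recall that the perception and distortion measures are bounded above by $P_{\max}$ and $D_{\max}$ respectively (by Assumptions~\ref{assumption-distortion} and \ref{assumption-perception}), hence the constraint $d(p_X,p_{\hat{X}})\leq P$ (resp.\ $\mathds{E}[\Delta(X,\hat{X})]\leq D$) becomes vacuous once $P$ (resp.\ $D$) is taken as $+\infty$. Consequently the feasible region of $D(R,+\infty)$ coincides with that of $D_\infty(R)$, and similarly for $P(R,+\infty)$ and $P^\infty(R)$; equality of the minima follows immediately. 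Next I would establish the middle inequalities $D(R,+\infty)\leq D(R,P)$ and $P(R,+\infty)\leq P(R,D)$: removing an additional constraint only enlarges the feasible region, and a minimum over a larger set is no larger.

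For the rightmost inequalities $D(R,P)\leq D(0,P)$ and $P(R,D)\leq P(0,D)$, I would argue that any conditional $p_{\hat{X}\mid X}$ satisfying $I(X,\hat{X})\leq 0$ automatically satisfies $I(X,\hat{X})\leq R$ for every $R\in\mathbb{R}^+$ (since mutual information is non-negative and $R\geq 0$). Hence the feasible set at rate threshold $0$ is contained in the feasible set at rate threshold $R$, while the other constraints are identical; the minima inherit the reverse inequality. One should also observe that the feasible set at rate $0$ is non-empty (e.g.\ take $\hat{X}$ independent of $X$ with an appropriately chosen marginal), so the rightmost quantities are well-defined and finite.

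There is no real technical obstacle; the argument is essentially monotonicity of $\inf$ under set inclusion. The only point that warrants care is ensuring non-emptiness of the most restrictive feasible region (rate $0$ together with perception level $P$ or distortion level $D$) so that $D(0,P)$ and $P(0,D)$ are finite; this can be handled by exhibiting an explicit independent-coupling witness and invoking the boundedness of $\Delta$ and $d$, and by taking $P$ (or $D$) large enough if necessary to render the corresponding constraint trivially satisfied.
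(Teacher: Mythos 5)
Your proof is correct and follows essentially the same route as the paper's: both reduce the lemma to feasible-set inclusions (dropping a constraint, or tightening the rate threshold from $R$ to $0$) and the monotonicity of the minimum under such inclusions. Your extra remark about non-emptiness at rate $0$ is harmless but unnecessary, since if the most constrained feasible set is empty the corresponding minimum is $+\infty$ and the inequality holds trivially.
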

\begin{proof}
    The optimal solution to $D(R,P)$ must be the feasible solution to $D_{\infty}(R)$ and also holds true between $P^{\infty}(R) $ and $P(R,D)$, thus the above inequality relations can be obtained from the optimality of $D_{\infty}(R)$ and $P^{\infty}(R) $. The second inequality relations can be proved using a similar technique.
\end{proof}

\begin{lemma}\label{lemma2}
    \begin{align*}
    &P^{\infty}(R)  = 0,\quad \forall R \in \mathbb{R}^{+}\\ &R^{\infty}(P) = 0,\quad \forall P \in \mathbb{R}^{+}.
    \end{align*}
\end{lemma}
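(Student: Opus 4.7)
The plan is to establish both equalities by exhibiting a single explicit conditional distribution that is simultaneously feasible for both optimization problems and attains value $0$ in the objective of each.

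First I would consider the conditional distribution $p_{\hat{X}\mid X}(\hat{x}\mid x) = p_X(\hat{x})$, i.e., choosing $\hat{X}$ to be independent of $X$ with the same marginal as $X$. This is always a valid conditional distribution on $\hat{\mathcal{X}} = \mathcal{X}$. Under this choice, the induced output marginal is $p_{\hat{X}} = p_X$, and since $\hat{X}$ is independent of $X$, the mutual information satisfies $I(X,\hat{X}) = 0$.

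Next I would verify the two claims. For $P^{\infty}(R) = 0$: the above distribution gives $I(X,\hat{X}) = 0 \leq R$ for every $R \in \mathbb{R}^+$, so it is feasible, and by Assumption~\ref{assumption-perception} we have $d(p_X, p_{\hat{X}}) = d(p_X, p_X) = 0$. Combining this with the non-negativity part of Assumption~\ref{assumption-perception} (which forces $P^{\infty}(R) \geq 0$) gives $P^{\infty}(R) = 0$. For $R^{\infty}(P) = 0$: the same distribution yields $d(p_X, p_{\hat{X}}) = 0 \leq P$ for every $P \in \mathbb{R}^+$, so it is feasible for the rate-perception problem, and produces the objective value $I(X,\hat{X}) = 0$. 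Since mutual information is non-negative, $R^{\infty}(P) = 0$.

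There is no real obstacle here: the lemma simply encodes the fact that once the distortion constraint is dropped, one can always achieve perfect perceptual quality at zero rate by drawing $\hat{X}$ i.i.d.\ from $p_X$ independently of $X$. The only subtlety worth highlighting is the use of Assumption~\ref{assumption-perception} in both directions (to conclude $d(p_X,p_X) = 0$ and to guarantee non-negativity of $d$), which closes both equalities.
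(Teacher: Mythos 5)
Your proposal is correct and follows essentially the same route as the paper: the paper's proof also takes $w_{ij}=r_j$ with $\bm r=\bm p$, which is exactly your choice of $\hat X$ drawn independently of $X$ with marginal $p_X$, making both the mutual information and the perception term vanish. The only cosmetic difference is that the paper first uses a zero-padding argument to identify the source and reconstruction alphabets (WLOG $M=N$), a step you implicitly absorb by assuming $\hat{\mathcal X}=\mathcal X$.
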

\begin{proof}
    Under the original premise, the discrete form of problem \eqref{eq8_2} can be written as:
    \begin{subequations} \label{eq9}
    \begin{align}
    \min _{\bm{w},\bm{r}} \quad  &d(\bm{p},\bm{r})   \\
    \text { s.t. }\quad  &\sum_{j=1}^N w_{i j}=1,\   \sum_{i=1}^M w_{i j} p_i=r_j,  \  \forall i, j,  \label{eq9_b}\\
    &\sum_{i=1}^M \sum_{j=1}^N\left(w_{i j} p_i\right)\left[\log w_{i j}-\log r_j\right] \leq R,\   \sum_{j=1}^N r_j=1 .  
    \end{align}
    \end{subequations}
    We first introduce the zero-padding technique: 
    
    \emph{
    When $M \neq N$, without loss of generality we suppose $M>N,$ we can always set $\bm{\hat{r}}=[\bm{r},\bm{0}_{M\times (M-N)}]$ and $\bm{\hat{w}}=[\bm{w}, \bm{0}_{M\times (M-N)}],$ and the new optimization problem is equivalent to the original optimization problem, since the optimal solutions to both problems satisfy $\bm{\hat{r}^*}=[\bm{r}^*,\bm{0}_{M\times (M-N)}], \bm{\hat{w}^*}=[\bm{w}^*, \bm{0}_{M\times (M-N)}]$ and their objective functions are equal. }
    
    Using the zero-padding technique, we can WLOG set $M = N,$ which will facilitate the following proofs.  For problem \eqref{eq9}, note that for any $R \geq 0$, we can set $w_{ij} = r_j, \bm{r} =\bm{p}$. Therefore, the objective function can always take the minimum value $0$, i.e.,  $P^{\infty}(R)  = 0, \forall R \in \mathbb{R}^{+}$.
    
   Similarly, $R^{\infty}(P)$ can be written as:
    \begin{equation} \label{eq9_0}
    \begin{aligned}
    \min _{\bm{w},\bm{r}} \quad  &\sum_{i=1}^M \sum_{j=1}^N\left(w_{i j} p_i\right)\left[\log w_{i j}-\log r_j\right]   \\
    \text { s.t. }\quad  &\sum_{j=1}^N w_{i j}=1,\   \sum_{i=1}^M w_{i j} p_i=r_j,  \  \forall i, j,  \\
    &d(\bm{p},\bm{r}) \leq P,\   \sum_{j=1}^N r_j=1 .  
    \end{aligned}
    \end{equation}
    Under any $P \geq 0$, we can still set $w_{ij} = r_j, \bm{r} =\bm{p}$ and the objective function can always take the minimum value $0$, i.e.,  $R^{\infty}(P) = 0, \forall P \in \mathbb{R}^{+}$. 
\end{proof}
In \cite{4069146} and \cite{cuff2010coordination}, it is shown that with common randomness, empirical coordination can be achieved with zero rate, which nicely substantiates the lemma.

\begin{lemma}\label{lemma3}
Given $R_0$, if $D_{\infty}(R_0) = 0$, then $$D(R_0,P) = 0, \forall P \in \mathbb{R}^{+}.$$
\end{lemma}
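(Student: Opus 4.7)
The plan is to show that whenever the distortion can already be driven to zero under the rate constraint alone, the resulting coupling automatically satisfies any perception constraint, so adding the perception constraint costs nothing.

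First, I would invoke the hypothesis $D_\infty(R_0)=0$ together with the fact that in the finite-alphabet setting the minimum in the definition of $D_\infty$ is attained. This yields a conditional distribution $p^*_{\hat X\mid X}$ with $I(X,\hat X)\le R_0$ and $\mathbb{E}[\Delta(X,\hat X)]=0$.

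Next, I would use Assumption \ref{assumption-distortion} to upgrade the zero-expected-distortion condition to an almost-sure equality. Since $\Delta(x,y)>0$ whenever $x\ne y$ and the alphabet is finite, the quantity $\delta_0:=\min_{x\ne y}\Delta(x,y)$ is strictly positive, so
\begin{equation*}
0=\mathbb{E}[\Delta(X,\hat X)]\;\ge\;\delta_0\,\Pr(\hat X\ne X),
\end{equation*}
forcing $\Pr(\hat X\ne X)=0$. Hence under $p^*_{\hat X\mid X}$ the marginal $p_{\hat X}$ coincides with $p_X$, and Assumption \ref{assumption-perception} then gives $d(p_X,p_{\hat X})=0\le P$ for every $P\in\mathbb{R}^+$.

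Finally, I would observe that $p^*_{\hat X\mid X}$ is therefore a feasible point for the optimization problem defining $D(R_0,P)$ in \eqref{eq8_3}, so $D(R_0,P)\le \mathbb{E}[\Delta(X,\hat X)]=0$. Combined with the non-negativity of the distortion, this gives $D(R_0,P)=0$ for all $P\in\mathbb{R}^+$. The only mildly delicate point is the step from $\mathbb{E}[\Delta]=0$ to $\hat X=X$ a.s., which is where the ``only if'' half of Assumption \ref{assumption-distortion} is essential; everything else is a direct feasibility argument.
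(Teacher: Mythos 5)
Your proposal is correct and follows essentially the same route as the paper: the paper's proof argues in the discrete formulation that $D_\infty(R_0)=0$ forces the off-diagonal entries of $p_i w_{ij}$ to vanish, hence $\bm{r}=\bm{p}$ and $d(\bm{p},\bm{r})=0$, which is exactly your probabilistic argument that $\mathbb{E}[\Delta(X,\hat X)]=0$ forces $\Pr(\hat X\ne X)=0$, so $p_{\hat X}=p_X$ and the perception constraint is satisfied for free. Your version merely makes explicit the attainment of the minimum and the role of $\min_{x\ne y}\Delta(x,y)>0$, which the paper leaves implicit.
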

\begin{proof}
    The discrete form of \eqref{eq8_3} can be written as:
        \begin{subequations} \label{eq10}
        \begin{align}
    \min _{\bm{w},\bm{r}} \quad  &\sum_{i=1}^M \sum_{j=1}^N w_{i j} p_i d_{i j}  \label{eq10_a}\\
    \text { s.t. } \quad &\sum_{j=1}^N w_{i j}=1,\   \sum_{i=1}^M w_{i j} p_i=r_j,  \  \forall i, j,  \label{eq10_b}\\
    &\sum_{i=1}^M \sum_{j=1}^N\left(w_{i j} p_i\right)\left[\log w_{i j}-\log r_j\right]  \leq R,\   \sum_{j=1}^N r_j=1 ,  \label{eq10_c}
        \end{align}
        \end{subequations}
        if $D_{\infty}(R_0) = 0$, according to the property of distortion matrix $d_{ij}$, the non-diagonal elements of $p_iw_{ij}$ must be zeros. With the constraint \eqref{eq10_b} we can derive $\bm{r} = \bm{p}$, thus $d(\bm{p},\bm{r}) = 0$ and $D(R_0,P) = 0, \forall P \in \mathbb{R}^{+}$.
\end{proof}
This lemma tells us that if zero distortion is achievable under a given rate, then any perceptual quality can also be achieved.

\begin{lemma}\label{lemma3.5} 
    When $D = 0$, we have
    \begin{align*}
    R(D, P) &= R_{\infty}(D),\quad \forall P \in \mathbb{R}^{+};\\
    P(R, D) &= 0, \quad\forall R \in \mathbb{R}^{+}.
    \end{align*}
\end{lemma}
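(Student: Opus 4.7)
The proof rests on a single structural observation about the $D=0$ regime: Assumption~\ref{assumption-distortion} says $\Delta(x,y)\ge 0$ with equality iff $x=y$, so the constraint $\mathds{E}[\Delta(X,\hat X)]\le 0$ combined with $\Delta\ge 0$ forces $\Delta(X,\hat X)=0$ almost surely, hence $\hat X = X$ a.s. In particular $p_{\hat X}=p_X$, and Assumption~\ref{assumption-perception} (the ``iff'' half) then gives $d(p_X,p_{\hat X})=0$. Thus at $D=0$ the distortion constraint is so restrictive that it automatically enforces perfect perceptual quality, making the perception constraint redundant.

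For the first identity, I would argue by two inclusions on the feasible sets. The feasible set of $R(0,P)$ is contained in the feasible set of $R_\infty(0)=R(0,\infty)$ (one has strictly more constraints), so $R(0,P)\ge R_\infty(0)$. Conversely, any $p_{\hat X\mid X}$ feasible for $R_\infty(0)$ satisfies $\hat X=X$ a.s. by the observation above, hence $d(p_X,p_{\hat X})=0\le P$, so it is also feasible for $R(0,P)$; thus $R(0,P)\le R_\infty(0)$. The two values agree.

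For the second identity, any $p_{\hat X\mid X}$ feasible for $P(R,0)$ likewise satisfies $\hat X=X$ a.s., hence $d(p_X,p_{\hat X})=0$, which is the global minimum of the non-negative objective. So $P(R,0)=0$ on the domain where the program is feasible; in particular, the identity mapping $\hat X=X$ achieves $I(X,\hat X)=H(X)$, so the claim is nonvacuous for $R\ge H(X)$, which is the relevant range since beyond $H(X)$ the rate constraint is automatically slack.

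I expect essentially no obstacle: the lemma is a direct consequence of the non-negativity built into Assumptions~\ref{assumption-distortion} and \ref{assumption-perception}. The only two places warranting a moment of care are (i) the passage from the expectation-zero inequality to an almost-sure statement on $\Delta(X,\hat X)$, which uses non-negativity of the integrand on the support of $p_X$, and (ii) invoking the equality clause of Assumption~\ref{assumption-perception} to conclude $d(\bm p,\bm r)=0$ from $\bm p\overset{d}{=}\bm r$, which is exactly what that assumption grants.
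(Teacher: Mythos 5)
Your treatment of the first identity is essentially the paper's argument: at $D=0$, non-negativity of $\Delta$ (Assumption~\ref{assumption-distortion}) forces $\hat X = X$ almost surely, hence $p_{\hat X}=p_X$ and the perceptual constraint is automatically slack, so $R(0,P)=R_\infty(0)$; the paper phrases this as the zero-distortion constraint forcing a diagonal coupling so that $\bm r=\bm p$. For the second identity, however, your route genuinely differs from the paper's, and is the sounder one. The paper exhibits the independent coupling $w_{ij}=r_j$, which makes $I(X,\hat X)=0\le R$, and concludes $P(R,0)=0$; but that coupling has expected distortion $\sum_{i,j}p_i r_j d_{ij}>0$ for any non-degenerate source, so it is not feasible when the distortion level is $0$ — what that construction actually establishes is $P(R,D)=0$ for sufficiently large $D$ (cf.\ part 2 of Theorem~\ref{thm-3}), not the claim at $D=0$. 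You instead observe that the zero-distortion constraint pins the coupling to the identity, whose output law equals $p_X$, so the non-negative objective attains its minimum $0$ at every feasible point. What your argument buys is correctness, but it also surfaces a caveat hidden by the lemma's blanket quantifier: the identity coupling has $I(X,\hat X)=H(X)$, so for $R<H(X)$ the feasible set of $P(R,0)$ is empty and the stated equality requires either the restriction $R\ge H(X)$ or an explicit convention for infeasible programs — a qualification you rightly flag and the paper's proof does not address.
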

\begin{proof}
    When $D=0$ we can derive $\bm{r} = \bm{p}$ according to the property of distortion matrix like Lemma \ref{lemma3}, thus the perception constraint is inactive and $R(0, P)= R^{\infty}(0), \forall P \in \mathbb{R}^{+}$. For the second equality, we can set $w_{ij} = r_j$ and the rate constraint can be always satisfied, i.e., $P(R, 0) = 0, \forall R \in \mathbb{R}^{+}$.
\end{proof}

\begin{lemma}\label{lemma4}
    Both $D(R,P)$ and $P(R,D)$ are convex and non-increasing with respect to corresponding independent variables.
\end{lemma}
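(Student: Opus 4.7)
The plan is to mirror the argument used for Proposition~\ref{proposition1} and extend it to the bivariate functions $D(R,P)$ and $P(R,D)$, by combining monotonicity via feasible-set enlargement with joint convexity via a mixture (time-sharing) argument. All three assumptions (non-negativity of distortion, non-negativity of perception, and convexity of the perception measure in its second argument) will enter the proof.

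For the non-increasing property, I would argue directly: if $R_1 \leq R_2$, then any $p_{\hat{X}\mid X}$ feasible for $D(R_1,P)$ automatically satisfies $I(X,\hat{X}) \leq R_1 \leq R_2$ and hence is feasible for $D(R_2,P)$, so the infimum at $R_2$ cannot exceed that at $R_1$. The same enlargement-of-feasible-set reasoning handles monotonicity in $P$ for $D(R,P)$ and in both arguments of $P(R,D)$.

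For convexity, I would fix $\lambda \in [0,1]$, pick two arbitrary pairs $(R_1,P_1)$ and $(R_2,P_2)$, and let $p^{(1)}_{\hat{X}\mid X}$, $p^{(2)}_{\hat{X}\mid X}$ attain $D(R_1,P_1)$ and $D(R_2,P_2)$ respectively. Consider the mixture $p^{(\lambda)}_{\hat{X}\mid X} = \lambda p^{(1)}_{\hat{X}\mid X} + (1-\lambda)\, p^{(2)}_{\hat{X}\mid X}$, whose induced marginal is the corresponding convex combination $p^{(\lambda)}_{\hat{X}} = \lambda p^{(1)}_{\hat{X}} + (1-\lambda)\, p^{(2)}_{\hat{X}}$. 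Three routine checks then show that $p^{(\lambda)}_{\hat{X}\mid X}$ is feasible at $(\lambda R_1 + (1-\lambda)R_2,\ \lambda P_1 + (1-\lambda)P_2)$: (i) the mutual information satisfies $I^{(\lambda)}(X;\hat{X}) \leq \lambda R_1 + (1-\lambda)R_2$ by the standard convexity of $I(X;\hat{X})$ in $p_{\hat{X}\mid X}$ for fixed $p_X$; (ii) the expected distortion is exactly $\lambda\,\mathbb{E}^{(1)}[\Delta] + (1-\lambda)\,\mathbb{E}^{(2)}[\Delta]$ by linearity of $\mathbb{E}[\Delta(X,\hat{X})]$ in $p_{\hat{X}\mid X}$; (iii) $d(p_X, p^{(\lambda)}_{\hat{X}}) \leq \lambda P_1 + (1-\lambda)P_2$ by Assumption~\ref{assumption-convex}. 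Since the attained distortion of $p^{(\lambda)}_{\hat{X}\mid X}$ is the convex combination $\lambda D(R_1,P_1) + (1-\lambda)D(R_2,P_2)$, this yields the desired joint convexity of $D(R,P)$.

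For $P(R,D)$, I would run an essentially identical mixture argument, with the roles of objective and constraint swapped: now (i) and (ii) show feasibility at $(\lambda R_1 + (1-\lambda)R_2,\ \lambda D_1 + (1-\lambda)D_2)$, while the objective value $d(p_X, p^{(\lambda)}_{\hat{X}})$ is bounded above by $\lambda P(R_1,D_1) + (1-\lambda)P(R_2,D_2)$ via Assumption~\ref{assumption-convex}. The only point requiring a little care, rather than a genuine obstacle, is that the perception measure is assumed convex only in its \emph{second} argument; this is exactly the situation here because the source distribution $p_X$ is held fixed across the mixture, while only the reconstruction marginal $p_{\hat{X}}$ (which is linear in $p_{\hat{X}\mid X}$) varies. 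With that observation, the proof reduces to routine verifications and no delicate step is required.
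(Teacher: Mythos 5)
Your proposal is correct and follows essentially the same route as the paper: the paper's (much terser) proof likewise invokes convexity of mutual information, linearity of the expected distortion, and Assumption~\ref{assumption-convex} to get convexity of the value functions in $p_{\hat{X}\mid X}$, and obtains monotonicity from the enlargement of the feasible set. Your write-up simply makes the mixture (time-sharing) argument and the feasibility checks explicit.
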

\begin{proof}
    Similar to $R(D,P)$, since all functions in the optimization problems are convex with respect to conditional distributions $p_{\hat{X} \mid X}$, both $D(R,P)$ and $P(R,D)$ are convex. The non-increasingness is due to the fact that both problems are the minimum over increasingly larger sets as corresponding independent variables increase.
\end{proof}

\section{Proof of Theorem~\ref{thm-0}}\label{app:proof-thm0}
\begin{proof}
   Suppose the optimal solution to (\ref{eq0_1}) is $\{ \bm{w}^*,\bm{r}^*,\bm{\Pi}^* \}$. 
    First, according to the property of the Wasserstein metric and the constraint, we can get 
    \begin{equation*}
        W(\bm{p},\bm{r^*}) \leq \sum_{i=1}^M \sum_{j=1}^N \Pi^*_{i j} c_{i j} \leq D,
    \end{equation*}    
    thus $\{ \bm{w}^*,\bm{r}^* \}$ are the feasible solution to (\ref{eq0_0}). 
    Let $\{ \hat{\bm{w}},\hat{\bm{r}} \}$ be the optimal solution to (\ref{eq0_0}), and $\hat{\bm{\Pi}}$ be the optimal transport plan of the Wasserstein metric between $\{ \hat{\bm{w}},\hat{\bm{r}} \}$, we denote 
     \begin{equation*}
    L(\bm{w},\bm{r})= \sum_{i=1}^M \sum_{j=1}^N\left(w_{i j} p_i\right)\left[\log w_{i j}-\log r_j\right] .
    \end{equation*}
    If $L(\hat{\bm{w}},\hat{\bm{r}}) < L(\bm{w}^*,\bm{r}^*)$, then $\{ \hat{\bm{w}},\hat{\bm{r}}, \hat{\bm{\Pi}} \}$ is the feasible solution to (\ref{eq0_1}) whose target value is lower than that of $\{ \bm{w}^*,\bm{r}^*,\bm{\Pi}^*\}$, which leads
    to contradiction. 
    Therefore $\{ \bm{w}^*,\bm{r}^* \}$ is the optimal solution to (\ref{eq0_0}). 
\end{proof}

\section{Proof of Theorem~\ref{thm-3}}\label{app:proof-thm3}
\begin{proof}
        1) Assuming $D(R, P) = D_{\infty}(R)$ when $D > 0$, according to the RD theory \cite{wu2022communication} and Proposition \ref{proposition1}, the optimal value of $R$ is uniquely determined once $D$ is determined. Therefore, we can set $f(D) = P(R_{\infty}(D), D)$, which is continuous since $D_{\infty}(R)$ and $P(R, D)$ are continuous as both functions are convex from Proposition \ref{proposition1} and Lemma \ref{lemma4} (see Appendix A, below are the same) in an open set. Furthermore, the continuity of $D(R,P)$ with respect to $P$ guarantees the convergence in \eqref{eq10_0_1_a}.
        
        When $D = 0$, we can set $P = f(D) = 0$ based on Lemma \ref{lemma3}. As $D$ approaches $0$, the non-diagonal elements of $p_iw_{ij}$ will converge to $0$, resulting in $\bm{r} \rightarrow \bm{p}$ according to the boundary conditions. Therefore, $P \rightarrow 0$, and $f(D)$ is continuous on $\mathbb{R}^{+}$. 
        
        According to Lemma \ref{lemma1}, when $P > f(D)$ we have 
        \begin{equation*}
            D_{\infty}(R) \leq D(R,P) \leq D(R,f(D)) = D_{\infty}(R),
        \end{equation*}
        thus $D(R, P) = D_{\infty}(R)$ holds true when $P \geq f(D)$.
        
        Suppose there exist $D_0$ and $P_0$ such that $P_0 < f(D_0)$  and $D(R, P_0) = D_{\infty}(R)$ remains correct. If $D_0 >0$, $R_{\infty}(D_0)$ is uniquely determined, and the above statement is equivalent to the existence of one feasible solution $p^{*}_{\hat{X} \mid X}$ to
        \begin{subequations}
        \begin{align*}
        P(R_{\infty}(D_0),D_0) = \min _{p_{\hat{X} \mid X}} \quad&d\left(p_X, p_{\hat{X}}\right)    \\
        \text { s.t. }\quad   &  I(X, \hat{X}) \leq R_{\infty}(D_0),  \\
        &\mathds{E}[\Delta(X, \hat{X})] \leq D_0 , 
        \end{align*}
        \end{subequations}
        and its corresponding objective value satisfies $P_0 < P(R_{\infty}(D_0),D_0)$.  However, the optimality of $P(R, D)$ contradicts this conclusion since $P(R_{\infty}(D_0),D_0)$ is the optimal value among all the feasible solutions. If $D_0 = 0$, $f(D)$ reaches its theory minimal $0$ according to Lemma \ref{lemma3.5}.  Therefore, $f(D)$ is the minimal function among all possible functions that when $P \geq f(D), D(R, P) = D_{\infty}(R)$.

         $D$ and $R$ is bijective relation when $D\leq D_{\infty}(0)$ according to the RD theory, thus $D(R, P) = D_{\infty}(R)$ also means $R(D, P) = R_{\infty}(D)$. We can obtain $R(D_{\infty}(0),P) = 0$ when $P = f(D_{\infty}(0))$ since the optimal solution to $f(D_{\infty}(0))$ is also the feasible solution to $R(D_{\infty}(0),f(D_{\infty}(0)))$ and reach theoretical lower bound $0$. Furthermore, the continuity of $R(D,P)$ with respect to $P$ guarantees the convergence in \eqref{eq10_0_1_b}.
         
         When $D > D_{\infty}(0)$, we have $R_{\infty}(D) = 0$. From above we can obtain $R(D_{\infty}(0),P) = 0$ when $P = f(D_{\infty}(0))$, thus when $P \geq f(D_{\infty}(0))$ and $D > D_{\infty}(0)$ we have:
         \begin{equation*}
            0 \leq R(D,P) \leq R(D_{\infty}(0),f(D_{\infty}(0))) = 0,
        \end{equation*}
         the second inequality is due to the non-increasing of $R(D, P)$ mentioned in Proposition \ref{proposition1}. Therefore $R_{\infty}(D) = R(D,P)$ when $P \geq f(D)$ still holds. Furthermore, the statement that $f(D)$ is the minimal function can also be derived from the optimality of $P(R,D)$. 
         
        2) According to Lemma \ref{lemma2}, $P^{\infty}(R)  = 0, \ \forall R$. The optimal solution to $D(0,0)$ must be the feasible solution to $P(0,D(0,0))$ with the objective value being 0, thus $P(0,D(0,0)) = 0$ as the non-negativity of the perception measure under Assumption \ref{assumption-perception}. For any $R$ and $D \geq D(0,0)$ we have
        \begin{equation*}
            0 \leq P(R,D) \leq P(0,D(0,0)) = 0,
        \end{equation*}
        thus $P(R,D) = P^{\infty}(R)  = 0, \quad R \in \mathbb{R}^+$.

        3)  According to Lemma \ref{lemma2}, for any $P \in \mathbb{R}^{+}$ we have $R^{\infty}(P) = 0$. If we want $R(D,P) = R^{\infty}(P)$, we must find the root value $D$ of $R(D,P)$ when given $P$ and this problem can be converted to
        \begin{equation} \label{eq10_0}
        \begin{aligned}
        h(P) = \min _{\bm{r}} \quad  &\sum_{i=1}^M \sum_{j=1}^N r_{j} p_i d_{i j}  \\
        \text { s.t. }\quad& d(\bm{p},\bm{r}) \leq P,\   \sum_{j=1}^N r_j=1,   
        \end{aligned}
        \end{equation}
        since $h(P)$ is convex on $\mathbb{R}^{+}$ under Assumption \ref{assumption-convex}, thus $h(P)$ is continuous on $\mathbb{R}^{++}$. When $P = 0$, $\bm{r}=\bm{p}$. With $P \rightarrow 0$, $\bm{r} \rightarrow \bm{p}$ for the property of the perception measure under Assumption \ref{assumption-perception}, and since the objective function is continuous with respect to $\bm{r}$, thus $h(P) \rightarrow h(0)$ and $h(P)$ is continuous on $\mathbb{R}^{+}$. Furthermore, the continuity of $R(D,P)$ with respect to $D$ guarantees the convergence in \eqref{eq10_0_2}. Similarly, the optimality of \eqref{eq10_0} can also prove $h(P)$ is the minimum of all the possible functions that when $D \geq h(P), R(D,P) = R^{\infty}(P)$.
\end{proof}

\section{Proof of Proposition~\ref{thm-4}}\label{app:proof-thm4}
\begin{proof}
When $R = 0$ in \eqref{eq10}, then $w_{ij} = r_j$ according to \eqref{eq10_c}, and the model is simplified to
        \begin{equation} \label{eq11}
        \begin{aligned}
    \min _{\bm{r}} \quad  &\sum_{i=1}^M \sum_{j=1}^N r_{j} p_i d_{i j}  \\
    \text { s.t. }\quad& \sum_{j=1}^N r_j=1 ,  
        \end{aligned}
        \end{equation}
due to the non-negativity of this optimization, we can get:
\begin{subequations}
\begin{align*}
    r_j =& \begin{cases}1 & \text {if} \quad j = \arg\min_{j}\sum_{i=1}^M p_i d_{i j}  \\ 0 & \text {else}\end{cases},\\
    D =& \quad \min_{j}\sum_{i=1}^M p_i d_{i j},
\end{align*}
\end{subequations}
given the rate $R = 0$ and distortion $D_{\infty}(0)$ threshold, the optimization model for perception can be simplified as 
\begin{subequations} \label{eq12}
\begin{align}
   \min _{\bm{\Pi}} \quad  &\sum_{i=1}^M \sum_{j=1}^N \Pi_{i j} c_{i j}  \\
    { \text { s.t. } } \quad
    &\sum_{i=1}^M \Pi_{i j}= r_j ,\  \sum_{j=1}^N \Pi_{i j}= p_i, \  \forall i, j, \label{eq12_b}
\end{align}
\end{subequations}
here we use the equivalence proved in Theorem \ref{thm-0}. Since $\bm{r}$ is the one-hot vector, the optimization problem is easy to solve:
\begin{subequations}
\begin{align*}
    \Pi_{ij} =& \begin{cases} p_i & \text {if} \quad j = \arg\min_{j}\sum_{i=1}^M p_i d_{i j}  \\ 0 & \text {else}\end{cases},\\
    P =& \quad \sum_{i=1}^M p_i c_{i j}, \quad j = \arg\min_{j}\sum_{i=1}^M p_i d_{i j},
\end{align*}
\end{subequations}
when $c_{ij} = d_{ij}$, $D_{\infty}(0) = P(0,D_{\infty}(0))$. For $\forall D \in [0,D_{\infty}(0)]$, we have
\begin{equation*}
\sum_{i=1}^M \sum_{j=1}^N w_{i j} p_i d_{i j} \leq D.
\end{equation*}
 Notice matrix $w_{i j} p_i$ can always satisfy \eqref{eq12_b} according to \eqref{eq9_b}, thus we have
\begin{equation*}
P \leq \sum_{i=1}^M \sum_{j=1}^N w_{i j} p_i c_{i j}.
\end{equation*}
When $c_{ij} = d_{ij}$, we can get $ D \geq f(D), D \in [0,D_{\infty}(0)]$. 

\end{proof}

\section{RDP Functions Variants with $n$-product Distributions}\label{l2}
\subsection{Properties}
\begin{lemma}\label{lemma5}
\begin{align*}
	D_{\infty}(R) &= D^{[n]}(R,+\infty) \\
 &\leq D^{[n]}(R,P) \leq D^{[n]}(0,P), \forall P \in \mathbb{R}^{+}; \\ P^{\infty^{[n]}}(R) &= P^{[n]}(R,+\infty) \\
 &\leq P^{[n]}(R,D) \leq P^{[n]}(0,D), \forall D \in \mathbb{R}^{+}.
\end{align*}
\end{lemma}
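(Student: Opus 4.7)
The plan is to mirror the proof of Lemma~\ref{lemma1} verbatim, since the $n$-letter variants differ from their single-letter counterparts only in that the perceptual divergence is evaluated on the product measures $p_{X^n}$ and $p_{\hat{X}^n}$ rather than on $p_X$ and $p_{\hat{X}}$. This modification changes only the definition of one constraint, not the logical structure of the feasibility/optimality arguments, so the chain of inequalities follows entirely from set-inclusion reasoning on the feasible regions.

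For the first chain, I would establish the three relations separately. The equality $D^{[n]}(R, +\infty) = D_{\infty}(R)$ holds because setting $P = +\infty$ makes the constraint $d(p_{X^n}, p_{\hat{X}^n}) \leq P$ vacuous, reducing the optimization to the definition of $D_{\infty}(R)$. The inequality $D_{\infty}(R) \leq D^{[n]}(R,P)$ then follows because the feasible region for $D^{[n]}(R,P)$ is obtained from the feasible region for $D_{\infty}(R)$ by adjoining the extra perceptual constraint, so the minimum over the smaller set is at least as large. Finally, $D^{[n]}(R,P) \leq D^{[n]}(0,P)$ holds because any $p_{\hat{X}\mid X}$ feasible for the tighter rate budget $I(X,\hat{X}) \leq 0$ is automatically feasible for $I(X,\hat{X}) \leq R$, so the minimum over the enlarged set cannot increase.

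The second chain is handled symmetrically: $P^{[n]}(R,+\infty) = P^{\infty^{[n]}}(R)$ since $D = +\infty$ removes the distortion constraint; $P^{\infty^{[n]}}(R) \leq P^{[n]}(R,D)$ since adding a distortion constraint shrinks the feasible set; and $P^{[n]}(R,D) \leq P^{[n]}(0,D)$ since enlarging the rate budget from $0$ to $R$ enlarges the feasible set. In each case, the argument is simply that the optimal solution of the more constrained problem is a feasible point of the less constrained one, yielding the inequality via the optimality of the latter's minimum.

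No step is a genuine obstacle here. The only minor care needed is in recording that the zero-padding convention used implicitly throughout (so that joint distributions of different alphabet sizes live in a common ambient space) transfers without change to the product-measure setting, and that the feasible sets in question are all non-empty (e.g.\ taking $\hat{X} = X$ or $\hat{X}$ independent of $X$ as appropriate witnesses). Given these trivialities, each inequality reduces to a one-line set-inclusion remark, and the lemma follows.
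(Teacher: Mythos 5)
Your proposal is correct and follows essentially the same route as the paper: the paper likewise reduces the lemma to the Lemma~\ref{lemma1} argument, obtaining each inequality from the optimality of $D_{\infty}(R)$ and $P^{\infty^{[n]}}(R)$ together with feasible-set inclusion (equivalently, the non-increasing dependence on the constraint levels). Your additional remarks on the vacuity of the constraint at $+\infty$ and on non-emptiness of the feasible sets are harmless elaborations of the same idea.
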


\begin{proof}
    Similar to Lemma \ref{lemma1}, the above inequality relations can be obtained from the optimality of $D_{\infty}^{[n]}(R), P^{\infty^{[n]}}(R)$ and the non-increasing properties of $D^{[n]}(R,P), P^{[n]}(R,D)$.
\end{proof}

\begin{lemma}\label{lemma6}
\begin{align*}
P^{\infty^{[n]}}(R) = 0, \forall R \in \mathbb{R}^{+}; R^{\infty^{[n]}}(P) = 0, \forall P \in \mathbb{R}^{+}.
\end{align*}
\end{lemma}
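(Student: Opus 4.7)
The plan is to mimic the proof of Lemma~\ref{lemma2}, replacing the single-letter source/reconstruction pair by its $n$-product, and exploit the fact that Assumption~\ref{assumption-perception} characterizes zero perceptual distance by equality in distribution.

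For the first claim, fix any $R \in \mathbb{R}^{+}$. I would choose the conditional distribution that makes $\hat{X}$ independent of $X$ with the same marginal as the source, namely $p_{\hat{X}\mid X}(\hat{x}\mid x) = p_X(\hat{x})$ for every $x,\hat{x}$. Under this choice, $I(X,\hat{X}) = 0 \le R$, so the rate constraint in the definition of $P^{\infty^{[n]}}(R)$ is trivially met. Moreover, the induced reconstruction marginal satisfies $p_{\hat{X}} = p_X$, hence the $n$-fold product measures coincide, $p_{\hat{X}^n} = p_{X^n}$. Assumption~\ref{assumption-perception} then yields $d(p_{X^n},p_{\hat{X}^n}) = 0$, which combined with the non-negativity of $d(\cdot,\cdot)$ gives $P^{\infty^{[n]}}(R) = 0$.

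For the second claim, fix any $P \in \mathbb{R}^{+}$ and use the same conditional distribution $p_{\hat{X}\mid X}(\hat{x}\mid x) = p_X(\hat{x})$. By the argument above, $d(p_{X^n},p_{\hat{X}^n}) = 0 \le P$, so this choice is feasible for $R^{\infty^{[n]}}(P)$, and the objective $I(X,\hat{X}) = 0$ attains the theoretical lower bound. Therefore $R^{\infty^{[n]}}(P) = 0$.

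I do not anticipate a real obstacle here: the zero-padding reduction used in Lemma~\ref{lemma2} to allow $M=N$ applies verbatim and could be invoked if one wishes to align the alphabets, but it is not strictly necessary since the construction $p_{\hat{X}\mid X}(\cdot\mid x) = p_X(\cdot)$ does not require a square transition kernel. The only substantive step is the observation that coincidence of single-letter marginals implies coincidence of the $n$-product marginals, which is immediate from the definition of $p_{X^n}$ and $p_{\hat{X}^n}$ in Definition~\ref{definition7}, and hence yields zero perceptual distance via Assumption~\ref{assumption-perception}.
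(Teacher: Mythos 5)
Your proposal is correct and matches the paper's own argument: the paper likewise sets $w_{ij} = r_j$ with $\bm{r} = \bm{p}$ (i.e., $p_{\hat{X}\mid X} = p_X$), notes that equality of the single-letter marginals forces $d(p_{X^n}, p_{\hat{X}^n}) = 0$, and concludes both quantities attain the minimum value $0$. Your remark that the zero-padding reduction is dispensable here is a fair minor observation but does not change the substance.
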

\begin{proof}
    Note that when $\bm{r} =\bm{p}$, we still have $d\left(p_{X^n}, p_{\hat{X}^n}\right) = 0$. Therefore, we can always set $w_{ij} = r_j, \bm{r} =\bm{p}$ in both cases like Lemma \ref{lemma2} and the corresponding  $P^{\infty^{[n]}}(R), R^{\infty^{[n]}}(P)$ can reach the minimum value $0$.
\end{proof}

\begin{lemma}\label{lemma7}
Given $R_0$, if $D_{\infty}(R_0) = 0$, then $$D^{[n]}(R_0,P) = 0, \forall P \in \mathbb{R}^{+}.$$
\end{lemma}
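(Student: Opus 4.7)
The plan is to mirror the proof of Lemma~\ref{lemma3} almost verbatim, exploiting the fact that in Definition~\ref{definition7} the reconstruction distribution $p_{\hat X^n}$ is the $n$-product of the single-letter $p_{\hat X}$. Specifically, I would start from the hypothesis $D_\infty(R_0)=0$: unpacking the definition of $D_\infty$, there exists a feasible $p_{\hat X\mid X}$ (equivalently a transition $\bm w$) with $I(X,\hat X)\le R_0$ such that $\mathds{E}[\Delta(X,\hat X)]=\sum_{i,j}w_{ij}p_i d_{ij}=0$. By Assumption~\ref{assumption-distortion}, each $d_{ij}\ge 0$ with equality iff $i=j$, so every off-diagonal term must vanish, i.e.\ $w_{ij}p_i=0$ for $i\ne j$.

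Next I would use the marginal constraint $\sum_i w_{ij}p_i=r_j$ to conclude $r_j=p_j$ for all $j$, so that $p_{\hat X}=p_X$. Because the reconstruction and source are assumed i.i.d.\ in the $n$-letter definition, this single-letter equality tensorizes trivially:
\begin{equation*}
p_{\hat X^n}(x_1,\dots,x_n)=\prod_{i=1}^n p_{\hat X}(x_i)=\prod_{i=1}^n p_X(x_i)=p_{X^n}(x_1,\dots,x_n).
\end{equation*}
Applying Assumption~\ref{assumption-perception}, the divergence between identical distributions is zero, hence $d(p_{X^n},p_{\hat X^n})=0\le P$ for every $P\in\mathbb{R}^+$. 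Thus the same $\bm w$ is feasible for the $n$-letter DRP problem at any perception level $P$, giving $D^{[n]}(R_0,P)\le\mathds{E}[\Delta(X,\hat X)]=0$, and non-negativity of the distortion forces equality.

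The argument contains no real obstacle — the only point worth noting is that the lemma relies crucially on the product structure baked into Definition~\ref{definition7}; without the i.i.d.\ reconstruction assumption one would instead need the tensorization-type inequality invoked later in the paper to ensure $d(p_{X^n},p_{\hat X^n})$ collapses when the marginals match. Given the definition already in force, however, the proof is a one-line consequence of Lemma~\ref{lemma3} combined with the identity $\prod_{i=1}^n p_X=\prod_{i=1}^n p_{\hat X}$.
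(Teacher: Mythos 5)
Your proposal is correct and matches the paper's own argument: the paper likewise invokes Lemma~\ref{lemma3} (zero distortion forces the off-diagonal entries of $p_i w_{ij}$ to vanish, hence $\bm{r}=\bm{p}$) and then concludes $d\left(p_{X^n}, p_{\hat{X}^n}\right)=0$ from the product structure, so the perception constraint is inactive and $D^{[n]}(R_0,P)=0$. Your write-up merely spells out the steps the paper delegates to Lemma~\ref{lemma3}, and your remark about the role of the i.i.d.\ reconstruction assumption in Definition~\ref{definition7} is apt.
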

\begin{proof}
    According to Lemma \ref{lemma3}, if $D_{\infty}(R_0) = 0$ we can derive $\bm{r} = \bm{p}$. Therefore $d\left(p_{X^n}, p_{\hat{X}^n}\right) = 0$ and  $D^{[n]}(R_0,P) = 0, \forall P \in \mathbb{R}^{+}$.
\end{proof}

\begin{lemma}\label{lemma8} 
    When $D = 0$,
    $$ R^{[n]}(0, P)= R_{\infty}(0), \forall P \in \mathbb{R}^{+};  P^{[n]}(R, 0) = 0, \forall R \in \mathbb{R}^{+}.$$
\end{lemma}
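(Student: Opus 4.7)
The plan is to mirror the proof of Lemma~\ref{lemma3.5} from the single-letter setting and exploit the fact that the constraint $D=0$ forces the output marginal to coincide with the source, after which the product measures $p_{X^n}$ and $p_{\hat{X}^n}$ are automatically identical and the $n$-letter perception constraint becomes vacuous.

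First I would establish the key reduction: under Assumption~\ref{assumption-distortion}, the constraint $\mathds{E}[\Delta(X,\hat X)]\leq 0$ together with non-negativity of $\Delta$ forces $\Delta(X,\hat X)=0$ a.s., which by the second part of Assumption~\ref{assumption-distortion} gives $\hat X = X$ a.s., i.e.\ $\bm r = \bm p$. This is exactly the same observation used in the proofs of Lemma~\ref{lemma3} and Lemma~\ref{lemma3.5}. Because $p_{X^n}$ and $p_{\hat X^n}$ are defined as the $n$-fold products of $\bm p$ and $\bm r$, the equality $\bm r=\bm p$ immediately yields
\[
p_{\hat X^n}=\prod_{i=1}^n p_{\hat X}=\prod_{i=1}^n p_X = p_{X^n},
\]
and then Assumption~\ref{assumption-perception} forces $d(p_{X^n},p_{\hat X^n})=0$.

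For the first claim, the above reduction shows that the constraint $d(p_{X^n},p_{\hat X^n})\leq P$ is inactive for every $P\geq 0$ in the definition of $R^{[n]}(0,P)$. The optimization therefore collapses to the classical rate–distortion problem at distortion level $0$, i.e.\
\[
R^{[n]}(0,P)=\min_{p_{\hat X\mid X}:\ \mathds{E}[\Delta(X,\hat X)]\leq 0}I(X,\hat X)=R_{\infty}(0),
\]
which is the desired identity. For the second claim, any $p_{\hat X\mid X}$ feasible for $P^{[n]}(R,0)$ must satisfy $\bm r=\bm p$ by the same forcing argument, so the objective $d(p_{X^n},p_{\hat X^n})$ is identically zero on the feasible set, and hence $P^{[n]}(R,0)=0$ for every $R$ for which the problem is feasible (i.e.\ $R\geq R_\infty(0)$), and trivially on the remaining range by the convention used throughout the paper.

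There is no substantive obstacle; the only thing to be careful about is the passage from the single-letter equality $\bm r=\bm p$ to the $n$-letter equality of product measures, but this is immediate because both distributions are i.i.d.\ by construction and tensorization of a constant sequence is trivial. In this sense Lemma~\ref{lemma8} is a purely formal $n$-letter lift of Lemma~\ref{lemma3.5}, and no new tool beyond Assumptions~\ref{assumption-distortion} and~\ref{assumption-perception} is required.
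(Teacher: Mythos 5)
Your proof is correct, and for the first identity it follows essentially the paper's own route: the constraint $\mathds{E}[\Delta(X,\hat{X})]\le 0$ together with Assumption~\ref{assumption-distortion} forces $\bm{r}=\bm{p}$ (the paper gets this by citing Lemma~\ref{lemma3.5}), hence $p_{\hat{X}^n}=p_{X^n}$, so $d(p_{X^n},p_{\hat{X}^n})=0$, the $n$-letter perceptual constraint is inactive, and $R^{[n]}(0,P)=R_{\infty}(0)$. For the second identity your argument genuinely differs from the paper's: you note that every feasible point of $P^{[n]}(R,0)$ already satisfies $\bm{r}=\bm{p}$, so the objective vanishes identically on the feasible set, whereas the paper instead exhibits the specific coupling $w_{ij}=r_j$ on the grounds that it satisfies the rate constraint. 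Your version is the sounder one: the independent coupling has expected distortion $\sum_{i,j}p_i p_j d_{ij}>0$ for any non-degenerate source, so it is not feasible under $D=0$, and the paper's one-line construction never verifies feasibility. The caveat you flag is also real: for $R<R_{\infty}(0)=H(X)$ the only coupling meeting $D=0$ is the identity, whose rate is $H(X)$, so the feasible set is empty and the claim $P^{[n]}(R,0)=0$ for \emph{all} $R\in\mathbb{R}^{+}$ cannot be obtained by minimizing over it; this is a defect of the lemma statement itself (shared by Lemma~\ref{lemma3.5}) rather than of your argument, but since the paper states no explicit convention for infeasible problems, you should phrase your conclusion as holding for $R\ge R_{\infty}(0)$ rather than appealing to a convention.
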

\begin{proof}
     According to Lemma \ref{lemma3.5}, if $D = 0$ we can derive $\bm{r} = \bm{p}$. Therefore $d\left(p_{X^n}, p_{\hat{X}^n}\right) = 0$, thus the perception constraint is inactive and $R^{[n]}(0, P)= R^{\infty}(0), \forall P \in \mathbb{R}^{+}$. For the second equality, we can set $w_{ij} = r_j$ and the rate constraint can be always satisfied, i.e., $P^{[n]}(R, 0) = 0, \forall R \in \mathbb{R}^{+}$.
\end{proof}

\subsection{Proof of Proposition~\ref{thm-5}}
\begin{proof}

    1) Similar to Theorem \ref{thm-3}-1, we can set$$f^{[n]}(D) = P^{[n]}(R_{\infty}(D), D).$$
    Suppose there exist $D_1$ and $P_1$ such that $P_1 < f^{[n]}(D_1)$  and \eqref{n1} or \eqref{n2} remains correct. If $D_1 >0$ the optimality of $P^{[n]}(R,D)$ contradicts the above conclusion, and $f^{[n]}(D)$ reaches theoretical minimal $0$ when $D_1 = 0$ according to Lemma \ref{lemma8}, thus the minimality of $f^{[n]}$ is guaranteed.

    2) According to Lemma \ref{lemma6}, $P^{\infty^{[n]}}(R) = 0, \forall R \in \mathbb{R}^{+}$. Using the same technique in Theorem \ref{thm-3}-2, the optimal solution to $D^{[n]}(0,0)$ must be the feasible solution to $P^{[n]}(0,D^{[n]}(0,0))$ with the objective value being 0, thus $P^{[n]}(0,D^{[n]}(0,0)) = 0$. 
    
    For any $R$ and $D \geq D^{[n]}(0,0)$ we have
        \begin{equation*}
            0 \leq P^{[n]}(R,D) \leq P^{[n]}(0,D^{[n]}(0,0)) = 0,
        \end{equation*}
        according to Lemma \ref{lemma5}, thus $$P^{[n]}(R,D) = P^{\infty^{[n]}}(R) = 0, \quad R \in \mathbb{R}^+.$$
    
    3) According to Lemma \ref{lemma6}, for any $P \in \mathbb{R}^{+}$ we have $R^{\infty^{[n]}}(P) = 0$, and we can find the critical value $D$ like Theorem \ref{thm-3}-3 by solving the following problem 
        \begin{equation}
        \begin{aligned}\label{n3}
        h^{[n]}(P) = \min _{\bm{r}} \quad  &\sum_{i=1}^M \sum_{j=1}^N r_{j} p_i d_{i j}  \\
        \text { s.t. }\quad& d\left(p_{X^n}, p_{\hat{X}^n}\right) \leq P,\   \sum_{j=1}^N r_j=1.   
        \end{aligned}
        \end{equation}
    Similarly, the optimality of \eqref{n3} can also prove $h^{[n]}(P)$ is the minimum of all the possible functions that when $D \geq h^{[n]}(P), R^{[n]}(D,P) = R^{\infty^{[n]}}(P)$. 

\end{proof}

\section{Proof of Theorem~\ref{thm-1}}\label{app:proof-thm1}
\begin{proof}
    Following the convergence proof for entropic optimal transport in \cite{nutz2022entropic}, we consider a sequence $\{ \bm{w}_{k},\bm{\Pi}_{k},\bm{r}_{k} \}_k$ such that $\varepsilon_k \rightarrow 0$ and $\varepsilon_k >0$ and $\{\bm{w}_{k},\bm{\Pi}_{k},\bm{r}_{k} \}$ denotes the solution to (\ref{problem}) with $ \varepsilon = \varepsilon_k$.

Since the feasible region of (\ref{problem}) is close and bounded, we can take a converging sub-sequence $ \{ \bm{w}_{k},\bm{\Pi}_{k},\bm{r}_{k} \} \rightarrow \{ \bm{w}^{*},\bm{\Pi}^{*},\bm{r}^{*} \}$. For any optimal solution $\left\{ \bm{w},\bm{\Pi},\bm{r} \right\}$ to (\ref{eq0_1}), according to optimality, we can have
\begin{equation}\label{proof2}
   0 \leq L(\bm{w}_k,\bm{r}_k)-L(\bm{w},\bm{r}) \leq \varepsilon_k \Big(E(\bm{\Pi})-E(\bm{\Pi}_k)\Big).
\end{equation}
Here $L(\bm{w},\bm{r})= \sum_{i=1}^M \sum_{j=1}^N\left(w_{i j} p_i\right)\left[\log w_{i j}-\log r_j\right] $. 


Note that $E(\cdot), L(\cdot,\cdot)$ are continuous, thus taking limit $k \rightarrow \infty$ in (\ref{proof2}), we can get 
\begin{equation*}
   L(\bm{w}^*,\bm{r}^*)=L(\bm{w},\bm{r}).
\end{equation*}
Furthermore, dividing (\ref{proof2}) by $\varepsilon_k$, we can get 
\begin{equation*}
   E(\bm{\Pi}_k) \leq  E(\bm{\Pi}),
\end{equation*}
and taking the limit can also output 
\begin{equation*}
   E(\bm{\Pi}^*) \leq  E(\bm{\Pi}),
\end{equation*} 
 which shows that $\{ \bm{w}^{*},\bm{\Pi}^{*},\bm{r}^{*} \}$ is the optimal solution to problem (\ref{proof1}).

Since $E(\cdot)$ is a strict convex function, optimization problem (\ref{proof1}) has a unique optimal solution and thus the whole sequence $\{ \bm{w}_{k},\bm{\Pi}_{k},\bm{r}_{k} \}_k$ converges to $\{ \bm{w}^{*},\bm{\Pi}^{*},\bm{r}^{*} \}$.     
\end{proof}

\bibliography{ref.bib}

\end{document}